\newtheorem{thm}{Theorem}
\newtheorem{prop}{Proposition}
\newtheorem{cor}[thm]{Corollary}
\def \R {\mathbb{R}}
\def \y {\mathbf{y}}
\def \x {\mathbf{x}}
\def \a {\mathbf{a}}
\def \L {\mathcal{L}}
\def \v {\mathbf{v}}
\def \c {\mathbf{c}}
\def \X {\mathcal{X}}
\def \S {\mathcal{S}}
\def \z {\mathbf{z}}
\def \xh {\widehat{\x}}
\def \Er {\mathcal{E}}
\def \u {\mathbf{u}}
\def \v {\mathbf{v}}
\def \w {\mathbf{w}}
\def \B {\mathcal{B}}
\def \R {\mathbb{R}}
\def \b {\mathbf{b}}
\def \A {\mathcal{A}}
\def \B {\mathcal{B}}
\def \e {\mathbf{e}}
\def \xt {\widetilde{\x}}
\def \K {\mathcal{K}}
\def \nh {\widehat{\nabla}}
\def \Sb {\overline{\S}}
\def \X {\mathcal{X}}
\def \Y {\mathcal{Y}}
\def \R {\mathbb{R}}
\def \x {\mathbf{x}}
\def \L {\mathcal{L}}
\def \S {\mathcal{S}}
\def \e {\mathbf{e}}
\def \B {\mathcal{B}}
\def \u {\mathbf{u}}
\def \z {\mathbf{z}}
\def \xh {\widehat{\x}}
\def \y {\mathbf{y}}
\def \Sb {\overline{\S}}
\def \w {\mathbf{w}}
\def \v {\mathbf{v}}
\def \sgn {\mbox{sign}}
\def \a {\mathbf{a}}
\def \Er {\mathcal{E}}
\def \K {\mathcal{K}}
\begin{document}

\title[A Simple Homotopy Proximal Mapping Algorithm for Compressive Sensing]{A Simple Homotopy Proximal Mapping Algorithm for Compressive Sensing}
 \coltauthor{\Name{Tianbao Yang}$^\dagger$ \Email{tianbao-yang@uiowa.edu}\\
 \Name{Lijun Zhang}$^\ddagger$\Email{zhanglj@lamda.nju.edu.cn}\\
 \Name{Rong Jin}$^\natural$ \Email{jinrong.jr@alibaba-inc.com}\\
 \Name{Shenghuo Zhu}$^\natural$ \Email{shenghuo@gmail.com}\\
 \Name{Zhi-Hua Zhou}$^\ddagger$ \Email{zhouzh@lamda.nju.edu.cn}\\
   \addr$^\dagger$Department of Computer Science \\
    The University of Iowa, Iowa City, IA 52242 \\
   \addr$^\ddagger$National Key Laboratory for Novel Software Technology\\
       Nanjing University, Nanjing 210023, China\\
          \addr$^\natural$Alibaba Group, Seattle, USA
}


\maketitle

\begin{abstract}
In this paper, we present a novel yet simple homotopy proximal mapping algorithm for compressive sensing. The algorithm adopts a simple proximal mapping of the $\ell_1$ norm  at each iteration  and gradually reduces the regularization parameter for the $\ell_1$ norm.  We prove a global linear convergence of the proposed homotopy proximal mapping (HPM) algorithm for solving compressive sensing under three different settings (i) sparse signal recovery under noiseless measurements, (ii)  sparse signal recovery under noisy measurements, and (iii) nearly-sparse signal recovery under sub-gaussian noisy measurements. In particular, we show that when the measurement matrix satisfies Restricted Isometric Properties (RIP), our theoretical results in settings  (i) and (ii)  almost recover the best condition on the RIP constants for compressive sensing. In addition, in setting (iii), our results for sparse signal recovery are better than the previous results, and furthermore  our analysis explicitly exhibits that more observations lead to not only more accurate recovery but also faster convergence. Compared with previous studies on linear convergence for sparse signal recovery, our algorithm is simple and efficient, and our results are better and provide more insights. Finally our empirical studies provide further support for the proposed homotopy proximal mapping algorithm and verify the theoretical results. 
\end{abstract}

\begin{keywords}
Compressive Sensing, Sparse Signal Recovery, Proximal Mapping, Linear Convergence
\end{keywords}

\section{Introduction}
The problem of sparse signal recovery is to reconstruct a sparse signal given a number of linear measurements of the signal.  The problem has been studied extensively  in the literature related to  compressive sensing~\citep{citeulike:2891154,Donoho06compressedsensing} and model selection~\citep{tibshirani96regression,Efron04leastangle,conf/isit/KyrillidisC12}. Numerous algorithms and  results have been developed  for sparse signal recovery under different settings and different conditions. Let $\x_*\in\R^d$ denote a target signal and $\y=U\x_* + \e\in\R^n$ denote $n<d$ measurements of $\x_*$, where $U\in\R^{n\times d}$ is a measurement matrix and $\e$ encodes potential noise in the observations.  In the earliest studies of compressive sensing~\citep{Candes:2005:DLP:2263433.2271950,citeulike:10486729,Chen:2001:ADB:588736.588850,journals/tit/DonohoT08},  the sparse signal recovery is cast into a linear programming problem: 
\begin{equation}\label{eqn:l1}
\begin{aligned}
\min_{\x\in\R^d}&\quad  \|\x\|_1\\
s.t.&\quad\|U\x-\y\|_2\leq \epsilon.
\end{aligned}
\end{equation}
It was shown that when the measurement matrix $U$ satisfies RIP with small RIP constants (c.f. the definition in \textbf{Definition 1}), the solution to~(\ref{eqn:l1}) denoted by $\bar\x$ can recover the sparse signal $\x_*$ up to  the noise level $\|\e\|_2$.  In their seminal work~\citep{Candes:2005:DLP:2263433.2271950}, Cand\`{e}s and Tao proved that when $\e=0$, i.e, there is no noise in the observations, $\x_*$ is the unique solution to~(\ref{eqn:l1}) provided that RIP constants of $U$ satisfy 
$\delta_s + \delta_{2s} + \delta_{3s}< 1$.
 The recovery result was  later generalized to a more general setting of nearly-sparse signal recovery with noisy observations, under the condition $\delta_{2s}\leq \sqrt{2}-1$ and $\epsilon\geq \|\e\|_2$~\citep{citeulike:10486729}.
Similar recovery results have been obtained  for the Dantzig selector~\citep{CT07}: 
\begin{equation}\label{eqn:dan}
\begin{aligned}
\min_{\x\in\R^d}&\quad  \|\x\|_1\\
s.t.&\quad\|U^{\top}(U\x-\y)\|_\infty\leq \lambda. 
\end{aligned}
\end{equation} by setting $\lambda\geq \|U^{\top}\e\|_\infty$. 
The sparse signal recovery is also closely related to the basis pursuit denoising problem (BPDN)~\citep{Chen:1998:ADB:305219.305222}, which aims to solve the following unconstrained $\ell_1$ regularized least-squares minimization problem: 
\begin{align}\label{eqn:lasso}
\min_{\x\in\R^d}\quad \underbrace{\frac{1}{2}\|U\x - \y\|_2^2}\limits_{f(\x)} + \lambda \|\x\|_1,
\end{align}
where $\lambda$ is a regularization parameter. Various properties of the optimal solution $\bar\x$ to~(\ref{eqn:lasso}) have been investigated~\citep{citeulike:2823189,journals/tit/Tropp06,Zhao:2006:MSC:1248547.1248637,zhang2008,citeulike:5426408,Bickel09simultaneousanalysis,GeeBue09,Wainwright:2009:STH:1669487.1669506}. In particular, it is known that under RIP for $U$, as long as $\lambda> c\|U^{\top}\e\|_\infty$, where $c$ is a universal constant, the optimal solution $\bar\x$ to~(\ref{eqn:lasso}) can recover a $s$-sparse signal $\x_*$ up to the noise level. 

In this paper, we study the problem of sparse signal recovery  by directly analyzing the convergence of a new optimization algorithm, namely the homotopy proximal mapping algorithm. {The algorithm adopts a proximal mapping for the $\ell_1$ norm regularization at each iteration:
\begin{align*}
    \x_{t+1} = \mathop{\arg\min}\limits_{\x \in \R^d} \frac{1}{2}\left\|\x - \left(\x_t- U^{\top}(U\x_t - \y)\right)\right\|_2^2 + \lambda_t\|\x\|_1,
\end{align*}
with a gradually reduced  regularization parameter $\lambda_t$. It is also known that the proximal mapping above is one proximal gradient step for solving~(\ref{eqn:lasso}) with $\lambda_t$, i.e., 
\begin{align*}
\x_{t+1}&= \arg\min_{\x\in\R^d}\;  \frac{1}{2}\|\x - \x_t\|_2^2 +\left[ f(\x_t) + (\x-\x_t)^{\top}\nabla f(\x_t) \right] + \lambda_t \|\x\|_1 \\
&=\arg\min_{\x\in\R^d}\frac{1}{2}\|\x- \x_t\|_2^2  + \x^{\top}U^{\top}(U\x_t - \y) + \lambda_t\|\x\|_1,
\end{align*}
where the terms in the square bracket can be considered as a Taylor expansion of $f(\x)$ around $\x_t$.}
We prove that under RIP conditions for $U$ the solution $\x_t$ will converge {\bf linearly} to a solution $\bar\x$ that recovers the sparse signal up to the noise level. In particular, we establish the convergence results in three settings.  In the following presentation, we let $\x^s$ denote the vector $\x$ with all but the s-largest entries (in magnitude) set to zero.
\begin{enumerate}
\item[\bf Setting I:] Sparse signal recovery under noiseless observations. For any $s$-sparse vector $\x_*$, if $\e=0$ and $U$ satisfies RIP such that 
\begin{align}\label{eqn:rip}
\gamma = \delta_s + \sqrt{2}\delta_{2s} + \delta_{3s}< 1,
\end{align}
then the sequence $\x_{t+1}$ can converge linearly to $\x_*$, e.g., 
\begin{align*}
\|\x_{t+1}-\x_*\|_2\leq \gamma^t\Delta_1,
\end{align*}
where $\Delta_1$ is an upper bound of $\|\x_1 - \x_*\|_2$. 
\item[\bf Setting II:] Sparse signal recovery under noisy observations. For any $s$-sparse vector $\x_*$, if $U$ satisfies RIP such that~(\ref{eqn:rip}) holds, then $\x_{t+1}$ can converge linearly to a solution $\bar\x$ that recovers $\x_*$ up to the noise level, e.g., 
\begin{align*}
\|\x_{t+1} - \x_*\|_2\leq \gamma^t\Delta_1 + \frac{1+\sqrt{2}}{1-\gamma}\sqrt{s}\|U^{\top}\e\|_\infty,
\end{align*}
where $\gamma$ is given in~(\ref{eqn:rip}).
\item[\bf Setting III:] Nearly sparse signal recovery under a sub-gaussian measurement matrix $U$. For a fixed vector $\x_*$, with a probability $1-2te^{-\tau}$, $\x_{t+1}$ can converge linearly to a solution $\bar\x$ that recovers $\x^s_*$ up to the noise level, e.g., 
\begin{align}\label{eqn:3s}
\|\x_{t+1} - \x^s_*\|_2\leq \gamma^t\Delta_1 + \frac{1+\sqrt{2}}{1-\gamma}\Lambda,
\end{align}
where $\gamma = (1+\sqrt{2})\eta<1$ with $\eta$ and $\Lambda$ satisfying 
\begin{align*}
\eta &\geq c\sqrt{\frac{\tau + s\log[d/s]}{n}},\\ 
\Lambda&{= }\sqrt{s}\|U^{\top}\e\|_\infty + c\sqrt\frac{\tau + s\log[d/s]}{n}\|\x_* - \x^s_*\|_2+ c\|(\x_* - \x^s_*)^s\|_2\nonumber,
\end{align*}
where $c$ is a universal constant. 
\end{enumerate}
In addition, in all three settings considered above we show that  $|\text{supp}(\x_t)\setminus \text{supp}(\x^s_*)|\leq s$, where $\text{supp}(\x)$ denotes the support set of $\x$, which implies that the number of non-zero elements beyond $\text{supp}(\x^s_*)$ is no more than $s$. 


{ We note that the results in Settings I and II of the proposed algorithm hinge on appropriately setting the sequence of regularization parameters $\lambda_t$ that depend on the RIP constants.  
In Setting III, we develop a more practical algorithm with no algorithmic dependence on the RIP constants. However, it is notable that  Setting III is under a weaker model where the result only holds for a fixed vector $\x_*$ unlike that Settings I and II apply to any sparse vector $\x_*$. Indeed, the result in Setting III holds for any Johnson-Lindenstrauss (JL) transforms that satisfy the JL lemma~\citep{citeulike:7030987} and the high probability is with respect to the randomness in the measurement matrix. }  In Section II,  we briefly discuss the above results in comparison  with previous work.

\section{Related Work}
We first compare our recovery results with state of the art results for (nearly) sparse signal recovery and then discuss about the optimization algorithms for sparse signal recovery. 

\paragraph{Sparse signal recovery with noiseless observations} \cite{Candes:2005:DLP:2263433.2271950} analyzed the recovery result for solving the $\ell_1$ minimization problem~(\ref{eqn:l1}) with noiseless observations $\y=U\x_*$,  and showed that for any $s$-sparse signal $\x_*$ when $U$ satisfies RIP~\footnote{Using the restricted orthogonality constant $\theta_{s,s'}$ defined in {\bf Definition 2}, a better condition on RIP constants can be established  in their result as well as in our analysis. We use the restricted isometry constant $\delta_{s}$ in order to compare with other works and benefit from previous methods that estimate $\delta_s$.} such that
\begin{align}~\label{eqn:opt}
\delta_s + \delta_{2s} + \delta_{3s}< 1, 
\end{align}
then the optimal solution to~(\ref{eqn:l1}) with $\epsilon=0$ is unique and is equal to $\x_*$. Comparing the inequality~(\ref{eqn:rip}) and~(\ref{eqn:opt}), our condition for exact recovery is close to the above condition. The exact recovery was also indicated in Cand\`{e}s' later work~\citep{citeulike:10486729} but with a slightly different RIP condition $\delta_{2s}\leq \sqrt{2}-1$. 
\paragraph{Sparse signal recovery with noisy observations} \citet{citeulike:10486729}  proved a recovery result for noisy observations. For any $s$-sparse vector $\x_*$, when $U$ satisfies RIP such that $\delta_{2s}\leq \sqrt{2} -1$, the optimal solution $\bar\x$ to~(\ref{eqn:l1}) by setting $\epsilon\geq \|\e\|_2$ obeys  
\[
\|\bar\x - \x_*\|_2 \leq C_2 \epsilon,
\]
where $C_2$ 
is a constant depending on $\delta_{2s}$. In comparison, our recovery error  in {\bf Setting II} depends on $\sqrt{s}\|U\e\|_\infty$ which could be smaller than $\|\e\|_2$ (e.g., when the entries in $U$ are sub-gaussian as stated in Proposition~\ref{lem:l2linf} in the appendix). 

\paragraph{Nearly sparse signal recovery with noisy observations} A more general recovery result was also established in~\citep{citeulike:10486729}. For any vector $\x_*$, when $U$ satisfies RIP such that $\delta_{2s}\leq \sqrt{2} -1$, the optimal solution $\bar\x$ to~(\ref{eqn:l1}) by setting $\epsilon\geq \|\e\|_2$ obeys  
\[
\|\bar\x - \x_*\|_2 \leq C_0\frac{\|\x_*-\x^s_*\|_1}{\sqrt{s}} +  C_2 \epsilon,
\]
where $C_0$ is a constant depending on $\delta_{2s}$. 
Similar results have also been developed for the Dantzig selector~(\ref{eqn:dan}).  Namely,  when the RIP constant $\delta_{2s}$ of $U$ satisfies $\delta_{2s}\leq \sqrt{2} -1$, the optimal solution $\bar\x$ to~(\ref{eqn:dan}) by setting $\lambda\geq \|U^{\top}\e\|_\infty$ satisfies 
\[
\|\bar\x - \x_*\|_2\leq C_0\frac{\|\x_* - \x_*^s\|_1}{\sqrt{s}} + C_3\sqrt{s}\lambda,
\]
where $C_3$ is a constant depending on $\delta_{2s}$. 
In contrast, in {\bf Setting III}, we established a better recovery result for a fixed signal $\x_*$. From~(\ref{eqn:3s}), we can see that the full recovery error $\|\bar\x - \x_*\|_2$ depends on the $\ell_2$ norm $\|\x_* - \x^s_*\|_2$ instead of $\|\x_* - \x^s_*\|_1/\sqrt{s}$. 

{It is worth mentioning that there exist a battery of studies on establishing sharper conditions on the RIP constants for exact or accurate recovery (see~\citep{DBLP:journals/tit/CaiZ14} and references therein). \cite{DBLP:journals/tit/CaiZ14} established sharpest condition on the RIP constant $\delta_{ts}$ for $t\geq 4/3$. In particular, they show that $\delta_{ts}\leq \sqrt{\frac{t-1}{t}}$ for $t\geq 4/3$ is sufficient for exact recovery under noiseless measurements and accurate recovery under noisy measurements. Nevertheless, we make no attempts to sharpen the condition on RIP constants but rather focus on the optimization algorithms and their recovery properties. }

\paragraph{Instance-level recovery result}{}
 {A weaker recovery result is that given a fixed signal $\x_*$, we can draw a random measurement matrix $U$ and with a high probability expect certain performance for the recovery of the signal $\x_*$. We refer to this type of guarantee as  instance-level recovery result~\citep{eldar2012compressed}. An advantage of the instance-level recovery is that  we can achieve a recovery error  in the form of  $\|\bar\x- \x^s_*\|_2\leq C\|\x_* - \x^s_*\|_2$ with $C$ being a constant and $\bar\x$ being the recovered signal. However, such a result  is impossible for any signal $\x_*$ without using a large number of observations, or in other words, such a result is only possible for any signal $\x_*$ when $n\geq cd$ for a constant $c>0$ (i.e., $n=\Omega(d)$).} 
 In~\citep{eldar2012compressed}, it was shown that when  the observations are free of noise and $U\in\R^{n\times d}$ is a sub-gaussian random matrix with  $n=O(s\log(d/s)/\delta_{2s}^2)$, then for a fixed signal $\x_*$ with a probability $1- 2\exp(-c_1\delta_{2s}^2n) - \exp(c_0n)$, the optimal solution $\bar\x$ to~(\ref{eqn:l1}) with  $\epsilon = 2\|\x_* -\x^s_*\|_2$ obeys 
\begin{align}
\|\bar\x  - \x^s_*\|_2&\leq 2C_2\|\x_* - \x^s_*\|_2,\label{eqn:ins}\\
\|\bar\x  - \x_*\|_2&\leq (2C_2+1)\|\x_* - \x^s_*\|_2,
\end{align}
where $C_2>4$ is a constant depending on $\delta_{2s}$. In contrast, our sparse signal recovery result for $\|\bar\x - \x^s_*\|_2$ in {\bf Setting III} (considering no noise) is much better than that in~(\ref{eqn:ins}) since the error is dominated by $O\left(\|(\x_* - \x^s_*)^s\|_2 + \sqrt{\frac{s\log[d/s]}{n}}\|\x_* - \x^s_*\|_2\right)$, where $\|(\x_* - \x^s_*)^s\|_2$ is the $\ell_2$ norm of the largest $s$ elements in $\x_* - \x^s_*$. To the best of our knowledge, this is the first such result in the literature. 

There are also many studies on analyzing the properties of the optimal solution $\bar\x$ to the $\ell_1$ regularized minimization problem in~(\ref{eqn:lasso})~\citep{citeulike:2823189,journals/tit/Tropp06,Zhao:2006:MSC:1248547.1248637,zhang2008,citeulike:5426408,Bickel09simultaneousanalysis,GeeBue09,Wainwright:2009:STH:1669487.1669506}. It is known that under RIP condition for $U$ and   $\lambda>c\|U^{\top}\e\|_\infty$ (for some universal constant $c$), we can obtain a recovery bound for any $s$-sparse signal $\x_*$
\[
\|\bar\x - \x_*\|_2\leq O(\sqrt{s}\lambda).
\]
In comparison, our analysis also exhibits that the final  value of $\lambda_t$ is $\Omega(\|U^{\top}\e\|_\infty)$ for sparse signal recovery. 
More literature on sparse signal recovery  can be found in~\citep{eldar2012compressed}.

\paragraph{Optimization algorithms} There have been extensive research on solving the $\ell_1$ minimization problems in~(\ref{eqn:l1}) and~(\ref{eqn:dan}), and the $\ell_1$ regularized minimization problem in~(\ref{eqn:lasso}). Various algorithms have been developed, including greedy algorithms~\citep{Davis2004,Tropp:2006:GGA:2263414.2271370,Needell:2010:CIS:1859204.1859229,Mallat:1993:MPT:2198030.2203996,4385788,Donoho:2012:SSU:2331864.2332326,Needell:2009:UUP:1530720.1530722}, interior-point methods~\citep{Chen:2001:ADB:588736.588850,citeulike:259101,kim2008interior}, proximal gradient methods~\citep{RePEc:cor:louvco:2007076,citeulike:6604666,Beck:2009:FIS:1658360.1658364,Becker:2011:NFA:2078698.2078702}, exact homotopy path-following methods~\citep{citeulike:7675119,Osborne99onthe,Efron04leastangle}, iterative hard-thresholding methods~\citep{Garg:2009:GDS:1553374.1553417,Blumensath_iterativehard,Foucart:2011:HTP:2340478.2340494,DBLP:journals/jmiv/KyrillidisC14}.  In~\citep{Garg:2009:GDS:1553374.1553417}, the authors gave a nice review of the convergence rates and their computational costs for different optimization algorithms.  Below, we focus on two classes of algorithms that are closely related to the proposed work, with one employing the iterative hard-thresholding and the other exploiting the iterative soft-thresholding. 

The hard-thresholding amounts to updating the solution based on the exact sparsification, i.e., 
\begin{align*}
\x_{t+1} = H_s\left(\x_t - \frac{1}{\gamma}U^{\top}(U\x_t - \y)\right),
\end{align*}
where $\gamma$ is a constant and $H_s(\x) = \x^s$ is the hard-thresholding operator that gives the best s-sparse approximation of a vector x, i.e., setting all elements in $\x$ to zeros except for the $s$ largest elements in magnitude. In~\citep{Blumensath_iterativehard}, the authors analyzed the iterative hard-thresholding algorithm with $\gamma=1$. They show that when $U$ satisfies RIP with $\delta_{3s}<1/\sqrt{32}$, the sequence $\{\x_t\}$ converges linearly to the best attainable solution up to a constant, i.e.,  
\begin{align}\label{eqn:hard}
\|\x_t-\x_*\|_2&\leq 2^{-t}\|\x_*\|_2+ 6\left[\|\x_* - \x^s_*\|_2 + \|\e\|_2 + \frac{1}{\sqrt{s}}\|\x_* - \x^s_*\|_1\right].
\end{align}
Similarly, \cite{Garg:2009:GDS:1553374.1553417} analyzed the iterative hard-thresholding with $\gamma=1+\delta_{2s}$ under the {\bf Settings I and II}, and showed the sequence $\{\x_t\}$ converges to a solution $\bar\x$ that recovers any $s$-sparse signal $\x_*$ signal up to the noise level, i.e., $\|\bar\x - \x_*\|_2\leq \frac{4}{1-\delta_{2s}}\|\e\|_2$ with a rate of $\left(\frac{8\delta_{2s}}{1-\delta_{2s}}\right)^t$ under the condition $\delta_{2s}\leq 1/3$. In contrast, the proposed algorithm in {\bf Settings I and II} only requires $\delta_s + \sqrt{2}\delta_{2s} + \delta_{3s}<1$, which is less restricted than $\delta_{3s}\leq 1/\sqrt{32}$ or $\delta_{2s}\leq 1/3$.  In {\bf Setting III}, we proved a recovery for a fixed signal $\x_*$ with a high probability. Comparing~(\ref{eqn:3s}) and~(\ref{eqn:hard}), we could see that the upper bound of the recovery of the proposed algorithm might be  tighter than that of the iterative hard-thresholding algorithm, since our bound depends on $\|\x_* - \x^s_*\|_2$ instead of $\|\x_* - \x^s_*\|_1/\sqrt{s}$. 

The iterative soft-thresholding algorithms (ISTA) are based on the proximal mapping of $\ell_1$ regularization for solving the $\ell_1$ regularized minimization problem~(\ref{eqn:lasso}), where the updates are given by  
\begin{align*}
\x_{t+1}=\arg\max_{\x\in\mathbb R^d} \x^{\top}\nabla_t + \frac{\gamma_t}{2}\|\x - \x_t\|_2^2 + \lambda\|\x\|_1,
\end{align*}
where $\nabla_t$  is set to the gradient of the square error w.r.t $\x_t$, and $\gamma_t$ is a step size. The proximal mapping springs from Nesterov's first order method for composite optimization~\citep{RePEc:cor:louvco:2007076}.  In~\citep{bredies-2008-linear,hale-2008-fixed}, the authors studied the soft-thresholding update with a constant step size and established local linear convergence rates as the iterates are close enough to the optimum. {There are several striking differences between ISTA and the proposed algorithms, including Algorithms~\ref{alg:1}, \ref{alg:2} and \ref{alg:3}. First, ISTA solves exactly the $\ell_1$ regularized least-squares  problem (i.e., the BPDN problem) with a {\it fixed} regularization parameter. The proposed algorithms  are proposed to directly reconstruct a sparse signal from noisy measurements. Second, if using the BPDN formulation to recover a sparse signal requires the algorithm needs to know the regularization parameter $\lambda$ such that $\lambda\geq \Omega(\|U^{\top}\e\|_\infty)$. However, the proposed Algorithm 3 does not need any knowledge about the order of $\|U^{\top}\e\|_\infty$. Instead, it uses the proximal mapping of an $\ell_1$ norm regularizer with a gradually decreasing regularization parameter $\lambda_t$ until the solution exceeds the target sparsity by two times. Third, the proposed algorithms enjoy global linear convergence, while ISTA  has only local linear convergence when the solution is close enough to the optimal solution. Last but not least, the presented algorithms and analysis provide a unified framework of optimization and recovery of sparse signal. In contrast,   ISTA is only an optimization algorithm which solely provides no guarantee on  the recovery of underlying true sparse signal.}

Recently, several algorithms  exhibit  global linear convergence for the BPDN problem.   \cite{DBLP:conf/nips/AgarwalNW10} studied an optimization problem~(\ref{eqn:lasso}) for statistical recovery. They used a  different update 
\begin{align}\label{eqn:cons}
\max_{\x\in\mathcal X} \x^{\top}\nabla_t + \frac{\gamma_u}{2}\|\x - \x_t\|_2^2 + \lambda\|\x\|_1,
\end{align}
where $\mathcal X = \{\x\in\R^d\:|\: \|\x\|_1\leq \rho\}$, and $\gamma_u$ is a parameter related to the restricted smoothness of the loss function. They proved a global linear convergence of the above update with $\rho =\Theta(\|\x_*\|_1)$ for finding a solution up to the statistical tolerance. \cite{DBLP:journals/siamjo/Xiao013} studied   a proximal-gradient homotopy gradient method for solving~(\ref{eqn:lasso}). They iteratively solve the problem~(\ref{eqn:lasso}) by the proximal gradient descent with a decreasing regularization parameter $\lambda$ and an increasing accuracy at each stage, and use the solution obtained at each stage to warm start the next stage. A global linear convergence was also established.

Although there are many parallels between this work and~\citep{DBLP:conf/nips/AgarwalNW10,DBLP:journals/siamjo/Xiao013},  there are big differences. (i) The proposed work is dedicated to sparse signal recovery, exhibiting the conditions in different settings under which the recovery is optimal. (ii) Different from~\citep{DBLP:conf/nips/AgarwalNW10} that updates the solution using the constrained proximal mapping in~(\ref{eqn:cons}), our algorithms  solve a simple proximal mapping of the $\ell_1$ norm regularization at each iteration. (iii) Different from~\citep{DBLP:journals/siamjo/Xiao013} that updates the solution using a stage-wise proximal gradient descent with pesky parameters, the proposed homotopy proximal mapping algorithm is much simpler as well as the analysis. (iv) Our algorithm and analysis provide better guarantees for the solutions. First, both the convergence rates and the recovery error of the proposed algorithms are directly related to the RIP constants (in {\bf Settings I and II}) or the number of observations (in {\bf Setting III}), implying that more observations lead to not only more accurate recovery  but also faster convergence. Second, our algorithm can guarantee that the support sets of the intermediate solutions  do not exceed the target support set by $s$, the target sparsity. In contrast, \citep{DBLP:conf/nips/AgarwalNW10} provides no explicit guarantee of sparsity bound for the intermediate solutions, and in~\citep{DBLP:journals/siamjo/Xiao013} the support sets of the intermediate solutions beyond the target support set could be much larger than $s$.

\section{Sparse Signal Recovery}
\subsection{Notations and Definitions}

Let $\x_* \in \R^d$ be a $s$-sparse high dimensional signal to be recovered, where the number of non-zero elements in $\x_*$ is $s$. We denote by $\S(\x)$ the support set for $\x$ that includes all the indices of the non-zero entries in $\x$, i.e.,
\begin{eqnarray}
    \S(\x) = \left\{i\in\{1,\ldots, d\}: [\x]_i \neq 0 \right\} \label{eqn:S},
\end{eqnarray}
where $[\x]_i$ denote the $i$-th element in $\x$.  Denote by $\S_1 \setminus \S_2$ a subset of $\S_1$ that contains all elements in $\S_1$ but not in $\S_2$. We also denote by $\Sb(\x) = \{1,\ldots, d\}\setminus \S(\x)$ the complementary set of $\S(\x)$. In particular, we use $\S_*, \Sb_*$ to denote the support set and its complementary set of $\x_*$.  Let $|\S|$ denote the cardinality of $\S$, and let $\x^s\in\R^d$ denote the vector $\x\in\R^d$ with all but the s-largest entries (in magnitude) set to zero.  Denote by $\|\x\|_2$, $\|\x\|_1$, $\|\x\|_\infty$ and $\|\x\|_0$ the $\ell_2$, $\ell_1$, $\ell_\infty$ and $\ell_0$ norm, respectively. 

Consider a vector $\x\in\R^d$ and a matrix $M\in\R^{n\times d}$.  Given a set $\S \subseteq \{1,\ldots, d\}$, we denote  by $[\x]_{\S}\in\R^{|\S|}$ the vector that only includes the entries of $\x$ in the subset $\S$, and by $M_{\S}$ a sub-matrix that only contains the columns of $M$ indexed by $\S$. Given two subsets $\A \subseteq \{1,\ldots, d\}$ and $\B \subseteq \{1,\ldots, d\}$, we denote by $[M]_{\A, \B}$ a sub-matrix that includes all the entries $(i,j)$ in matrix $M$ with $i \in \A$ and $j \in \B$. {$\|M\|_2$ denotes the spectral norm of a matrix $M$}.

Let $U\in\R^{n\times d}$ be a measurement matrix  and 
\begin{align}\label{eqn:obs}
\y=U\x_* + \e
\end{align} be the corresponding $n$ observations of the target signal $\x_*$. Similar to most work in compressive sensing,  we assume the measurement matrix $U$ satisfies the following restricted isometry properties (RIP) (with an overwhelming probability). 
\begin{definition}[$s$-restricted isometry constant]
Let  $\delta_s\geq 0$ be the smallest constant  such that for any subset $\mathcal T\subseteq\{1,\ldots, d\}$ with $|\mathcal T|\leq s$ and $\x\in\R^{|\mathcal T|}$,
\begin{align*}
(1-\delta_s)\|\x\|_2^2\leq \|U_{\mathcal T}\x\|_2^2\leq (1+\delta_s)\|\x\|_2^2
\end{align*}
where $U_{\mathcal T}$ denotes a sub-matrix of $U$ with column indices  from $\mathcal T$. 
\end{definition}
{\bf Remark 1:}  The RIP above implies that $U_{\mathcal T}^{\top}U_{\mathcal T}$ has all of its eigen-values in $[1-\delta_s, 1+\delta_s]$. As a result $\|(U_{\mathcal T}^{\top}U_{\mathcal T}-I)\x\|_2\leq \delta_s \|\x\|_2$. 

\begin{definition}[$s,s$-restricted orthogonality  constant]
Let  $\theta_{s,s}$ be the smallest constant  such that for any two disjoint  subsets $\mathcal T, \mathcal T'\subseteq\{1,\ldots, d\}$ with $|\mathcal T|\leq s$, $|\mathcal T'|\leq s$, $2s\leq d$, and for any $\x\in\R^{|\mathcal T|}$,  $\x'\in\R^{|\mathcal T'|}$, 
\begin{align*}
|\langle U_{\mathcal T}\x,  U_{\mathcal T'}\x'\rangle| \leq \theta_{s,s}\|\x\|_2\|\x'\|_2
\end{align*}
\end{definition}
{\bf Remark 2:} The above RIP implies that $\|U_{\mathcal T}^{\top}U_{\mathcal T'}\|_2\leq \theta_{s,s}$.  Although the results in the sequel  are stated using $\delta_s$ and $\theta_{s,s}$, we can easily obtained the results with only restricted isometry constants   by noting that $\theta_{s,s}\leq \delta_{2s}$~\citep{Candes:2005:DLP:2263433.2271950}. 

The above two constants are standard tools in the analysis  of compressive sensing.  It has been shown that several random measurement matrices including sub-gaussian measurement matrix, Fourier measurement matrix and incoherent measurement matrix satisfy the above RIP  with small $\delta_s$ and $\theta_{s,s}$~\citep{citeulike:2688127}.  

\subsection{Algorithm and Main Results}
To motivate our approach, we first consider the following optimization problem
\begin{eqnarray}
    \min\limits_{\x\in\mathbb R^d} \quad \L(\x) = \frac{1}{2}\|\x - \x_*\|_2^2 \label{eqn:1}.
\end{eqnarray}
Evidently, the optimal solution to (\ref{eqn:1}) is $\x_*$. We now consider a gradient descent method for optimizing the problem in (\ref{eqn:1}), leading to the following updating equation for $\x_t$
\begin{eqnarray}
    \x_{t+1} = \mathop{\arg\min}\limits_{\x\in\R^d} \frac{1}{2}\left\|\x -( \x_{t} -  \nabla \L(\x_t))\right\|_2^2 \label{eqn:update-1},
\end{eqnarray}
where $\nabla \L(\x_t) = \x_t - \x_*$. Since the problem in (\ref{eqn:1}) is both smooth and strongly convex, the above updating enjoys a linear convergence rate {with in fact only one step}, allowing an efficient reconstruction of $\x_*$.

However, the updating rule in (\ref{eqn:update-1}) can not be used because it requires knowing $\x_*$, the full information of the sparse signal to be recovered. In compressive sensing, the only available information about the target signal $\x_*$ is through a set of $n<d$ observations given in~(\ref{eqn:obs}). 
Using the observations, we construct an approximate gradient as
\begin{equation}
    \nh \L(\x_t)=U^{\top}(U\x_t - \y) = U^{\top}U(\x_t - \x_*) - U^{\top}\e \label{eqn:gradient}
\end{equation}
{As can be seen if $U^TU(\x_t - \x_*)$ is close to $\x_t - \x_*$  and $U^{\top}\e$  is not significantly large in magnitude, $\nh\L(\x_t)$ would provide an useful estimate  of $\nabla\L(\x_t)$.} To ensure this, we should assume certain restricted conditions  on $U$ and a small noise $\e$.

Next, we will use $\nh\L(\x_t)$ as an approximation of $\nabla \L(\x_t)$ and update the solution by performing the following  proximal  mapping: 
\begin{equation}\label{eqn:update-2}
\begin{aligned}
   & \x_{t+1}=\mathop{\arg\min}\limits_{\x \in \R^d} \lambda_t \|\x\|_1 + \langle\x - \x_t, \nh \L(\x_t) \rangle + \frac{1}{2}\|\x - \x_t\|_2^2 
    \end{aligned}
\end{equation}
where $\lambda_t > 0$ is a $\ell_1$ norm regularization parameter that decreases over iterations. 
The updating rule given in (\ref{eqn:update-2}) differs from (\ref{eqn:update-1}) in that (i) the true gradient $\nabla\L(\x_t)$ is replaced with an approximate gradient $\nh\L(\x_t)$ and (ii) an  $\ell_1$ regularization term $\lambda_t\|\x\|_1$ is added. With appropriate choice of $\lambda_t$, this regularization term will essentially remove the noise arising from the approximate  gradient and consequentially lead to a global linear  convergence rate. 

\begin{algorithm}[t] 
\caption{Homotopy Proximal Mapping (HPM) for Compressive Sensing}
\begin{algorithmic}[1]

\STATE {\bf Input:} The measurement  matrix $U \in \R^{n\times d}$, observations $\y = U\x_*+\e$, a sequence of regularization parameters $\lambda_1, \ldots, \lambda_T$

\STATE {\bf Initialize} $\x_1 = 0$.
\FOR{$t=1, \ldots, T$}
\STATE Compute $\displaystyle\xh_{t}= \x_t -U^{\top}(U\x_t - \y)$
    \STATE Update the solution $\displaystyle \x_{t+1} = sign(\xh_t)\left[|\xh_t| - \lambda_t\right]_+$
    \ENDFOR

\STATE {\bf Output} the final solution $\x_{T+1}$
\end{algorithmic}
\label{alg:1}
\end{algorithm}


To give the solution of $\x_{t+1}$ in a closed form, we write~(\ref{eqn:update-2}) as 
\begin{equation}\label{eqn:proxx}
    \x_{t+1} = \mathop{\arg\min}\limits_{\x \in \R^d} \frac{1}{2}\left\|\x - \left(\x_t- U^{\top}(U\x_t - \y)\right)\right\|_2^2 + \lambda_t\|\x\|_1 
\end{equation}  It is commonly known that  the value of $\x_{t+1}$ is given by~\citep{Beck:2009:FIS:1658360.1658364}
\begin{equation}\label{eqn:xt}
\x_{t+1} = sign(\xh_t)\left[|\xh_t| - \lambda_t\right]_+
\end{equation}
where $\xh_t$ denotes the intermediate solution before soft-thresholding given by 
\begin{align}\label{eqn:xh}
\xh_t = \x_t - U^{\top}(U\x_t - \y)
\end{align} and $[v]_+=\max(0, v)$.  
We present the detailed steps of the proposed algorithm in   Algorithm~\ref{alg:1}  for reconstructing the sparse signal given a set of noiseless/noisy observations.   To end this section, we present our main result in the following two theorems regarding the sparse signal recovery with noiseless observations and with noisy observations.  
\begin{thm} \label{thm:main1}
Let $\x_*\in\mathbb R^d$ be a $s$-sparse signal and $\y=U\x_*$ be a set of $n$ measurements of $\x_*$. Assume $U$ satisfies RIP such that 
\[
\gamma = \delta_s + \sqrt{2}\theta_{s,s} + \delta_{3s}<1. 
\] 
Let $\{\Delta_1,\ldots, \Delta_t\}$ be a sequence such that $\|\x_1 - \x_*\|_2\leq \Delta_1$, and 
\[
\Delta_{t+1} = (\delta_s + \sqrt{2}\theta_{s,s} + \delta_{3s})\Delta_{t}.
\]  If we run Algorithm 1 with $\displaystyle\lambda_t = \frac{\delta_s + \sqrt{2}\theta_{s,s}}{\sqrt{s}}\Delta_t$, 
then for all $t\geq 0$ \begin{itemize}
\item  $|\S_{t+1}\setminus\S_*|\leq s$ and, 
\item  $\|\x_{t+1} - \x_*\|_2 \leq \gamma^{t} \Delta_1$.
\end{itemize} 
\end{thm}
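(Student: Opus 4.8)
The plan is to prove both claims simultaneously by induction on $t$, maintaining the two invariants $|\S_{t+1}\setminus\S_*|\le s$ and $\|\x_{t+1}-\x_*\|_2\le\gamma^t\Delta_1$. The base case $t=0$ is the hypothesis $\|\x_1-\x_*\|_2\le\Delta_1$ (and $\x_1=0$ so $\S_1=\emptyset$). For the inductive step, fix $t$, assume the invariants hold for $\x_t$, and analyze the soft-thresholding update. Write $\r_t=\x_t-\x_*$ for the current error; by~(\ref{eqn:xh}) and the noiseless model $\y=U\x_*$ we have $\xh_t=\x_t-U^{\top}U\r_t=\x_*-( U^{\top}U-I)\r_t$. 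The key object is thus $\z_t := \xh_t-\x_* = -(U^{\top}U-I)\r_t$, an approximation error that the $\ell_1$ proximal step must absorb. I would first bound $\|\z_t\|_\infty$ restricted to relevant coordinate blocks using RIP/restricted orthogonality: the support of $\r_t$ lies in $\S_*\cup\S_t$, whose size is at most $2s$ by the inductive sparsity invariant, so the relevant submatrix blocks of $U^{\top}U-I$ have the structure controlled by $\delta_s,\delta_{2s}$ (via $\delta_{3s}$ when a third block of size $s$ is added) and $\theta_{s,s}$.

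Next I would analyze the thresholding coordinate-by-coordinate. On $\Sb_*$ (where $[\x_*]_i=0$), if $\lambda_t\ge\|[\z_t]_{\Sb_*}\|_\infty$ then $[\x_{t+1}]_i=0$ there — but we cannot quite afford the full $\ell_\infty$ bound, so instead I would show that on $\Sb_*$ only the coordinates where $|[\z_t]_i|>\lambda_t$ survive, and count them: the number of such coordinates is at most $s$ because, roughly, $\sum_{i\in\Sb_*}[\z_t]_i^2$ restricted to any index set is at most $(\delta_{?})^2\|\r_t\|_2^2$ and $\lambda_t$ is chosen of order $\Delta_t/\sqrt{s}\gtrsim \|\r_t\|_2/\sqrt s$, so more than $s$ surviving coordinates would force the restricted $\ell_2$ mass to exceed what RIP allows — this gives $|\S_{t+1}\setminus\S_*|\le s$. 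The standard trick here (as in IHT/CoSaMP analyses) is: let $T$ be the set of the $s$ largest-magnitude coordinates of $\z_t$ outside $\S_*$; RIP on the $(\le 3s)$-sized set $\S_*\cup\S_t\cup T$ bounds $\|[\z_t]_T\|_2$, and if any coordinate outside $\S_*\cup T$ survived thresholding it would be smaller in magnitude than every coordinate in $T$, forcing $\sqrt{s}\lambda_t\le\|[\z_t]_T\|_2$, a contradiction with the choice of $\lambda_t$ once we have the right RIP bound.

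For the error bound, use the non-expansiveness of soft-thresholding together with the optimality/closed form: on $\S_*\cup\S_{t+1}$ (a set of size $\le 3s$), $\x_{t+1}-\x_* = S_{\lambda_t}(\xh_t)-\x_*$, and I would split the error into (a) the part from coordinates in $\S_*$, where $|[\x_{t+1}]_i-[\x_*]_i|\le|[\z_t]_i|+\lambda_t$, and (b) the part from surviving coordinates outside $\S_*$, where $|[\x_{t+1}]_i|\le|[\z_t]_i|-\lambda_t\le|[\z_t]_i|$. Summing squares over the $\le 3s$ relevant coordinates and using $\sqrt{s}\lambda_t = (\delta_s+\sqrt2\theta_{s,s})\Delta_t$ together with the RIP bound $\|[\z_t]_{\S_*\cup\S_t}\|_2\le$ (combination of $\delta_s,\theta_{s,s}$ acting on $[\r_t]_{\S_*}$ and $[\r_t]_{\S_t\setminus\S_*}$) $\le(\delta_s+\sqrt2\theta_{s,s})\|\r_t\|_2$ plus a $\delta_{3s}$ term from the block $T$, yields $\|\x_{t+1}-\x_*\|_2\le(\delta_s+\sqrt2\theta_{s,s}+\delta_{3s})\Delta_t=\Delta_{t+1}=\gamma\Delta_t$, and unrolling gives $\gamma^t\Delta_1$.

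The main obstacle I anticipate is getting the RIP bookkeeping tight enough to produce exactly the constant $\delta_s+\sqrt2\theta_{s,s}+\delta_{3s}$ rather than a looser combination: one must carefully separate the action of $(U^{\top}U-I)$ on the diagonal block indexed by $\S_*$ (contributing $\delta_s$, since $|\S_*|\le s$), the cross block between $\S_*$ and $\S_t\setminus\S_*$ (contributing via $\theta_{s,s}$, with the $\sqrt2$ coming from combining two orthogonality terms or from a Cauchy–Schwarz/$\sqrt{a^2+b^2}\le$ step applied to the two error pieces), and the spillover onto the auxiliary $s$-set $T$ used for the sparsity count (contributing $\delta_{3s}$). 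Making the $\sqrt2$ appear cleanly — and simultaneously using the \emph{same} decomposition to serve both the sparsity invariant and the contraction bound — is the delicate part; everything else is routine manipulation of the soft-thresholding formula and RIP inequalities from Definitions~1 and~2 and Remarks~1 and~2.
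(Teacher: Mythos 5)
Your overall architecture is exactly the paper's: induction on the two invariants, a ``top-$s$ set plus contradiction'' argument to show at most $s$ coordinates of $[\xh_t]_{\Sb_*}$ survive the threshold (the paper's Proposition~\ref{thm:sabc} and Corollary~\ref{cor:xt}), and a contraction bound using $\delta_{3s}$ on the $3s$-sized set $\S_*\cup\S_t\cup\S_{t+1}$ (the paper's Proposition~\ref{thm:induction}). Your error step differs only cosmetically: you bound the soft-thresholding coordinate-wise ($|[\x_{t+1}]_i-[\x_*]_i|\le |[\z_t]_i|+\lambda_t$ on $\S_*$, $|[\x_{t+1}]_i|\le|[\z_t]_i|$ off $\S_*$) and sum squares, whereas the paper uses the variational optimality inequality of Lemma~\ref{lem:fund}; both give $\|\x_{t+1}-\x_*\|_2\le \sqrt{s}\lambda_t+\delta_{3s}\|\x_t-\x_*\|_2$ and hence the same constant.

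The one genuine soft spot is the step you yourself flag: the bound $\|[\z_t]_{T}\|_2\le(\delta_s+\sqrt2\theta_{s,s})\|\x_t-\x_*\|_2$ for an $s$-subset $T\subseteq\Sb_*$, which is what makes the prescribed $\lambda_t$ sufficient for the sparsity count. As written, your sketch would not produce it: (i) invoking ``RIP on the $3s$-sized set $\S_*\cup\S_t\cup T$'' gives only $\delta_{3s}\|\x_t-\x_*\|_2$, which need not be $\le\sqrt{s}\lambda_t=(\delta_s+\sqrt2\theta_{s,s})\Delta_t$, so the contradiction does not close; and (ii) your attribution of the constants is scrambled --- since $T$ is disjoint from $\S_*$, the block against $\S_*$ contributes only $\theta_{s,s}\|[\x_t-\x_*]_{\S_*}\|_2$ (an off-diagonal term, not $\delta_s$); the $\delta_s$ comes from the diagonal block on $\S_t\setminus\S_*$ applied to the part of $T$ that meets $\S_t\setminus\S_*$; the $\sqrt2$ comes from combining the two $\theta_{s,s}$ cross terms via $a+b\le\sqrt{2(a^2+b^2)}$; and $\delta_{3s}$ plays no role in the sparsity count at all (it appears only in the contraction step). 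The needed decomposition is: split $T=(T\cap(\S_t\setminus\S_*))\cup(T\setminus\S_t)$ and $\x_t-\x_*=[\x_t-\x_*]_{\S_*}+[\x_t-\x_*]_{\S_t\setminus\S_*}$, then apply $\theta_{s,s}$ to the two disjoint-support pairs and $\delta_s$ to the single overlapping diagonal block. Also note that if you use both $\sqrt{s}\lambda_t=(\delta_s+\sqrt2\theta_{s,s})\Delta_t$ \emph{and} a separate $(\delta_s+\sqrt2\theta_{s,s})\|\x_t-\x_*\|_2$ term inside the contraction estimate, as one reading of your third paragraph suggests, you double-count and land at $2(\delta_s+\sqrt2\theta_{s,s})+\delta_{3s}$; the correct bookkeeping has the contraction step contribute only $\sqrt{s}\lambda_t+\delta_{3s}\|\x_t-\x_*\|_2$.
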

{\bf Remark 3:} Similar to iterative hard-thresholding algorithms~\citep{Garg:2009:GDS:1553374.1553417,Blumensath_iterativehard}, Algorithm 1 also requires  knowledge of sparsity $s$ and RIP constants. 

\begin{thm} \label{thm:main2}
Let $\x_*\in\mathbb R^d$ be a $s$-sparse signal and $\y=U\x_*+\e$ be a set of $n$ noisy measurements of $\x_*$. Assume $U$ satisfies RIP such that 
\[
\gamma = \delta_s + \sqrt{2}\theta_{s,s} + \delta_{3s}<1. 
\] 
Let $\{\Delta_1,\ldots, \Delta_t\}$ be a sequence such that $\|\x_1 - \x_*\|_2\leq \Delta_1$, and 
\[
\Delta_{t+1} = \gamma\Delta_t +  (1+\sqrt{2})\sqrt{s}\|U^{\top}\e\|_\infty, \quad t\geq 1. 
\]
If we run Algorithm 1 with 
\[
\displaystyle\lambda_t = \frac{\delta_s + \sqrt{2}\theta_{s,s}}{\sqrt{s}}\Delta_t + \|U^{\top}\e\|_\infty,
\] 
 then for all $t\geq 0$ \begin{itemize}
\item  $|\S_{t+1}\setminus\S_*|\leq s$ and, 
\item  $\displaystyle\|\x_{t+1} - \x_*\|_2 \leq  \gamma^t\Delta_1 + \frac{1-\gamma^t}{1-\gamma}(1+\sqrt{2})\sqrt{s}\|U^{\top}\e\|_\infty$.
\end{itemize} 
\end{thm}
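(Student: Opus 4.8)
The plan is to prove the two bullets simultaneously by induction on $t$, maintaining the invariants ``$|\S_{t+1}\setminus\S_*|\le s$'' and ``$\|\x_{t+1}-\x_*\|_2\le\Delta_{t+1}$'', where $\{\Delta_t\}$ is the sequence in the statement; unrolling its recursion gives the closed form $\Delta_{t+1}=\gamma^t\Delta_1+\frac{1-\gamma^t}{1-\gamma}(1+\sqrt2)\sqrt s\|U^{\top}\e\|_\infty$, which is exactly the claimed error bound. The base case $t=0$ is immediate: $\x_1=0$ has empty support, so $|\S_1\setminus\S_*|=0\le s$, and $\|\x_1-\x_*\|_2=\|\x_*\|_2\le\Delta_1$. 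For the inductive step, set $\d_t:=\x_t-\x_*$; by the induction hypothesis $\d_t$ is supported on $\S_t\cup\S_*$ with $|\S_t\cup\S_*|\le 2s$, and a direct computation using $\y=U\x_*+\e$ gives the residual identity
\[
\xh_t-\x_*=(I-U^{\top}U)\d_t+U^{\top}\e .
\]
Everything then reduces to controlling this residual on the relevant coordinate blocks via the RIP constants, and to tracking what soft-thresholding does to it.

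\emph{Support claim.} For any $i\notin\S_*$ we have $[\x_*]_i=0$, so $[\xh_t]_i=[(I-U^{\top}U)\d_t]_i+[U^{\top}\e]_i$, and $i\in\S_{t+1}$ forces $|[\xh_t]_i|>\lambda_t$, hence $|[(I-U^{\top}U)\d_t]_i|>\lambda_t-\|U^{\top}\e\|_\infty=\frac{\delta_s+\sqrt2\theta_{s,s}}{\sqrt s}\Delta_t$. If $|\S_{t+1}\setminus\S_*|>s$, pick any $T\subseteq\S_{t+1}\setminus\S_*$ with $|T|=s$; then $\|[(I-U^{\top}U)\d_t]_T\|_2>(\delta_s+\sqrt2\theta_{s,s})\Delta_t$. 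On the other hand, splitting $\d_t$ into its part on $\S_*$ and its part on $\S_t\setminus\S_*$ (each $s$-sparse) and splitting $T$ according to its intersection with $\S_t\setminus\S_*$, the restricted isometry constant $\delta_s$ (Remark 1) governs the one ``diagonal'' block and the restricted orthogonality constant $\theta_{s,s}$ (Remark 2) governs all the ``off-diagonal'' interactions, which — assembled carefully — yields $\|[(I-U^{\top}U)\d_t]_T\|_2\le(\delta_s+\sqrt2\theta_{s,s})\|\d_t\|_2\le(\delta_s+\sqrt2\theta_{s,s})\Delta_t$. This contradiction gives $|\S_{t+1}\setminus\S_*|\le s$.

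\emph{Error bound.} Since $\x_{t+1}$ minimizes $\frac12\|\x-\xh_t\|_2^2+\lambda_t\|\x\|_1$, its optimality condition $\xh_t-\x_{t+1}\in\lambda_t\partial\|\x_{t+1}\|_1$ together with $\langle\x_{t+1},\xh_t-\x_{t+1}\rangle=\lambda_t\|\x_{t+1}\|_1$ and $\langle\x_*,\xh_t-\x_{t+1}\rangle\le\lambda_t\|\x_*\|_1$ gives
\[
\|\x_{t+1}-\x_*\|_2^2\le\langle\x_{t+1}-\x_*,\,\xh_t-\x_*\rangle+\lambda_t\big(\|\x_*\|_1-\|\x_{t+1}\|_1\big).
\]
Because $\x_*$ is supported on $\S_*$ with $|\S_*|=s$, one has $\|\x_*\|_1-\|\x_{t+1}\|_1\le\|[\x_*-\x_{t+1}]_{\S_*}\|_1\le\sqrt s\|\x_{t+1}-\x_*\|_2$, and since $\x_{t+1}-\x_*$ is supported on $\S_*\cup\S_{t+1}$ the inner product is at most $\|\x_{t+1}-\x_*\|_2\cdot\|[\xh_t-\x_*]_{\S_*\cup\S_{t+1}}\|_2$; dividing through by $\|\x_{t+1}-\x_*\|_2$ yields $\|\x_{t+1}-\x_*\|_2\le\sqrt s\lambda_t+\|[\xh_t-\x_*]_{\S_*\cup\S_{t+1}}\|_2$. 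Now $|\S_*\cup\S_{t+1}|\le 2s$ by the support claim, and $|\S_*\cup\S_t\cup\S_{t+1}|\le 3s$ (support claim plus induction hypothesis), so $\|[(I-U^{\top}U)\d_t]_{\S_*\cup\S_{t+1}}\|_2\le\delta_{3s}\|\d_t\|_2$ by Remark 1 ($\d_t$ lives inside that $3s$-set), while $\|[U^{\top}\e]_{\S_*\cup\S_{t+1}}\|_2\le\sqrt{2s}\|U^{\top}\e\|_\infty=\sqrt2\sqrt s\|U^{\top}\e\|_\infty$. Substituting $\lambda_t=\frac{\delta_s+\sqrt2\theta_{s,s}}{\sqrt s}\Delta_t+\|U^{\top}\e\|_\infty$ and $\|\d_t\|_2\le\Delta_t$ collapses the bound to $(\delta_s+\sqrt2\theta_{s,s}+\delta_{3s})\Delta_t+(1+\sqrt2)\sqrt s\|U^{\top}\e\|_\infty=\Delta_{t+1}$, closing the induction.

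\emph{Main obstacle.} The delicate point is the RIP estimate driving the support claim: one must show $\|[(I-U^{\top}U)\d_t]_T\|_2\le(\delta_s+\sqrt2\theta_{s,s})\|\d_t\|_2$ for \emph{every} size-$s$ set $T$ disjoint from $\S_*$, with precisely the constant that was baked into $\lambda_t$. A crude application of the triangle inequality and Cauchy--Schwarz to the block decomposition tends to produce worse constants (e.g.\ of the form $\sqrt{\delta_s^2+3\theta_{s,s}^2}$ or $\sqrt2\,\delta_{2s}$), so the coupling between the $\S_*$-block and the $(\S_t\setminus\S_*)$-block of $\d_t$ must be exploited efficiently — and this same estimate is what powers the noiseless Theorem~\ref{thm:main1}, which is just the case $\e=0$. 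By contrast, the remaining ingredients — unrolling the $\Delta_t$ recursion, the proximal-map inequality, and the $\ell_\infty$--$\ell_2$ passage for $U^{\top}\e$ — are routine.
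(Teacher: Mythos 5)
Your proposal is correct and follows essentially the same route as the paper: the support bound via the $(\delta_s+\sqrt{2}\theta_{s,s})/\sqrt{s}$ per-coordinate threshold on $[(I-U^{\top}U)(\x_t-\x_*)]_{\Sb_*}$ (the paper's Proposition~\ref{thm:sabc} and Corollary~\ref{cor:xt2}), the proximal optimality inequality (Lemma~\ref{lem:fund}), and the $\delta_{3s}$ plus $\sqrt{2s}\|U^{\top}\e\|_\infty$ estimates on the $3s$- and $2s$-supported blocks (Proposition~\ref{thm:induction2}), closed by the same induction. The block decomposition you sketch for the "main obstacle" is exactly how the paper assembles the constant $\delta_s+\sqrt{2}\theta_{s,s}$ (diagonal $\delta_s$ block on $\S_t\setminus\S_*$, two $\theta_{s,s}$ off-diagonal blocks, then $a+b\le\sqrt{2(a^2+b^2)}$).
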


{\bf Remark 4:} Similar to solving Dantzig selector~(\ref{eqn:dan}) and the $\ell_1$ regularized problem~(\ref{eqn:lasso}) for sparse signal recovery that requires $\lambda\geq c\|U^{\top}\e\|_\infty$, the regularization parameters in our algorithm are also larger than $\|U^{\top}\e\|_\infty$ and eventually $\lambda_t\geq c\|U^{\top}\e\|_\infty$, where $c$ depends RIP constants.

{\bf Remark 5:} While Theorems~\ref{thm:main1} and~\ref{thm:main2} are theoretically interesting, the value of $\lambda_t$ depends on the RIP constants. In Section~\ref{sec:nearly}, we present more practical algorithms for (nearly) sparse signal recovery with a sub-gaussian measurement matrix. 
\subsection{Proof of Theorem~\ref{thm:main1}}
We first prove the following proposition  regarding the magnitude of elements in $[\xh_t]_{\Sb_*}$. 
\begin{prop} \label{thm:sabc}
Let $\S_t$ be the support set of $\x_t$ ({the $t$-th iterate of Algorithm~\ref{alg:1}}) and $\S_*$ be the support set of $\x_*$. Define $\S^c_t = \S_t\cup \S_*$, $\S^a_{t} = \S^c_t \setminus \S_*$ and $\xt_t  = \x_t - U^{\top}U(\x_t - \x_*) $.  If we assume $|\S_t\setminus \S_*|\leq s$, then there are at most $s$ entries of $[\xt_t]_{\Sb_*}$ with magnitude larger than
$
      \displaystyle   \frac{\delta_{s}+\sqrt{2}\theta_{s,s} }{\sqrt{s}}\|\x_t - \x_*\|_2
$. 
\end{prop}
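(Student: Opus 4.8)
The plan is to reduce the claim to an upper bound on $\|[\xt_t]_R\|_2$ over coordinate sets $R\subseteq\Sb_*$ with $|R|\le s$, and then finish by a short counting argument. Write $\h:=\x_t-\x_*$. The first observation is the identity $\xt_t=\x_t-U^{\top}U\h=\x_*+(I-U^{\top}U)\h$, so that, since $[\x_*]_i=0$ for every $i\in\Sb_*$, we have $[\xt_t]_i=[(I-U^{\top}U)\h]_i$ for \emph{all} $i\in\Sb_*$; this holds even on the spurious part $\S^a_t=\S_t\setminus\S_*$, because there $[\x_t]_i=[\h]_i$. Moreover $\h$ is supported on $\S^c_t=\S_*\sqcup\S^a_t$ with $|\S_*|\le s$ and $|\S^a_t|=|\S_t\setminus\S_*|\le s$ by hypothesis, hence $\|\h\|_2^2=\|[\h]_{\S_*}\|_2^2+\|[\h]_{\S^a_t}\|_2^2$.

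The core estimate to establish is that for every $R\subseteq\Sb_*$ with $|R|\le s$,
\[
\|[(I-U^{\top}U)\h]_R\|_2\;\le\;\sqrt{\delta_s^2+2\theta_{s,s}^2}\;\|\h\|_2\;\le\;(\delta_s+\sqrt2\,\theta_{s,s})\,\|\h\|_2 .
\]
To prove it, split $\h=[\h]_{\S_*}+[\h]_{\S^a_t}$ and $R=R_1\sqcup R_2$ with $R_1=R\cap\S^a_t$ and $R_2=R\setminus\S^a_t$ (the latter disjoint from all of $\S^c_t$). For the $[\h]_{\S^a_t}$-contribution of $(I-U^{\top}U)\h$: restricted to $R_1\subseteq\S^a_t$ it is the restriction to $R_1$ of $(I-U_{\S^a_t}^{\top}U_{\S^a_t})[\h]_{\S^a_t}$, of norm $\le\delta_s\|[\h]_{\S^a_t}\|_2$ by the RIP (Remark 1); restricted to $R_2$ it equals $-U_{R_2}^{\top}U_{\S^a_t}[\h]_{\S^a_t}$, of norm $\le\theta_{s,s}\|[\h]_{\S^a_t}\|_2$ by restricted orthogonality (Remark 2, since $R_2$ and $\S^a_t$ are disjoint and of size $\le s$); since $R_1,R_2$ are disjoint these combine to $\le\sqrt{\delta_s^2+\theta_{s,s}^2}\,\|[\h]_{\S^a_t}\|_2$. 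For the $[\h]_{\S_*}$-contribution: on all of $R$ it equals $-U_R^{\top}U_{\S_*}[\h]_{\S_*}$, of norm $\le\theta_{s,s}\|[\h]_{\S_*}\|_2$ (again $R$ and $\S_*$ disjoint of size $\le s$). Adding, applying Cauchy--Schwarz to $\big(\sqrt{\delta_s^2+\theta_{s,s}^2},\,\theta_{s,s}\big)$ against $\big(\|[\h]_{\S^a_t}\|_2,\,\|[\h]_{\S_*}\|_2\big)$, and using $\|[\h]_{\S^a_t}\|_2^2+\|[\h]_{\S_*}\|_2^2=\|\h\|_2^2$ yields the first inequality; the second is just $\delta_s^2+2\theta_{s,s}^2\le(\delta_s+\sqrt2\,\theta_{s,s})^2$.

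To finish, set $\mu=\frac{\delta_s+\sqrt2\,\theta_{s,s}}{\sqrt s}\,\|\h\|_2$ and $D=\{i\in\Sb_*:|[\xt_t]_i|>\mu\}$, and suppose toward a contradiction that $|D|\ge s+1$. Choosing any $R\subseteq D$ with $|R|=s$ gives $\|[\xt_t]_R\|_2^2>s\mu^2$, so $\|[\xt_t]_R\|_2>(\delta_s+\sqrt2\,\theta_{s,s})\|\h\|_2$; but by the identity $[\xt_t]_R=[(I-U^{\top}U)\h]_R$ and the core estimate, $\|[\xt_t]_R\|_2\le(\delta_s+\sqrt2\,\theta_{s,s})\|\h\|_2$ --- a contradiction. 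Hence $|D|\le s$, which is exactly the claim. I expect the delicate point to be the bookkeeping inside the core estimate: maintaining the identity $[\xt_t]_i=[(I-U^{\top}U)\h]_i$ uniformly over $\Sb_*$ (including on $\S^a_t$), and splitting $R$ so that the overlap $R\cap\S^a_t$ is absorbed through the size-$s$ isometry constant $\delta_s$ rather than inflating to $\delta_{2s}$; everything else is a routine application of RIP, restricted orthogonality, and Cauchy--Schwarz.
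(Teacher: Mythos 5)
Your proof is correct and follows essentially the same route as the paper's: both fix an arbitrary size-$s$ subset of $\Sb_*$, split it according to its intersection with $\S^a_t$, decompose $\x_t-\x_*$ into its $\S_*$ and $\S^a_t$ pieces, bound the diagonal block via $\delta_s$ (Remark~1) and the off-diagonal blocks via $\theta_{s,s}$ (Remark~2), and conclude with the same pigeonhole argument. The only difference is bookkeeping: you aggregate the three terms by orthogonality of $R_1,R_2$ plus one Cauchy--Schwarz, yielding the marginally sharper intermediate constant $\sqrt{\delta_s^2+2\theta_{s,s}^2}$ before relaxing to $\delta_s+\sqrt{2}\theta_{s,s}$, whereas the paper uses the triangle inequality and $a+b\leq\sqrt{2(a^2+b^2)}$ on the two $\theta_{s,s}$ terms.
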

\begin{proof}
For any subset $\S' \subset \Sb_*$ of size $s$, let $\S'_1=\S'\cap\S_t^a$ and $\S'_2=\S'\setminus\S_t^a$. 
First, we have
\begin{equation*}
\begin{aligned}
   &\|[\xt_t]_{\S'}\|_2= \left\|[U^{\top}U(\x_t-\x_*) ]_{\S'}- [\x_t]_{\S'}\right\|_2=  \left\|U^{\top}_{\S'}U_{\S_*}\left[\x_t - \x_*\right]_{\S_*} + U^{\top}_{\S'}U_{\S^a_t}\left[\x_t\right]_{\S^a_t} - [\x_t]_{\S'}\right\|_2 \\
   \end{aligned}
   \end{equation*}
where the second equality is due to that the support of $\x_t - \x_*$ is $\S^c_t$ and we split that into two subsets $\S_t^a$ and $\S_*$ that do not intersect with each other. 
By noting that $\S'$ can be split into two subsets $\S_1'$ and $\S_2'$ that do not intersect with each other and that $\|[\v]_{\S'}\|_2 \leq \|[\v]_{\S_1'}\|_2 + \|[\v]_{\S_2'}\|_2$ with $\v = U^{\top}U_{\S^a_t}[\x_t]_{\S^a_t} - \x_t$, we have
 \begin{equation*}
\begin{aligned} 
   &\left\|U^{\top}_{\S'}U_{\S_*}\left[\x_t - \x_*\right]_{\S_*} + U^{\top}_{\S'}U_{\S^a_t}\left[\x_t\right]_{\S^a_t} - [\x_t]_{\S'}\right\|_2\\
       & \leq \left\|U^{\top}_{\S'}U_{\S_*}\left[\x_t - \x_*\right]_{\S_*}\right\|_2 +\left\|U^{\top}_{\S'}U_{\S^a_t}\left[\x_t\right]_{\S^a_t} - [\x_t]_{\S'}\right\|_2  \\
    & \leq \left\|U^{\top}_{\S'}U_{\S_*}\left[\x_t - \x_*\right]_{\S_*}\right\|_2 +\left\|U_{\S'_2}^{\top}U_{\S^a_t}\left[\x_t\right]_{\S^a_t} - [\x_t]_{\S'_2}\right\|_2  + \left\|U_{\S'_1}^{\top}U_{\S^a_t}\left[\x_t\right]_{\S^a_t} - [\x_t]_{\S'_1}\right\|_2\\
        & =\left\|U^{\top}_{\S'}U_{\S_*}\left[\x_t - \x_*\right]_{\S_*}\right\|_2 +\left\|U_{\S'_2}^{\top}U_{\S^a_t}\left[\x_t\right]_{\S^a_t}\right\|_2  + \left\|U_{\S'_1}^{\top}U_{\S^a_t}\left[\x_t\right]_{\S^a_t} - [\x_t]_{\S'_1}\right\|_2\\
        &\leq \left\|U^{\top}_{\S'}U_{\S_*}\right\|_2\left\|\left[\x_t - \x_*\right]_{\S_*}\right\|_2 +\left\|U_{\S'_2}^{\top}U_{\S^a_t}\right\|_2\left\|\left[\x_t\right]_{\S^a_t}\right\|_2 +  \left\|U_{\S^a_t}^{\top}U_{\S^a_t}\left[\x_t\right]_{\S^a_t} - [\x_t]_{\S^a_t}\right\|_2\\
    &= \left\|U^{\top}_{\S'}U_{\S_*}\right\|_2\left\|\left[\x_t - \x_*\right]_{\S_*}\right\|_2 +\left\|U_{\S'_2}^{\top}U_{\S^a_t}\right\|_2\left\|\left[\x_t\right]_{\S^a_t}\right\|_2 +  \left\|(U_{\S^a_t}^{\top}U_{\S^a_t}-I)\left[\x_t\right]_{\S^a_t} \right\|_2\\
    & \leq  \theta_{s,s}\|[\x_t - \x_*]_{\S_*}\|_2 + \theta_{s,s}\|[\x_t]_{\S^a_t}\|_2 + \delta_s\|[\x_t]_{\S^a_t}\|_2\\
        & = \theta_{s,s}\|[\x_t - \x_*]_{\S_*}\|_2 + \theta_{s,s}\|[\x_t - \x_*]_{\S^a_t}\|_2 + \delta_s\|[\x_t - \x_*]_{\S^a_t}\|_2\\
    & \leq ( \delta_s  + \sqrt{2}\theta_{s,s}) \|\x_t - \x_*\|_2
\end{aligned}
\end{equation*}
where the first equality uses the fact that $[\x_t]_{\S'_2}=0$ , the third inequality uses the fact $\S'_1\subseteq\S^a_t$,  the fourth   inequality uses the RIP conditions (see Remark 1 and Remark 2) by noting that $|\S^a_t|\leq s$, $|\S_2'|\leq s$, $|\S'|\leq s$ and $|\S_*|\leq s$, and the last inequality uses the fact $a+b\leq \sqrt{2(a^2+b^2)}$ for $a = \|[\x_t - \x_*]_{\S_*}\|_2$ and $b=\|[\x_t - \x_*]_{\S^a_t}\|_2$. Combining the above inequalities we have

\begin{equation}\label{eqn:fac}
\begin{aligned}
   &\|[\xt_t]_{\S'}\|_2 \leq ( \delta_s  + \sqrt{2}\theta_{s,s}) \|\x_t - \x_*\|_2
   \end{aligned}
   \end{equation}
Since the above inequality holds for any subset $\S' \subseteq \Sb_*$ of size $s$, we form a particular set $\S'$ by including the largest $s$ entries in absolute value of $[\xt_t]_{\Sb_*}$. Then the smallest absolute value in $[\xt_t]_{\S'}$ is less than $\displaystyle\frac{\delta_s + \sqrt{2}\theta_{s,s}}{\sqrt{s}}\|\x_t - \x_*\|_2$.  {If not, then $\|[\xt_t]_{\S'}\|_2\geq \sqrt{s}\frac{\delta_s + \sqrt{2}\theta_{s,s}}{\sqrt{s}}\|\x_t - \x_*\|_2 = (\delta_s + \sqrt{2}\theta_{s,s})\|\x_t - \x_*\|_2$, which contradicts to the result in~(\ref{eqn:fac}).}
By the construction of $\S'$,  the smallest entry (in magnitude) in $\S'$ is the $s$th largest entry (in magnitude)   in $[\x_t- U^{\top}U(\x_t-\x_*)]_{\Sb_*}$, we conclude that at most $s$ entries in  $[\xt_t]_{\Sb_*}=[\x_t- U^{\top}U(\x_t-\x_*)]_{\Sb_*}$ are larger than $ \displaystyle \frac{\delta_s + \sqrt{2}\theta_{s,s}}{\sqrt{s}}\|\x_t- \x_*\|_2$ in magnitude.
\end{proof}
As an immediate result of Proposition~\ref{thm:sabc}, we prove the following Corollary. 
\begin{cor}\label{cor:xt}
{Assume the noiseless setting $\e=0$.}  Let $\S_t$ be the support set of $\x_t$ and $\S_*$ be the support set of $\x_*$. If $|\S_t\setminus S_*|\leq s$ and $\lambda_t\geq \frac{\delta_s + \sqrt{2}\theta_{s,s}}{\sqrt{s}}\|\x_t-\x_*\|_2$, then  $|\S_{t+1}\setminus \S_*|\leq s$ and $|\S_*\cup \S_t\cup \S_{t+1}|\leq 3s$. 
\end{cor}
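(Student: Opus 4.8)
The plan is to read off the claim directly from Proposition~\ref{thm:sabc} together with the closed form~(\ref{eqn:xt}) of the soft-thresholding step. The first observation is that in the noiseless regime $\y = U\x_*$, so the pre-thresholding point $\xh_t = \x_t - U^{\top}(U\x_t - \y)$ from~(\ref{eqn:xh}) is literally the vector $\xt_t = \x_t - U^{\top}U(\x_t - \x_*)$ studied in Proposition~\ref{thm:sabc}. Since the hypothesis $|\S_t\setminus\S_*|\le s$ is exactly the assumption of that proposition, we conclude that at most $s$ coordinates of $[\xh_t]_{\Sb_*} = [\xt_t]_{\Sb_*}$ have magnitude larger than $\frac{\delta_s + \sqrt{2}\theta_{s,s}}{\sqrt{s}}\|\x_t - \x_*\|_2$.

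Next I would invoke~(\ref{eqn:xt}): coordinate $i$ of $\x_{t+1}$ equals $\mathrm{sign}([\xh_t]_i)\,[\,|[\xh_t]_i| - \lambda_t\,]_+$, hence is nonzero only when $|[\xh_t]_i| > \lambda_t$. Restricting to $i\in\Sb_*$ --- and noting $\S_{t+1}\setminus\S_* = \S_{t+1}\cap\Sb_*$ --- the assumed lower bound $\lambda_t \ge \frac{\delta_s + \sqrt{2}\theta_{s,s}}{\sqrt{s}}\|\x_t - \x_*\|_2$ forces every surviving coordinate to exceed the threshold appearing in Proposition~\ref{thm:sabc}; there are at most $s$ such coordinates, so $|\S_{t+1}\setminus\S_*|\le s$. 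For the second assertion it suffices to bound cardinalities: $\S_*\cup\S_t\cup\S_{t+1}\subseteq \S_*\cup(\S_t\setminus\S_*)\cup(\S_{t+1}\setminus\S_*)$, and each of the three sets on the right has size at most $s$ (the first because $\x_*$ is $s$-sparse, the second by hypothesis, the third by what was just shown), giving $|\S_*\cup\S_t\cup\S_{t+1}|\le 3s$.

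There is no real obstacle here: the corollary is an immediate consequence of the proposition. The only spot needing a moment's care is the interplay of strict and non-strict inequalities --- Proposition~\ref{thm:sabc} counts coordinates whose magnitude is \emph{strictly larger} than the threshold, while soft-thresholding retains coordinate $i$ precisely when $|[\xh_t]_i| > \lambda_t$; because $\lambda_t$ is at least the threshold, a retained coordinate is automatically above the threshold, so the count of at most $s$ transfers cleanly.
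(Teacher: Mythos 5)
Your proof is correct and follows essentially the same route as the paper's: identify $\xh_t$ with $\xt_t$ in the noiseless case, apply Proposition~\ref{thm:sabc} to count the coordinates of $[\xh_t]_{\Sb_*}$ exceeding the threshold, and note that soft-thresholding at $\lambda_t$ kills all others. The extra care you take with the strict-versus-nonstrict inequality and the explicit union bound for $|\S_*\cup\S_t\cup\S_{t+1}|\le 3s$ only makes explicit what the paper leaves implicit.
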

\begin{proof}
Note that in the noiseless setting when $\e=0$, {the intermediate solution $\xh_t$ defined in~(\ref{eqn:xh})} is equal to  
\begin{align}\label{eqn:int2}
\xh_t  = \x_t - U^{\top}(U\x_t - \y) = \x_t - U^{\top}U(\x_t - \x_*)
\end{align}
As shown in~(\ref{eqn:xt}), $\x_{t+1}$ is given by 
\begin{align*}
\x_{t+1} = sign(\xh_t)\left[\left|\x_t - U^{\top}U(\x_t - \x_*)\right| - \lambda_t\right]_+
\end{align*}
By Proposition~\ref{thm:sabc}, we know that there are at most $s$ entries in $\left|\left[\x_t - U^{\top}U(\x_t - \x_*)\right]_{\Sb_*}\right|$ whose absolute values are larger than $(\delta_s + \sqrt{2}\theta_{s,s})\|\x_t-\x_*\|_2/\sqrt{s}$, therefore $[\x_{t+1}]_{\Sb_*}$ has at most $s$ non-zero entries by setting the value of $\lambda_t\geq \frac{\delta_s + \sqrt{2}\theta_{s,s}}{\sqrt{s}}\|\x_t-\x_*\|_2$. It concludes that $|\S_{t+1}\setminus \S_*|\leq s$ and $|\S_*\cup \S_t\cup \S_{t+1}|\leq 3s$. 
\end{proof}

\begin{prop} \label{thm:induction}
{Assume the noiseless setting $\e=0$.}   Let $\S_t$ be the support set of $\x_t$ and $\S_*$ be the support set of $\x_*$. If $|\S_t\setminus S_*|\leq s$, $\|\x_t - \x_*\|_2\leq \Delta_t$, and $\displaystyle\lambda_t = \frac{\delta_s + \sqrt{2}\theta_{s,s}}{\sqrt{s}}\Delta_t$, 
 Then we have
\[
    \|\x_{t+1} - \x_*\|_2\leq  (\delta_s + \theta_{s,s} + \delta_{3s})\Delta_t
\]
\end{prop}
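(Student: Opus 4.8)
The plan is to prove a one‑step contraction by combining the support‑size control of Corollary~\ref{cor:xt} with the optimality condition of the proximal map, and — crucially — to keep the restricted‑isometry deviation and the soft‑thresholding bias \emph{coupled} through the single vector $\x_{t+1}-\x_*$. Write $\d=\x_t-\x_*$ and $\z_t=\xt_t-\x_*=(I-U^{\top}U)\d$, using (\ref{eqn:int2}) in the noiseless case. Since $\lambda_t=\frac{\delta_s+\sqrt2\theta_{s,s}}{\sqrt s}\Delta_t\ge\frac{\delta_s+\sqrt2\theta_{s,s}}{\sqrt s}\|\d\|_2$, Corollary~\ref{cor:xt} applies and yields $|\S_{t+1}\setminus\S_*|\le s$ together with $|T|\le 3s$ for $T=\S_*\cup\S_t\cup\S_{t+1}$; both $\d$ and $\x_{t+1}-\x_*$ are supported in $T$, and $\x_{t+1}-\x_*$ is supported in $\S_*\cup\S^a_{t+1}$ with $\S^a_{t+1}=\S_{t+1}\setminus\S_*$.

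From proximal optimality for (\ref{eqn:proxx}) we have $\xt_t-\x_{t+1}=\lambda_t\g$ with $\g\in\partial\|\x_{t+1}\|_1$, hence $\x_{t+1}-\x_*=\z_t-\lambda_t\g$, and I would expand $\|\x_{t+1}-\x_*\|_2^2=\langle\x_{t+1}-\x_*,\z_t\rangle-\lambda_t\langle\x_{t+1}-\x_*,\g\rangle$. Restricting the first inner product to $T$ gives $[\x_{t+1}-\x_*]_T^{\top}(I-U_T^{\top}U_T)[\d]_T\le\delta_{3s}\|\d\|_2\,\|\x_{t+1}-\x_*\|_2$ by Remark 1 for the $3s$‑column submatrix $U_T$; this supplies the $\delta_{3s}$ term and is exactly where the coupling between $\x_{t+1}-\x_*$ and $\d$ is retained. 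For the second inner product, on $\S^a_{t+1}$ one has $[\x_{t+1}]_i[\g]_i=|[\x_{t+1}]_i|\ge0$, so those contributions may be discarded, while $|[\g]_i|\le1$ on $\S_*$ gives $-\lambda_t\langle\x_{t+1}-\x_*,\g\rangle\le\lambda_t\sqrt s\,\|[\x_{t+1}-\x_*]_{\S_*}\|_2\le\lambda_t\sqrt s\,\|\x_{t+1}-\x_*\|_2$.

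Dividing these two bounds by $\|\x_{t+1}-\x_*\|_2$ already gives the theorem's coefficient, $\|\x_{t+1}-\x_*\|_2\le\delta_{3s}\|\d\|_2+\lambda_t\sqrt s=(\delta_s+\sqrt2\theta_{s,s}+\delta_{3s})\Delta_t$, and the remaining, only delicate, task is to sharpen the thresholding contribution from $\sqrt2\theta_{s,s}$ to $\theta_{s,s}$. I would \emph{not} attempt this by decoupling the two effects: replacing the coupled estimate $\langle\x_{t+1}-\x_*,\z_t\rangle\le\delta_{3s}\|\d\|_2\|\x_{t+1}-\x_*\|_2$ by the finer split $P\|[\z_t]_{\S_*}\|_2+Q\|[\z_t]_{\S^a_{t+1}}\|_2$, with $P=\|[\x_{t+1}-\x_*]_{\S_*}\|_2$ and $Q=\|[\x_{t+1}]_{\S^a_{t+1}}\|_2$, and then folding in the bias by Cauchy--Schwarz produces $\sqrt{(\|[\z_t]_{\S_*}\|_2+\lambda_t\sqrt s)^2+\|[\z_t]_{\S^a_{t+1}}\|_2^2}$; this lets the isometry deviation and the bias align adversarially and can exceed even $\gamma\Delta_t$, so it is too weak to be the truth and must be avoided.

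Instead, the sharpening has to keep the coupling through $\x_{t+1}-\x_*$ intact while charging the thresholding bias against the genuinely small $\S_*$‑block $[\z_t]_{\S_*}=(I-U_{\S_*}^{\top}U_{\S_*})[\d]_{\S_*}-U_{\S_*}^{\top}U_{\S^a_t}[\d]_{\S^a_t}$, whose norm is at most $\delta_s\|[\d]_{\S_*}\|_2+\theta_{s,s}\|[\d]_{\S^a_t}\|_2$ (precisely the estimate from the proof of Proposition~\ref{thm:sabc}), recombining the $\S_*$ and $\S^a_t$ pieces under $\|[\d]_{\S_*}\|_2^2+\|[\d]_{\S^a_t}\|_2^2=\|\d\|_2^2\le\Delta_t^2$, and then trading the excess $(\sqrt2-1)\theta_{s,s}$ left by the bias against the unused slack in the $\delta_{3s}$ term by means of the standard ordering $\theta_{s,s}\le\delta_{2s}\le\delta_{3s}$ (Remark 2). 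I expect this bookkeeping — lowering the bias coefficient to $\theta_{s,s}$ without breaking the coupling — to be the crux of the argument and the step most likely to require care; once it is in place, dividing out $\|\x_{t+1}-\x_*\|_2$ (the degenerate case $\x_{t+1}=\x_*$ being immediate) yields $\|\x_{t+1}-\x_*\|_2\le(\delta_s+\theta_{s,s}+\delta_{3s})\Delta_t$.
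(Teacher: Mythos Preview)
Your argument up to the bound
\[
\|\x_{t+1}-\x_*\|_2\le\delta_{3s}\|\d\|_2+\lambda_t\sqrt s=(\delta_s+\sqrt2\,\theta_{s,s}+\delta_{3s})\Delta_t
\]
is correct and is exactly what the paper does: the paper packages your optimality/subgradient computation as Lemma~\ref{lem:fund}, applies it with $\x=\x_*$, and bounds the cross term $(\x_{t+1}-\x_*)^{\top}(U^{\top}U-I)(\x_t-\x_*)$ by $\delta_{3s}\|\x_{t+1}-\x_*\|_2\|\x_t-\x_*\|_2$ using RIP on the set $\S_*\cup\S_t\cup\S_{t+1}$ of size at most $3s$ (via Corollary~\ref{cor:xt}). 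Plugging in $\lambda_t$ finishes the proof. Your inline derivation of the subgradient inequality is essentially a reproof of Lemma~\ref{lem:fund}.

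The ``delicate task'' you then set yourself --- sharpening $\sqrt2\,\theta_{s,s}$ to $\theta_{s,s}$ --- is a phantom. The statement of the proposition as printed contains a typo: the paper's own proof concludes with $(\delta_s+\sqrt2\,\theta_{s,s}+\delta_{3s})\Delta_t$, and this is the constant $\gamma$ that is actually used downstream in Theorem~\ref{thm:main1} (where $\gamma=\delta_s+\sqrt2\,\theta_{s,s}+\delta_{3s}$). So your first three paragraphs already constitute the complete proof. The final paragraph should be dropped: the plan there (``trading the excess $(\sqrt2-1)\theta_{s,s}$ against unused slack in the $\delta_{3s}$ term via $\theta_{s,s}\le\delta_{3s}$'') is not an argument --- the $\delta_{3s}$ term is already saturated by the Cauchy--Schwarz/operator-norm step and there is no slack to trade against.
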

To prove above proposition, we need the following Lemma, whose proof is deferred to the appendix. 
\begin{lemma}\label{lem:fund}
Let $\x$ by any $s$-sparse vector and $\x_{t+1}$ given in~(\ref{eqn:proxx}), we have
\begin{align*}
\|\x_{t+1} - \x\|^2_2\leq& \lambda_t\sqrt{s}\|\x_{t+1} - \x\|_2 + |(\x_{t+1} - \x)^{\top}(U^{\top}(U\x_t-\y) -( \x_t - \x))|
\end{align*}
\end{lemma}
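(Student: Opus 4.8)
The plan is to use that $\x_{t+1}$ is the exact minimizer of the $1$-strongly convex objective $\z\mapsto\frac{1}{2}\|\z-\xh_t\|_2^2+\lambda_t\|\z\|_1$, where $\xh_t=\x_t-U^{\top}(U\x_t-\y)$ as in~(\ref{eqn:xh}). First I would write down the first-order optimality condition $0\in\x_{t+1}-\xh_t+\lambda_t\,\partial\|\x_{t+1}\|_1$: there is a subgradient $\v\in\partial\|\x_{t+1}\|_1$ with $\xh_t-\x_{t+1}=\lambda_t\v$, and any such $\v$ satisfies $\|\v\|_\infty\le 1$ together with $[\v]_i=\mathrm{sign}([\x_{t+1}]_i)$ on the support of $\x_{t+1}$. (Equivalently, one can read $\v$ off the closed form~(\ref{eqn:xt}).)

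Next I would split the squared error through $\xh_t$,
\[
\|\x_{t+1}-\x\|_2^2=\langle\x_{t+1}-\x,\,\x_{t+1}-\xh_t\rangle+\langle\x_{t+1}-\x,\,\xh_t-\x\rangle .
\]
The second inner product is, up to sign, exactly the quantity in the claim, since $\xh_t-\x=-\bigl(U^{\top}(U\x_t-\y)-(\x_t-\x)\bigr)$, so bounding it by its absolute value disposes of it. For the first inner product I substitute $\x_{t+1}-\xh_t=-\lambda_t\v$ to obtain $\lambda_t\langle\x-\x_{t+1},\v\rangle$, and it then remains to prove $\langle\x-\x_{t+1},\v\rangle\le\sqrt{s}\,\|\x-\x_{t+1}\|_2$.

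This last inequality is the heart of the argument, and it is where $s$-sparsity of $\x$ enters. Writing $\S=\S(\x)$ with $|\S|\le s$, I would decompose $\langle\x-\x_{t+1},\v\rangle=\langle[\x-\x_{t+1}]_{\S},[\v]_{\S}\rangle+\langle[\x-\x_{t+1}]_{\Sb(\x)},[\v]_{\Sb(\x)}\rangle$. Off the support of $\x$ we have $[\x]_i=0$, so the second piece equals $-\langle[\x_{t+1}]_{\Sb(\x)},[\v]_{\Sb(\x)}\rangle=-\sum_{i\notin\S}|[\x_{t+1}]_i|\le 0$, using that $[\v]_i=\mathrm{sign}([\x_{t+1}]_i)$ wherever $[\x_{t+1}]_i\neq0$. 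On $\S$, H\"older's inequality with $\|[\v]_{\S}\|_\infty\le 1$ followed by Cauchy--Schwarz gives $\langle[\x-\x_{t+1}]_{\S},[\v]_{\S}\rangle\le\|[\x-\x_{t+1}]_{\S}\|_1\le\sqrt{|\S|}\,\|[\x-\x_{t+1}]_{\S}\|_2\le\sqrt{s}\,\|\x-\x_{t+1}\|_2$. Adding the two pieces and substituting back into the split yields the lemma.

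The step I expect to be the main obstacle is the sign/support bookkeeping for the off-support term: one has to be careful that the subgradient $\v$ produced by the optimality condition is exactly the one compatible with the soft-thresholded iterate, so that $\langle[\x_{t+1}]_{\Sb(\x)},[\v]_{\Sb(\x)}\rangle$ is genuinely nonnegative and not merely bounded in absolute value. Once that is pinned down, the remaining work is a routine combination of H\"older's inequality, Cauchy--Schwarz, and $\|\x\|_0\le s$.
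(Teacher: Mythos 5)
Your proof is correct and is essentially the paper's own argument: both start from the optimality condition $\x_{t+1}-\xh_t+\lambda_t\v=0$ with $\v\in\partial\|\x_{t+1}\|_1$, split the subgradient term over the support $\S$ of $\x$ and its complement (dropping the nonnegative off-support piece $\|[\x_{t+1}]_{\S_{t+1}\setminus\S}\|_1$), and finish with H\"older plus $\|[\x_{t+1}-\x]_{\S}\|_1\le\sqrt{s}\,\|\x_{t+1}-\x\|_2$. The sign-compatibility worry you flag at the end is moot, since \emph{every} subgradient of $\|\cdot\|_1$ at $\x_{t+1}$ equals $\mathrm{sign}([\x_{t+1}]_i)$ on the support of $\x_{t+1}$, which is all the off-support estimate needs.
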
 

\begin{proof}[of Proposition~\ref{thm:induction}]
Let $\mathcal T=\S_*\cup \S_t\cup \S_{t+1}$, by Corollary~\ref{cor:xt}, we have $|\mathcal T|\leq 3s$. Indicated by the RIP condition, $\|U^{\top}_{\mathcal T}U_{\mathcal T}- I\|_2\leq \delta_{3s}$.  First, since $\y=U\x_*$ we have
\begin{align*}
&(\x_{t+1} - \x_*)^{\top}\left(U^{\top}\left(U\x_t - \y\right) - (\x_t - \x_*)\right) =  (\x_{t+1} - \x_*)^{\top}(U^{\top}U-I) (\x_t - \x_*)
\end{align*} 
Due to RIP of $U$ and $|\S_*\cup \S_t\cup \S_{t+1}|\leq 3s$, we have
\begin{align*}
 &|(\x_{t+1} - \x_*)^{\top}(U^{\top}U-I)(\x_t -\x_*)|\leq \delta_{3s}\|\x_{t+1} - \x_*\|_2\|\x_t - \x_*\|_2
\end{align*}
Thus by applying Lemma~\ref{lem:fund} with $\x=\x_*$,  we have
\begin{align*}
\|\x_{t+1} - \x_*\|_2^2& \leq  \lambda_t \sqrt{s}\|\x_{t+1} - \x_*\|_2 + \delta_{3s}\|\x_{t+1}- \x_*\|_2\|\x_t-\x_*\|_2 
\end{align*}
Then we get
\begin{align*}
\|\x_{t+1}-\x_*\|_2\leq \lambda_t\sqrt{s} + \delta_{3s} \|\x_t - \x_*\|_2
\end{align*}
Assuming $\|\x_{t} - \x_*\|_2\leq \Delta_t$ and plugging the value of $\lambda_t$, we have
\begin{align*}
\|\x_{t+1} -\x_*\|_2\leq (\delta_s + \sqrt{2}\theta_{s,s} + \delta_{3s})\Delta_t
\end{align*}
\end{proof}

\begin{proof}[Proof of Theorem~\ref{thm:main1}]
We aim to prove $\|\x_{t+1} - \x_*\|_2\leq \gamma^t\Delta_1$ and $|\S_{t+1}\setminus \S_*|\leq s$ by induction. This is true when $t=0$ due to  the initialization and the assumption $\|\x_1 - \x_*\|_2\leq \Delta_1$. Next, assume we have $\|\x_{t} - \x_*\|_2\leq \gamma^{t-1}\Delta_1$ and $|\S_{t}\setminus \S_*|\leq s$ for any $t\geq 1$. We prove that it also holds for $t+1$. By the definition of $\Delta_t$, we have $\Delta_t = \gamma^{t-1}\Delta_1$. Thus $\|\x_t - \x_*\|_2\leq \Delta_t$. By the value of $\lambda_t$, we have $\lambda_t = \frac{\delta_s + \sqrt{2}\theta_{s,s}}{\sqrt{s}}\Delta_t\geq \frac{\delta_s + \sqrt{2}\theta_{s,s}}{\sqrt{s}}\|\x_t - \x_*\|_2$. Hence, the condition in Corollary~\ref{cor:xt} hold, and as a result $|\S_{t+1}\setminus \S_*|\leq s$. From Proposition~\ref{thm:induction}, we also have $\|\x_{t+1} - \x_*\|_2\leq (\delta_s + \theta_{s,s} + \delta_{3s})\Delta_t = \gamma\Delta_t = \gamma^t\Delta_1$. 
\end{proof}

\subsection{Proof of Theorem~\ref{thm:main2}}
The logic for proving Theorem~\ref{thm:main2} is similar to proving Theorem~\ref{thm:main1}. 
\begin{cor}\label{cor:xt2}
Let $S_t$ be the support set of $\x_t$ and $\S_*$ be the support set of $\x_*$. If $|\S_t\setminus S_*|\leq s$ and $\lambda_t\geq \|U^{\top}\e\|_\infty + \frac{\delta_s + \sqrt{2}\theta_{s,s}}{\sqrt{s}}\|\x_t-\x_*\|_2$, then  $|\S_{t+1}\setminus \S_*|\leq s$ and $|\S_*\cup \S_t\cup \S_{t+1}|\leq 3s$. 
\end{cor}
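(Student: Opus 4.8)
The plan is to mirror the proof of Corollary~\ref{cor:xt}, simply accounting for the extra noise term. In the noisy setting the intermediate solution from~(\ref{eqn:xh}) is
\[
\xh_t = \x_t - U^{\top}(U\x_t - \y) = \x_t - U^{\top}U(\x_t - \x_*) + U^{\top}\e = \xt_t + U^{\top}\e,
\]
so $\xh_t$ differs from $\xt_t$ (the quantity analyzed in Proposition~\ref{thm:sabc}) by the vector $U^{\top}\e$, every entry of which has magnitude at most $\|U^{\top}\e\|_\infty$.

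First I would invoke Proposition~\ref{thm:sabc}, which applies since $|\S_t\setminus\S_*|\leq s$: it gives that at most $s$ entries of $[\xt_t]_{\Sb_*}$ have magnitude exceeding $\frac{\delta_s + \sqrt{2}\theta_{s,s}}{\sqrt{s}}\|\x_t - \x_*\|_2$. Note that Proposition~\ref{thm:sabc} is a purely deterministic statement about the structure of $U^{\top}U(\x_t-\x_*)$ and does not use $\e=0$ anywhere, so it transfers verbatim. Next, for any index $i\in\Sb_*$, the triangle inequality gives $|[\xh_t]_i| \le |[\xt_t]_i| + \|U^{\top}\e\|_\infty$; hence if $|[\xh_t]_i| > \frac{\delta_s + \sqrt{2}\theta_{s,s}}{\sqrt{s}}\|\x_t - \x_*\|_2 + \|U^{\top}\e\|_\infty$ then necessarily $|[\xt_t]_i| > \frac{\delta_s + \sqrt{2}\theta_{s,s}}{\sqrt{s}}\|\x_t - \x_*\|_2$, so at most $s$ entries of $[\xh_t]_{\Sb_*}$ can exceed this larger threshold.

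Then, recalling from~(\ref{eqn:xt}) that $\x_{t+1} = \sgn(\xh_t)\left[|\xh_t| - \lambda_t\right]_+$, the hypothesis $\lambda_t \ge \|U^{\top}\e\|_\infty + \frac{\delta_s + \sqrt{2}\theta_{s,s}}{\sqrt{s}}\|\x_t - \x_*\|_2$ guarantees that soft-thresholding zeros out every entry of $[\xh_t]_{\Sb_*}$ whose magnitude does not exceed that threshold, leaving at most $s$ nonzero entries in $[\x_{t+1}]_{\Sb_*}$, i.e.\ $|\S_{t+1}\setminus\S_*| \le s$. Finally, $|\S_*\cup\S_t\cup\S_{t+1}| \le |\S_*| + |\S_t\setminus\S_*| + |\S_{t+1}\setminus\S_*| \le 3s$.

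There is essentially no genuine obstacle here: the argument is a routine perturbation of Corollary~\ref{cor:xt}, the only point to watch being that the $\|U^{\top}\e\|_\infty$-shift of the threshold on $[\xh_t]_{\Sb_*}$ is exactly absorbed by the extra $\|U^{\top}\e\|_\infty$ term added to $\lambda_t$ in the hypothesis.
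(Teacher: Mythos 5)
Your proof is correct and follows exactly the same route as the paper's: decompose $\xh_t = \x_t - U^{\top}U(\x_t-\x_*) + U^{\top}\e$, apply Proposition~\ref{thm:sabc} to the noise-free part, and absorb the $\|U^{\top}\e\|_\infty$ perturbation into the enlarged threshold so that soft-thresholding with the stated $\lambda_t$ kills all but at most $s$ entries of $[\x_{t+1}]_{\Sb_*}$. The only difference is that you spell out the triangle-inequality step that the paper compresses into ``as a result,'' which is fine.
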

\begin{proof}
The $\x_{t+1}$ is given by 
\begin{align*}
\x_{t+1} = sign(\xh_t)\left[\left|\x_t - U^{\top}(U\x_t- \y)\right| - \lambda_t\right]_+
\end{align*}
Due to $\y = U\x_* + \e$, we have
\begin{align*}
\x_t - U^{\top}(U\x_t - \y) = \x_t - U^{\top}U(\x_t - \x_*) + U^{\top}\e
\end{align*}
By Proposition~\ref{thm:sabc}, we know that there are at most $s$ entries in $\left[\x_t - U^{\top}U(\x_t - \x_*)\right]_{\Sb_*}$ with magnitude  larger than $\frac{\delta_s +\sqrt{2}\theta_{s,s}}{\sqrt{s}}\|\x_t-\x_*\|_2$.  As a result, $[\x_t - U^{\top}(U\x_t - \y)]_{\Sb_*}$ has at most $s$ entries whose magnitudes larger than $\|U^{\top}\e\|_\infty + \frac{\delta_s +\sqrt{2} \theta_{s,s}}{\sqrt{s}}\|\x_t-\x_*\|_2$. Therefore, given the assumed value of $\lambda_t$,  $[\x_{t+1}]_{\Sb_*}$ has at most $s$ entries larger than zero. 
It concludes that $|\S_{t+1}\setminus \S_*|\leq s$ and $|\S_*\cup \S_t\cup \S_{t+1}|\leq 3s$. 
\end{proof}

\begin{prop} \label{thm:induction2}
Let $S_t$ be the support set of $\x_t$ and $\S_*$ be the support set of $\x_*$. If  $|\S_t\setminus S_*|\leq s$,  $\|\x_t - \x_*\|_2\leq \Delta_t$ and  $\displaystyle\lambda_t = \|U^{\top}\e\|_\infty+ \frac{\delta_s+\sqrt{2}\theta_{s,s}}{\sqrt{s}}\Delta_t$, 
 then we have
\[
    \|\x_{t+1} - \x_*\|_2\leq  (\delta_s + \sqrt{2}\theta_{s,s} + \delta_{3s})\Delta_t + (1+\sqrt{2})\sqrt{s}\|U^{\top}\e\|_{\infty}
\]
\end{prop}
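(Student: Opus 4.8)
The plan is to mirror the proof of Proposition~\ref{thm:induction}, carrying along the extra term $U^{\top}\e$ that appears once $\y=U\x_*+\e$ instead of $\y=U\x_*$. First I would check that the prescribed $\lambda_t=\|U^{\top}\e\|_\infty+\frac{\delta_s+\sqrt{2}\theta_{s,s}}{\sqrt{s}}\Delta_t$ meets the hypothesis of Corollary~\ref{cor:xt2}: since $\|\x_t-\x_*\|_2\le\Delta_t$ we have $\lambda_t\ge\|U^{\top}\e\|_\infty+\frac{\delta_s+\sqrt{2}\theta_{s,s}}{\sqrt{s}}\|\x_t-\x_*\|_2$, so $|\S_{t+1}\setminus\S_*|\le s$ and $\mathcal T:=\S_*\cup\S_t\cup\S_{t+1}$ has $|\mathcal T|\le 3s$; in particular $\|U_{\mathcal T}^{\top}U_{\mathcal T}-I\|_2\le\delta_{3s}$ by Remark~1.

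Next I would expand the gradient-error vector using $\y=U\x_*+\e$:
\[
U^{\top}(U\x_t-\y)-(\x_t-\x_*)=(U^{\top}U-I)(\x_t-\x_*)-U^{\top}\e,
\]
and split the cross term appearing in Lemma~\ref{lem:fund} (applied with $\x=\x_*$) as
\[
(\x_{t+1}-\x_*)^{\top}\bigl(U^{\top}(U\x_t-\y)-(\x_t-\x_*)\bigr)=(\x_{t+1}-\x_*)^{\top}(U^{\top}U-I)(\x_t-\x_*)-(\x_{t+1}-\x_*)^{\top}U^{\top}\e.
\]
The first piece is bounded exactly as in Proposition~\ref{thm:induction}: both $\x_{t+1}-\x_*$ and $\x_t-\x_*$ are supported in $\mathcal T$, so it is at most $\delta_{3s}\|\x_{t+1}-\x_*\|_2\|\x_t-\x_*\|_2$. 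For the second piece I would use that $\x_{t+1}-\x_*$ is supported on $\S_*\cup(\S_{t+1}\setminus\S_*)$, split the inner product over $\S_*$ and $\Sb_*$ — each of size at most $s$ — estimate each part by $\|\cdot\|_1\|U^{\top}\e\|_\infty\le\sqrt{s}\,\|\cdot\|_2\|U^{\top}\e\|_\infty$, and combine the two pieces via $a+b\le\sqrt{2(a^2+b^2)}$ to obtain $|(\x_{t+1}-\x_*)^{\top}U^{\top}\e|\le\sqrt{2s}\,\|\x_{t+1}-\x_*\|_2\,\|U^{\top}\e\|_\infty$.

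Plugging these two estimates into Lemma~\ref{lem:fund} and cancelling one factor of $\|\x_{t+1}-\x_*\|_2$ gives
\[
\|\x_{t+1}-\x_*\|_2\le\lambda_t\sqrt{s}+\delta_{3s}\|\x_t-\x_*\|_2+\sqrt{2s}\,\|U^{\top}\e\|_\infty.
\]
Finally I would substitute $\lambda_t\sqrt{s}=\sqrt{s}\,\|U^{\top}\e\|_\infty+(\delta_s+\sqrt{2}\theta_{s,s})\Delta_t$ and $\|\x_t-\x_*\|_2\le\Delta_t$, noting $\sqrt{s}+\sqrt{2s}=(1+\sqrt{2})\sqrt{s}$, to reach the claimed bound $\|\x_{t+1}-\x_*\|_2\le(\delta_s+\sqrt{2}\theta_{s,s}+\delta_{3s})\Delta_t+(1+\sqrt{2})\sqrt{s}\,\|U^{\top}\e\|_\infty$. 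The only genuinely new ingredient relative to the noiseless case is the control of $|(\x_{t+1}-\x_*)^{\top}U^{\top}\e|$ through the sparsity of $\x_{t+1}-\x_*$; the main point to double-check is that the $\sqrt{s}$ coming from $\lambda_t$ and the $\sqrt{2s}$ coming from the noise inner product add up to exactly $(1+\sqrt{2})\sqrt{s}$, and that the support bookkeeping keeps $|\mathcal T|\le 3s$ so the $\delta_{3s}$ estimate is legitimate.
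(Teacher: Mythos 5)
Your proposal is correct and follows essentially the same route as the paper: invoke Corollary~\ref{cor:xt2} to get $|\S_{t+1}\setminus\S_*|\leq s$ (hence $|\S_*\cup\S_t\cup\S_{t+1}|\leq 3s$), split the error term into $(U^{\top}U-I)(\x_t-\x_*)$ and $U^{\top}\e$, bound the first by $\delta_{3s}$ via RIP and the second by $\sqrt{2s}\,\|U^{\top}\e\|_\infty\|\x_{t+1}-\x_*\|_2$ using the $2s$-sparsity of $\x_{t+1}-\x_*$, then apply Lemma~\ref{lem:fund} and substitute $\lambda_t$. Your more explicit derivation of the $\sqrt{2s}$ factor (splitting over $\S_*$ and $\S_{t+1}\setminus\S_*$ and using $a+b\leq\sqrt{2(a^2+b^2)}$) is just an unpacking of the paper's one-line Cauchy--Schwarz step.
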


\begin{proof}
Since $\y=U\x_* + \e$, we have
\begin{align*}
&  (\x_{t+1} - \x_*)^{\top}U^{\top}\left(U\x_t - \y\right) - (\x_t - \x_*) 
 = (\x_{t+1} - \x_*)^{\top}(U^{\top}U-I) (\x_t - \x_*) - (\x_{t+1}-\x_*)^{\top}U^{\top}\e
\end{align*}
Due to the restricted isometry property, we have
\begin{align*}
 &|(\x_{t+1} - \x_*)^{\top}(U^{\top}U-I)(\x_t -\x_*)|\leq \delta_{3s}\|\x_{t+1} - \x_*\|_2\|\x_t - \x_*\|_2
\end{align*}
and by Cauchy-Shwartz inequality, we have
\[
|(\x_{t+1}-\x_*)^{\top}U^{\top}\e|\leq \sqrt{2s}\|U^{\top}\e\|_\infty\|\x_{t+1} - \x_*\|_2
\]
where we use the fact $|\S_{t+1}\setminus \S_*|\leq s$ due to Corollary~\ref{cor:xt2}. Thus by combining the two inequalities with  Lemma~\ref{lem:fund} with $\x=\x_*$, we have
\begin{align*}
&\|\x_{t+1} - \x_*\|_2^2 \leq  \lambda_t \sqrt{s}\|\x_{t+1} - \x_*\|_2 + \delta_{3s}\|\x_{t+1}- \x_*\|_2\|\x_t-\x_*\|_2 + \sqrt{2s}\|U^{\top}\e\|_\infty\|\x_{t+1}- \x_*\|_2
\end{align*}
Then we get
\begin{align*}
\|\x_{t+1}-\x_*\|_2\leq \lambda_t\sqrt{s} + \delta_{3s} \|\x_t - \x_*\|_2  + \sqrt{2s}\|U^{\top}\e\|_\infty
\end{align*}
Plugging the value of $\lambda_t$, we have
\begin{align*}
\|\x_{t+1} -\x_*\|_2\leq& (\delta_s + \sqrt{2}\theta_{s,s} + \delta_{3s})\|\x_t - \x_*\|_2 + (1+\sqrt{2})\sqrt{s}\|U^{\top}\e\|_\infty
\end{align*}
\end{proof}

\begin{proof}[Proof of Theorem~\ref{thm:main2}]
First, we assume $\|\x_t-\x_*\|_2\leq \Delta_t$, then  by Proposition~\ref{thm:induction2}, we have
\begin{align*}
\|\x_{t+1} - \x_*\|_2&\leq \gamma\Delta_t + (1+\sqrt{2})\sqrt{s}\|U^{\top}\e\|_\infty\triangleq\Delta_{t+1}
\end{align*}
Similarly, we can use Corollary~\ref{cor:xt2} to show that $|\S_{t+1}\setminus \S_*|\leq s$ given $|\S_t \setminus \S_*|\leq s$. 
Since $\S_1=\emptyset$ and $\|\x_1- \x_*\|\leq \Delta_1$, therefore by induction we can complete the proof. 

\end{proof}

\section{Nearly-Sparse Signal Recovery}\label{sec:nearly}
In this section, we present algorithms and analysis  for finding a sparse solution that approximates a nearly-sparse signal $\x_*$ with a small error. 

\subsection{Algorithms and Main Results}In order to derive a practical algorithm and a better recovery result, we instead assume that the random measurement matrix $U\in\R^{n\times d}$ contains sub-gaussian measurements, i.e., each element $U_{ij}$ is a sub-gaussian random variable and has mean zero and variance $1/n$.  The details of the algorithm is presented in Algorithm~\ref{alg:2}. The value of $\Delta_1$ and $\Lambda$ can be set according to our analysis.  In the sequel, we abuse the notation $\S_*$ to denote the support set of $\x^s_*$.  We first state the main theorem regarding the nearly-sparse signal recovery of Algorithm~\ref{alg:2}. 
\begin{thm}\label{thm:nearly-sparse} Let $\gamma = (1+\sqrt{2})\eta<1$.  For any $\tau>0$, assume
\[
\Lambda {\triangleq}  \sqrt{s}\|U^{\top}\e\|_\infty + cD(\x_*, \x^s_*), \quad n\geq \frac{c^2(\tau + s\log[d/s])}{\eta^2}.
\]
where 
\begin{align*}
D(\x_*, \x_*^s)\hspace*{-2pt}= \|(\x_* - \x^s_*)^s\|_2 + \sqrt{\frac{\tau + s\ln [d/s]}{n}}\|\x_* - \x^s_*\|_{2}
\end{align*}
where $c$ is some universal constant.  Let $\{\Delta_t, t=1,\ldots, T\}$ be a sequence such that $\Delta_1\geq \max\left(\|\x^s_*\|_2, \Lambda\right)$, and 
\[
\Delta_{t+1} = \gamma\Delta_t+(1+\sqrt{2})\Lambda.
\] With a probability $1 - 2te^{-\tau}$, we have for all $t\geq 0$\[
|\S_{t+1} \setminus \S_*| \leq s, \quad \|\x_{t+1} - \x^s_*\|_2 \leq \Delta_{t+1}\]
In particular, let $T_0$ be the smallest  value such that
\[
\gamma^{T_0}\Delta_1 \leq \frac{\Lambda}{1-\gamma}
\] 
We run Algorithm~\ref{alg:2} with $T_0$ iterations and denote by $\bar\x$ the output solution.  With a probability $1 - 2T_0e^{-\tau}$, we have
\begin{align}\label{eqn:err}
    \|\bar\x - \x^s_*\|_2 \leq \frac{\sqrt{2}(1+\sqrt{2})}{1-\gamma}\Lambda.
\end{align}
\end{thm}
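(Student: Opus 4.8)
The plan is to reproduce the two-stage argument behind Theorems~\ref{thm:main1} and~\ref{thm:main2}: first prove a one-step statement that simultaneously controls the support growth (the analogue of Corollary~\ref{cor:xt2}) and contracts the error (the analogue of Proposition~\ref{thm:induction2}), and then close the induction over $t$ exactly as in the proof of Theorem~\ref{thm:main2}, while tracking the failure probability. The two new ingredients are (a) the RIP and restricted-orthogonality constants are no longer assumed but must be derived from concentration for the sub-gaussian $U$, and (b) the target is $\x^s_*$ rather than $\x_*$, so the dense tail $\v := \x_* - \x^s_*$ (which is supported on $\Sb_*$, since $\x^s_*$ agrees with $\x_*$ on its top-$s$ support) has to be carried through as an extra source of ``noise''.

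Concretely, using $\y = U\x^s_* + U\v + \e$ one writes the intermediate iterate as
\[
\xh_t = \x^s_* - (U^{\top}U - I)(\x_t - \x^s_*) + U^{\top}U\v + U^{\top}\e .
\]
Restricting to any size-$s$ set $\S' \subseteq \Sb_*$, the term $(U^{\top}U - I)(\x_t - \x^s_*)$ is handled verbatim by the computation in Proposition~\ref{thm:sabc} (legitimate because the inductive hypothesis keeps $\x_t - \x^s_*$ at most $2s$-sparse), and is bounded by $(1+\sqrt2)\eta\|\x_t-\x^s_*\|_2$ once the relevant $\le 3s$-sparse RIP/RO constants are shown to be $\le \eta$; the term $U^{\top}\e$ contributes $\le \sqrt{s}\|U^{\top}\e\|_\infty$ on $\S'$; and $U^{\top}U\v$ is split as $U^{\top}U\v^s + U^{\top}U(\v - \v^s)$, where $\v^s$ is $s$-sparse and absorbed by restricted orthogonality, while $\v - \v^s$ is a \emph{fixed dense} vector, so $\|[(U^{\top}U-I)(\v-\v^s)]_{\S'}\|_2$ concentrates around $\sqrt{s/n}\,\|\v-\v^s\|_2$ and, together with $\|[\v-\v^s]_{\S'}\|_2 \le \|(\x_*-\x^s_*)^s\|_2$ (the $(s{+}1)$-st largest magnitude of $\v$ is at most $\|\v^s\|_2/\sqrt{s}$), yields the $c\sqrt{(\tau+s\log[d/s])/n}\,\|\x_*-\x^s_*\|_2$ and $c\|(\x_*-\x^s_*)^s\|_2$ pieces after a union bound over the $\binom{d}{s}$ choices of $\S'$ (this is where $n \ge c^2(\tau+s\log[d/s])/\eta^2$ enters). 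Summing, on a good event of probability at least $1 - 2e^{-\tau}$ at most $s$ coordinates of $[\xh_t]_{\Sb_*}$ exceed $((1+\sqrt2)\eta\Delta_t + \Lambda)/\sqrt{s}$, so the threshold $\lambda_t$ used in Algorithm~\ref{alg:2} (set to this value) forces $|\S_{t+1}\setminus\S_*| \le s$ and $|\S_*\cup\S_t\cup\S_{t+1}| \le 3s$. Feeding this into Lemma~\ref{lem:fund} with $\x = \x^s_*$, bounding the inner product $(\x_{t+1}-\x^s_*)^{\top}[(U^{\top}U-I)(\x_t-\x^s_*) - U^{\top}U\v - U^{\top}\e]$ by the same three mechanisms (RIP on the $\le 3s$-sparse supports, restricted orthogonality plus the fixed-vector concentration for $\v$, Cauchy--Schwarz for $\e$), and dividing through, gives the one-step contraction $\|\x_{t+1}-\x^s_*\|_2 \le (1+\sqrt2)\eta\Delta_t + (1+\sqrt2)\Lambda = \gamma\Delta_t + (1+\sqrt2)\Lambda = \Delta_{t+1}$.

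The induction then runs exactly as for Theorem~\ref{thm:main2}: at $t=0$ the claim is trivial since $\x_1 = 0$ gives $\|\x_1 - \x^s_*\|_2 = \|\x^s_*\|_2 \le \Delta_1$ and $\S_1 = \emptyset$; each subsequent step consumes one good event of probability at least $1 - 2e^{-\tau}$, so a union bound over the $t$ (resp.\ $T_0$) steps yields $1 - 2te^{-\tau}$ (resp.\ $1 - 2T_0 e^{-\tau}$). Unrolling the linear recursion $\Delta_{t+1} = \gamma\Delta_t + (1+\sqrt2)\Lambda$ from $\Delta_1$ gives $\Delta_{t+1} = \gamma^t\Delta_1 + (1+\sqrt2)\Lambda\sum_{j=0}^{t-1}\gamma^j \le \gamma^t\Delta_1 + \frac{(1+\sqrt2)\Lambda}{1-\gamma}$. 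Finally, at $t = T_0$ the defining inequality $\gamma^{T_0}\Delta_1 \le \Lambda/(1-\gamma)$ gives
\[
\|\bar\x - \x^s_*\|_2 \le \Delta_{T_0+1} \le \frac{\Lambda}{1-\gamma} + \frac{(1+\sqrt2)\Lambda}{1-\gamma} = \frac{(2+\sqrt2)\Lambda}{1-\gamma} = \frac{\sqrt2\,(1+\sqrt2)\Lambda}{1-\gamma},
\]
which is~(\ref{eqn:err}).

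I expect the main obstacle to be the step-(b) control of the residual, i.e.\ establishing the $U^{\top}U\v$ bound with exactly the scaling $c(\|(\x_*-\x^s_*)^s\|_2 + \sqrt{(\tau+s\log[d/s])/n}\,\|\x_*-\x^s_*\|_2)$ rather than the crude $\|\x_*-\x^s_*\|_1/\sqrt{s}$ of earlier work: one must peel off the top $s$ coordinates of $\v$ so that the remaining dense part is small in $\ell_2$ while the peeled part is handled by restricted orthogonality, and the concentration of $\|[(U^{\top}U-I)\w]_{\S'}\|_2$ for fixed $\w$ is a Hanson--Wright / sub-exponential estimate with variance proxy $\|\w\|_2^2/n$, whose union bound over $\binom{d}{s} \le (ed/s)^s$ sets produces the $s\log[d/s]$ term and fixes the sample-size requirement. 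A secondary subtlety, needed so the result holds for \emph{any} JL transform rather than only sub-gaussian ensembles, is that every vector to which a fixed-vector concentration bound is applied must be genuinely independent of $U$ (the residual $\v$, the signal $\x^s_*$), whereas every bound on a $U$-dependent iterate such as $\x_t - \x^s_*$ must come from a RIP-type statement valid uniformly over $\le 3s$-sparse supports; threading this through the induction (rather than fixing the random support $\S_*\cup\S_t\cup\S_{t+1}$ in advance, which is impossible) is what forces the per-iteration accounting and hence the $2te^{-\tau}$ probability.
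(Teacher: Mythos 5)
Your proposal is correct and follows essentially the same route as the paper: the same decomposition of the residual into $(U^{\top}U-I)(\x_t-\x_*^s)$, $U^{\top}U(\x_*-\x_*^s)$, and $U^{\top}\e$ (the paper's Lemma~\ref{lem:fund2}), a uniform concentration bound over $O(s)$-sparse vectors plus a fixed-vector bound for the dense tail that peels off its top-$s$ part, support control by thresholding, the one-step contraction via Lemma~\ref{lem:fund} with $\x=\x_*^s$, and the induction with a per-iteration union bound and geometric unrolling ending in $\sqrt{2}(1+\sqrt2)\Lambda/(1-\gamma)$. The only differences are cosmetic: the paper proves the concentration via an $\epsilon$-net over $s$-sparse unit vectors and the JL lemma rather than Hanson--Wright plus a union over supports, and your per-entry threshold carries an extra $(1+\sqrt2)$ in front of $\eta\Delta_t$ (from importing $\delta_s+\sqrt2\theta_{s,s}$) that the paper avoids by defining $\eta$ directly as the uniform sparse deviation bound, a constant-factor slack absorbed into $c$.
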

\begin{algorithm}[t]
\caption{Homotopy Proximal Mapping for learning a Sparse Solution (HPM1)}
{\bf Input:} initial size $\Delta_1\geq\max(\|\x^s_*\|_2, \Lambda)$,  the target sparsity $s$, a random measurement matrix $U \in \R^{d\times n}$ and measurements $\y \in \R^n$, and $\eta<\sqrt{2}-1$

\begin{algorithmic}[1]
\STATE Initialize $\x_1 = 0$, $\gamma=(1+\sqrt{2})\eta$
\FOR{$t=1, 2, \ldots, T$}
   \STATE $\lambda_t =(\Lambda + \eta\Delta_t)/\sqrt{s}$
    \STATE $\xh_{t+1} = \x_t - U^{\top}(U\x_t - \y)$
    \STATE $\x_{t+1} = \sgn(\xh_{t+1})\left[\xh_{t+1} - \lambda_t \right]_+$
    \STATE $\Delta_{t+1} = \gamma\Delta_t + (1+\sqrt{2})\Lambda$
\ENDFOR
\end{algorithmic} \label{alg:2}
{\bf Return} $\x_{T+1}$
\end{algorithm}

{\bf Remark 6:} we note that the final solution returned by Algorithm 2 is at most $2s$-sparse. We can also take the $s$-largest element in $\bar\x$ to form a $s$-sparse approximation. The Proposition~\ref{lem:2sparse} in the appendix  guarantees that the error $\|\bar\x^s - \x^s_*\|_2$ is only amplified by a constant factor of $\sqrt{3}$. 

{\bf Remark 7:}  It can be seen that when $\x_* = \x_*^s$, i.e., the signal is sparse, the problem boils down to  sparse signal recovery with noisy observations and the result in Theorem~\ref{thm:nearly-sparse} is similar to Theorem~\ref{thm:main2} except that the RIP constants are replaced with a quantity dependent on $n$ since we directly bound RIP constants of a sub-gaussian matrix. Further, when $\e=0$, then we can set $\Lambda=0$ in Algorithm 2 and the result in Theorem~\ref{thm:nearly-sparse} is similar to that in Theorem~\ref{thm:main1} for sparse signal recovery under noiseless observations.

{\bf Remark 8:}  The result in Theorem~\ref{thm:nearly-sparse} also implies that more observations (i.e., larger $n$) may lead to more accurate recovery and fast convergence. And also we note that the key property of the measurement matrix $U$ is that it satisfies the JL lemma with a high probability. Therefore, any JL transforms can be used, including sparse JL transform based on random hashing~\citep{Dasgupta:2010:SJL,Kane:2014:SJT:2578041.2559902}, which can speed up the computation. 

\begin{algorithm}[t]
\caption{Homotopy Proximal Mapping for recovering a sparse solution (HPM2)}
{\bf Input:} the target sparsity $s$, a random measurement matrix $U \in \R^{d\times n}$ and measurements $\y \in \R^n$ and $\eta$, and the total number of iterations $T$. 

\begin{algorithmic}[1]
\STATE Initialize $\x_1 = 0$, $\gamma=2(1+\sqrt{2})\eta$, and $\lambda_1 = 2\eta\Delta_1/\sqrt{s}$
\FOR{$t=1, 2, \ldots, T$}
    \STATE $\xh_{t+1} = \x_t - U^{\top}(U\x_t - \y)$
    \STATE $\x_{t+1} = \sgn(\xh_{t+1})\left[\xh_{t+1} - \lambda_t \right]_+$
    \STATE $\lambda_{t+1} = \gamma\lambda_t$
    \IF{$\|\x_{t+1}\|_0>2s$}
        \STATE Set $\xh = \x_{t}$
        \STATE Break
    \ENDIF
\ENDFOR
\end{algorithmic} \label{alg:3}
{\bf Return} $\xh$
\end{algorithm}
One issue of Algorithm 2 is that it needs to estimate $\|U^{\top}\e\|_\infty$ and $cD(\x_*, \x_*^s)$ for setting $\lambda_t$ and for stopping the algorithm, which could be difficult  in many circumstances. In addition, an overestimated $\Lambda$ could increase the number of iterations and the recovery error. To alleviate this issue,  below we present a more practical algorithm for nearly sparse signal recovery which could perform better in absence of prior knowledge. The key idea is motivated by Theorem~\ref{thm:nearly-sparse}. At earlier stage of Algorithm 2, we would expect that $\Lambda\leq O(\Delta_t)$ and therefore we can absorb $\Lambda$ into $\Delta_t$ for setting $\lambda_t$. And for stopping the algorithm we note that as long as $|\S_{t+1}|\leq 2s$, we can have the recovery error bounded by $\Delta_{t+1}$ (Theorem~\ref{thm:5}) or $O(\Lambda)$ (Theorem~\ref{thm:6}), therefore we stop the algorithm when $|\S_{t+1}|>2s$. The detailed steps of the practical algorithm are presented in Algorithm 3. The recovery error of Algorithm 3 is provided by the following theorem. 
\begin{thm} \label{thm:11}
Let $\Delta_1 \geq \|\x_*^s\|_2$ be a constant. Let $\xh$ be the solution output from Algorithm~\ref{alg:3} and $T$ is the maximum number of iteration allowed.  Assume 
\[
c\sqrt{\frac{\tau+ s\log(d/s)}{n}} \leq \eta\leq \frac{1}{2(1+\sqrt{3})}
\]
Then, with a probability at least $1 - 6Te^{-\tau}$, we have
\[
    \|\xh - \x_*^{s}\|_2 \leq \max\left(\frac{\Lambda}{\eta},  \gamma^T\Delta_1  \right)
\]
where $\gamma = 2(1+\sqrt{2})\eta <1$, $\Lambda=\sqrt{s}\|U^{\top}\e\|_\infty + cD(\x_*, \x^s_*)$, $D(\x_*, \x^s_*)$ is defined in Theorem~\ref{thm:nearly-sparse} and $c$ is some universal constant.
\end{thm}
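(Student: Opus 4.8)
The plan is to fix one high-probability event on the random matrix $U$ and then run a deterministic, two-phase induction modeled on the proof of Theorem~\ref{thm:nearly-sparse}, but adapted to the purely geometric schedule $\lambda_{t}=\gamma^{t-1}\lambda_1$ of Algorithm~\ref{alg:3}. Write $\Delta_t:=\gamma^{t-1}\Delta_1$, so that the schedule is exactly $\lambda_t=2\eta\Delta_t/\sqrt{s}$; since $\gamma=2(1+\sqrt2)\eta<1$ under the hypothesis on $\eta$, the $\Delta_t$ decrease geometrically, and I would split the run into \emph{Phase~1}, the (initial) iterations with $\Delta_t\ge\Lambda/\eta$, and \emph{Phase~2}, the remaining ones.

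First I would set up the good event. Using the same sub-gaussian concentration / JL machinery that underlies Setting~III (Theorems~\ref{thm:nearly-sparse}, \ref{thm:5}, \ref{thm:6}), when $n\ge c^2(\tau+s\log[d/s])/\eta^2$ each iteration $t$ admits a good event $\mathcal E_t$ with $\Pr[\mathcal E_t]\ge 1-6e^{-\tau}$, on which, simultaneously for all disjoint index sets $\mathcal T,\mathcal T'$ of size $\le 3s$ the iterates could produce, $\|(U_{\mathcal T}^{\top}U_{\mathcal T}-I)\v\|_2\le\eta\|\v\|_2$ and $\|U_{\mathcal T}^{\top}U_{\mathcal T'}\|_2\le\eta$, and moreover the noise and tail cross terms $\|U_{\mathcal T}^{\top}\e\|$ and $\|U_{\mathcal T}^{\top}U_{\Sb_*}(\x_*-\x^s_*)\|$ are dominated by $\Lambda=\sqrt s\|U^{\top}\e\|_\infty+cD(\x_*,\x^s_*)$. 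A union bound over $t=1,\dots,T$ yields $\Pr[\bigcap_{t}\mathcal E_t]\ge 1-6Te^{-\tau}$, and all the remaining arguments are deterministic on $\bigcap_t\mathcal E_t$.

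Next, for Phase~1 I would prove by induction that for every iteration the algorithm reaches with $\Delta_t\ge\Lambda/\eta$ one has $|\S_t\setminus\S_*|\le s$ (hence $\|\x_t\|_0\le 2s$, so the stopping rule cannot fire at step $t-1$) and $\|\x_t-\x^s_*\|_2\le\Delta_t$; the base case $t=1$ is immediate from $\x_1=0$ and $\Delta_1\ge\|\x^s_*\|_2$. For the inductive step, $\Delta_t\ge\Lambda/\eta$ forces $\lambda_t\sqrt s=2\eta\Delta_t\ge\Lambda+\eta\|\x_t-\x^s_*\|_2$, which is precisely the hypothesis the support-size lemma used inside the proof of Theorem~\ref{thm:nearly-sparse} (the nearly-sparse analogue of Corollary~\ref{cor:xt2}) needs to conclude $|\S_{t+1}\setminus\S_*|\le s$ and $|\S_*\cup\S_t\cup\S_{t+1}|\le 3s$; then applying Lemma~\ref{lem:fund} with $\x=\x^s_*$ on that size-$\le3s$ support, together with the $\mathcal E_t$ bounds and $\Lambda\le\eta\Delta_t$, gives $\|\x_{t+1}-\x^s_*\|_2\le\lambda_t\sqrt s+\delta_{3s}\|\x_t-\x^s_*\|_2+O(\Lambda)\le\gamma\Delta_t=\Delta_{t+1}$ once the universal constants are matched against $\gamma=2(1+\sqrt2)\eta$ (this is essentially the per-step version of Proposition~\ref{thm:induction2}, cf.\ Theorem~\ref{thm:5}). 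In particular no break occurs during Phase~1, and if the loop exhausts all $T$ iterations while still in Phase~1 then $\xh=\x_{T+1}$ with $\|\xh-\x^s_*\|_2\le\Delta_{T+1}=\gamma^T\Delta_1$.

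Finally, for Phase~2 let $\x^{(0)}$ denote the first iterate produced after Phase~1 ($\x_1$ itself if Phase~1 is empty); by the previous paragraph $\|\x^{(0)}-\x^s_*\|_2<\Lambda/\eta$. I would then show that the ball $\{\x:\|\x-\x^s_*\|_2\le\Lambda/\eta\}$ is invariant for as long as the stopping rule has not fired: whenever no break occurs at step $t$ we have $\|\x_{t+1}\|_0\le 2s$ \emph{by definition of the stopping rule}, hence $|\S_*\cup\S_t\cup\S_{t+1}|\le 3s$, and Lemma~\ref{lem:fund} with $\mathcal E_t$ and the Phase-2 estimate $\lambda_t\sqrt s=2\eta\Delta_t<2\Lambda$ give a contraction $\|\x_{t+1}-\x^s_*\|_2\le\eta\|\x_t-\x^s_*\|_2+c'\Lambda$; the hypothesis $\eta\le 1/(2(1+\sqrt3))$ is calibrated exactly so this recursion keeps the iterate in the ball of radius $\Lambda/\eta$ (this is also where the $\sqrt3$, arising from bounding $3s$-sparse cross terms and the best-$s$-term passage of Remark~6 and Proposition~\ref{lem:2sparse}, enters the constant). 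Consequently either the loop runs out with $\xh=\x_{T+1}$ and $\|\xh-\x^s_*\|_2\le\Lambda/\eta$, or it breaks at some iteration $\tau_0$ with $\xh=\x_{\tau_0}$, $\|\x_{\tau_0}\|_0\le 2s$, and $\|\x_{\tau_0}-\x^s_*\|_2\le\Lambda/\eta$; combined with the Phase-1 exit this gives $\|\xh-\x^s_*\|_2\le\max(\Lambda/\eta,\gamma^T\Delta_1)$. The main obstacle is the good event of the second paragraph: the RIP/JL estimates must hold simultaneously over the \emph{data-dependent} supports $\S_t$ (each of which is itself a function of $U$), which forces the uniform union bound over all $\le3s$-subsets --- the origin of the $s\log[d/s]$ sample-complexity term --- whereas everything afterwards is bookkeeping of the two-phase induction and tracking of universal constants.
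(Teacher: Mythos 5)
Your proposal is correct and follows essentially the same route as the paper: the two-phase split at the first iteration where $\Lambda>\eta\Delta_t$ corresponds exactly to the paper's case analysis around $k=\min\{t:\Lambda>\eta\Delta_t\}$, your Phase-1 induction is the paper's Theorem~\ref{thm:5} (reducing to Corollary~\ref{cor:11} via $\lambda_t\sqrt{s}=2\eta\Delta_t\ge\Lambda+\eta\Delta_t$), and your Phase-2 ball-invariance argument is the paper's Theorem~\ref{thm:6}, including the use of the stopping rule to guarantee $3s$-sparsity of $\x_{t+1}-\x^s_*$, the $\sqrt{3}$ from splitting that vector into three $s$-sparse pieces, and the calibration $\eta\le 1/(2(1+\sqrt{3}))$ so that $2(1+\sqrt{3})\Lambda\le\Lambda/\eta$. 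The union bound over $T$ iterations yielding $1-6Te^{-\tau}$ and the handling of the break-versus-exhaustion cases also match the paper's argument.
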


{\bf Remark 9: } Although in Algorithm~\ref{alg:3} we still use an estimate $\Delta_1\geq \|\x_*^s\|_2$ for setting the initial value of $\lambda$, in practice we can set it to a sufficiently large value (e.g., $\|U^{\top}\y\|_\infty$) such that $\x_{2}=0$. 

{\bf Remark 10:} The universal constant $c$ in Theorem~\ref{thm:11} should not be treated literally the same as in Theorem~\ref{thm:nearly-sparse}. In numerical simulations, we observe that Algorithm~\ref{alg:3} is more robust to smaller values of $\eta$ than Algorithm~\ref{alg:2}.

{\bf Remark 11:} Theorem~\ref{thm:11} reveals a tradeoff in setting the value of $\eta$. A smaller value of $\eta$ will lead to faster convergence but larger recovery error.

\subsection{Proof of Theorem~\ref{thm:nearly-sparse}}
We first give the following lemma. 
\begin{lemma}\label{lem:fund2}
Assume  $U\in\R^{n\times d}$ is a sub-gaussian measurement matrix, where each element in $U$ has zero mean and variance $1/n$. If $|\S_t\setminus\S_*|\leq s$, then with a probability $1 - 2e^{-\tau}$, we have
\begin{align*}
&\left\|(U^{\top}\left(U\x_t - \y\right) - (\x_t - \x^s_*))^s\right\|_2\leq  \sqrt{s}\|U^{\top}\e\|_\infty+ cD(\x_*, \x_*^s)+ c\sqrt{\frac{\tau + s\log[d/s]}{n}} \|\x_t - \x^s_*\|_2
\end{align*}
where $D(\x_*, \x_*^s)$ is defined in Theorem~\ref{thm:nearly-sparse}
and $c$ is some universal constant. 
\end{lemma}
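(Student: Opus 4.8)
The plan is to split the left-hand side into three pieces and bound each, only one of which needs probabilistic work. Since $\y=U\x_*+\e$ and $\x_t-\x_*=(\x_t-\x^s_*)-(\x_*-\x^s_*)$, writing $\z:=\x_*-\x^s_*$ I would use the identity
\[
U^{\top}(U\x_t-\y)-(\x_t-\x^s_*)=(U^{\top}U-I)(\x_t-\x^s_*)-U^{\top}U\z-U^{\top}\e.
\]
For any $T$ with $|T|\le s$ and any $\v\in\R^d$ one has $\|[\v]_T\|_2\le\|\v^s\|_2$, so applying the triangle inequality on the size-$s$ support realizing $\|\cdot^{s}\|_2$ of the left-hand side gives
\[
\|(U^{\top}(U\x_t-\y)-(\x_t-\x^s_*))^{s}\|_2\le\|((U^{\top}U-I)(\x_t-\x^s_*))^{s}\|_2+\|(U^{\top}U\z)^{s}\|_2+\|(U^{\top}\e)^{s}\|_2.
\]
The last term has at most $s$ nonzero entries of magnitude at most $\|U^{\top}\e\|_\infty$, hence it is at most $\sqrt{s}\,\|U^{\top}\e\|_\infty$. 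For the first term, the hypothesis $|\S_t\setminus\S_*|\le s$ forces $\x_t-\x^s_*$ to be supported on $\S_t\cup\S_*$, a set of size at most $2s$; so for any $T$ with $|T|\le s$, writing $\mathcal{Q}=T\cup\S_t\cup\S_*$ with $|\mathcal{Q}|\le 3s$ and using that $\x_t-\x^s_*$ lives inside $\mathcal{Q}$, Remark~1 gives $\|[(U^{\top}U-I)(\x_t-\x^s_*)]_T\|_2\le\|(U_{\mathcal{Q}}^{\top}U_{\mathcal{Q}}-I)[\x_t-\x^s_*]_{\mathcal{Q}}\|_2\le\delta_{3s}\|\x_t-\x^s_*\|_2$. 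Since $U$ is sub-gaussian, the standard RIP estimate for sub-gaussian matrices~\citep{citeulike:2688127} gives, for a suitable universal constant, $\delta_{3s}\le c\sqrt{(\tau+s\log[d/s])/n}$ with probability at least $1-e^{-\tau}$, which supplies the third term of the claimed bound.

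The remaining and main task is to bound $\|(U^{\top}U\z)^{s}\|_2$ by $c\,D(\x_*,\x^s_*)$; here it is essential that in Setting~III the vector $\z=\x_*-\x^s_*$ is \emph{fixed}. I would write $\|(U^{\top}U\z)^{s}\|_2=\max_{|T|=s}\|U_T^{\top}U\z\|_2$ and, for a fixed $T$ with complement $\overline{T}=\{1,\ldots,d\}\setminus T$, split $U_T^{\top}U\z=U_T^{\top}U_T[\z]_T+U_T^{\top}U_{\overline{T}}[\z]_{\overline{T}}$. The first summand has norm at most $(1+\delta_s)\|[\z]_T\|_2\le 2\|\z^s\|_2$ by the RIP of order $s$. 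For the second, put $w:=[\z]_{\overline{T}}$, a fixed vector supported on $\overline{T}$; since for each $i\in T$ the $i$-th column of $U$ is independent of $g:=U_{\overline{T}}w$, conditioning on $g$ makes the $i$-th coordinate of $U_T^{\top}U_{\overline{T}}w$ a mean-zero sub-gaussian with variance proxy $\|g\|_2^2/n$, and these are independent across $i\in T$, so $\|U_T^{\top}U_{\overline{T}}w\|_2^2$ is a sum of $s$ independent sub-exponential terms of mean $\|g\|_2^2/n$. A Bernstein-type tail bound gives $\|U_T^{\top}U_{\overline{T}}w\|_2^2\le c(s+\tau')\|g\|_2^2/n$ with conditional probability at least $1-2e^{-\tau'}$, and a union bound over the at most $(ed/s)^{s}$ choices of $T$ with $\tau'=\tau+s\log[d/s]$ makes this hold for all $T$ simultaneously with probability at least $1-2e^{-\tau}$ (after adjusting constants). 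Finally $\|g\|_2\le\|U\z\|_2+\|U[\z]_T\|_2\le(1+\epsilon)\|\z\|_2+\sqrt{1+\delta_s}\,\|\z^s\|_2\le c'\|\z\|_2$, using the Johnson--Lindenstrauss concentration for the single fixed vector $\z$ and the RIP of order $s$ for the $s$-sparse vector $[\z]_T$. Combining, $\max_{|T|=s}\|U_T^{\top}U\z\|_2\le c(\|\z^s\|_2+\sqrt{(\tau+s\log[d/s])/n}\,\|\z\|_2)=c\,D(\x_*,\x^s_*)$, the middle term of the claim. Adding the three estimates, intersecting the at most three high-probability events, and rescaling $\tau$ and the universal constants then gives the stated bound with failure probability $2e^{-\tau}$.

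The step I expect to be the main obstacle is the control of $\|(U^{\top}U\z)^{s}\|_2$: this is exactly where the advertised $\|\x_*-\x^s_*\|_2$ dependence (rather than the weaker $\|\x_*-\x^s_*\|_1/\sqrt{s}$) comes from, and it cannot be obtained from the RIP alone — one genuinely needs the column-independence observation to turn $U_T^{\top}U_{\overline{T}}[\z]_{\overline{T}}$ into a quantity whose squared norm is, conditionally, a sum of independent sub-exponentials, together with the union bound over the size-$s$ supports that supplies the $s\log[d/s]$ term. Everything else is routine manipulation of the RIP inequalities plus the JL lemma, and the only delicate bookkeeping is keeping the aggregate failure probability at $2e^{-\tau}$.
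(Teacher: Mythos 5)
Your decomposition is exactly the paper's: the left-hand side is split into $(U^{\top}U-I)(\x_t-\x^s_*)$, $U^{\top}U(\x_*-\x^s_*)$ and $U^{\top}\e$ (the paper's $\w_b,\w_a,\w_c$), the noise piece is handled identically via $\sqrt{s}\|U^{\top}\e\|_\infty$, and the $2s$-sparsity of $\x_t-\x^s_*$ coming from $|\S_t\setminus\S_*|\leq s$ plays the same role. Where you diverge is in how the two random pieces are controlled. For $\|((U^{\top}U-I)(\x_t-\x^s_*))^s\|_2$ the paper reproves the uniform bound over sparse vectors from scratch (Theorem~\ref{thm:a2}, a double $\epsilon$-net argument), whereas you simply cite the standard sub-gaussian estimate $\delta_{3s}\lesssim\sqrt{(\tau+s\log[d/s])/n}$; these are equivalent in substance. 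The genuine difference is in the key term $\|(U^{\top}U\z)^s\|_2$ for the fixed $\z=\x_*-\x^s_*$: the paper writes it as $\max_{\w\in\K_{d,s}}\w^{\top}U^{\top}U\z$, discretizes $\K_{d,s}$ with an $\epsilon$-net, applies the JL inner-product lemma to each pair $(\w,\z)$, and takes a union bound over the net (Theorem~\ref{thm:a1}), while you fix a size-$s$ support $T$, split off $U_T^{\top}U_T[\z]_T$ (bounded by RIP, giving the $\|\z^s\|_2$ term), and treat $U_T^{\top}U_{\overline{T}}[\z]_{\overline{T}}$ by conditioning on $g=U_{\overline{T}}[\z]_{\overline{T}}$, exploiting column independence to get a sum of $s$ independent sub-exponentials, applying Bernstein, and union-bounding over the $\binom{d}{s}$ supports. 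Both yield $c(\|\z^s\|_2+\sqrt{(\tau+s\log[d/s])/n}\,\|\z\|_2)=cD(\x_*,\x^s_*)$. Your route makes the source of the two terms in $D$ more transparent (the diagonal block gives $\|\z^s\|_2$, the off-diagonal fluctuation gives the $\|\z\|_2$ term) at the cost of needing explicit entrywise independence of $U$; the paper's net argument needs only that $U$ satisfies a JL-type concentration, which is why Remark~8 can claim the result for arbitrary JL transforms. Your probability bookkeeping and the use of the fixed-vector JL bound for $\|U\z\|_2$ are both sound.
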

Lemma~\ref{lem:fund2} is proved in the appendix. Following  Lemma~\ref{lem:fund2}, we prove the following Corollary. 
\begin{cor} \label{cor:11}
Let $\S_t$ and $\S_{t+1}$ be the support sets of $\x_t$ and $\x_{t+1}$, respectively. If $|\S_t \setminus \S_*| \leq s$, then with a probability $1 - 2e^{-\tau}$, we have
\[
    |\S_{t+1} \setminus \S_*| \leq s
\]
provided that
\begin{eqnarray}\label{eqn:lamdat}
\lambda_t\hspace*{-2pt} \geq\hspace*{-2pt} \|U^{\top}\e\|_\infty\hspace*{-2pt} +\hspace*{-2pt} \frac{cD(\x_*, \x^s_*)}{\sqrt{s}}\hspace*{-2pt}+\hspace*{-2pt} \frac{c}{\sqrt{s}}\sqrt{\frac{\tau + s\log[d/s]}{n}} \|\x_t - \x^s_*\|_2
\end{eqnarray}
\end{cor}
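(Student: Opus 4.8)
The plan is to derive this corollary from Lemma~\ref{lem:fund2} in exactly the way Corollary~\ref{cor:xt2} is derived from Proposition~\ref{thm:sabc} in the RIP setting: the sub-gaussian bound of Lemma~\ref{lem:fund2} simply replaces the deterministic RIP estimate, and the $1-2e^{-\tau}$ failure probability is inherited verbatim from it.

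First I would recall that the update of Algorithm~\ref{alg:2} has the soft-thresholding closed form $\x_{t+1}=\sgn(\xh_{t+1})[|\xh_{t+1}|-\lambda_t]_+$ with $\xh_{t+1}=\x_t-U^{\top}(U\x_t-\y)$, so a coordinate $i$ can belong to $\S_{t+1}$ only when $|[\xh_{t+1}]_i|>\lambda_t$. Since $\S_{t+1}\setminus\S_*$ is precisely the set of nonzero coordinates of $\x_{t+1}$ lying in $\Sb_*$, it suffices to show that at most $s$ coordinates of $[\xh_{t+1}]_{\Sb_*}$ have magnitude exceeding $\lambda_t$.

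Next, set $\w:=U^{\top}(U\x_t-\y)-(\x_t-\x^s_*)$ and observe $\w=\x^s_*-\xh_{t+1}$; because $\x^s_*$ is supported on $\S_*$, we have $[\w]_{\Sb_*}=-[\xh_{t+1}]_{\Sb_*}$, hence $|[\w]_i|=|[\xh_{t+1}]_i|$ for all $i\in\Sb_*$. Lemma~\ref{lem:fund2} gives, with probability $1-2e^{-\tau}$, that $\|\w^s\|_2\le R$ where $R:=\sqrt{s}\|U^{\top}\e\|_\infty+cD(\x_*,\x^s_*)+c\sqrt{(\tau+s\log[d/s])/n}\,\|\x_t-\x^s_*\|_2$. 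Then I would run the top-$s$ contradiction argument from the closing paragraph of the proof of Proposition~\ref{thm:sabc}: if more than $s$ coordinates of $[\w]_{\Sb_*}$ had magnitude larger than $R/\sqrt{s}$, restricting $\w$ to any $s$ of them would produce a subvector of $\ell_2$ norm strictly larger than $\sqrt{s}\cdot R/\sqrt{s}=R$, while every size-$s$ restriction of $\w$ has $\ell_2$ norm at most $\|\w^s\|_2\le R$ -- a contradiction. Hence at most $s$ coordinates of $[\xh_{t+1}]_{\Sb_*}$ exceed $R/\sqrt{s}$ in magnitude, and a one-line computation shows that $R/\sqrt{s}$ is exactly the right-hand side of~(\ref{eqn:lamdat}); so under the hypothesis $\lambda_t\ge R/\sqrt{s}$, the thresholding kills all but at most $s$ of the coordinates of $\xh_{t+1}$ in $\Sb_*$, yielding $|\S_{t+1}\setminus\S_*|\le s$ on this event.

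The only genuine difficulty is buried inside Lemma~\ref{lem:fund2} (deferred to the appendix), which controls the top-$s$ entries of $\w$ via the JL/sub-gaussian properties of $U$; granting it, the corollary is essentially bookkeeping. Within the corollary itself, the single point needing care is the passage from the $\ell_2$ bound on $\w^s$ to a count of coordinates of $[\xh_{t+1}]_{\Sb_*}$ above the threshold $\lambda_t$, together with the check that $R/\sqrt{s}$ coincides with the prescribed lower bound~(\ref{eqn:lamdat}) on $\lambda_t$.
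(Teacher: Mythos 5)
Your proposal is correct and follows essentially the same route as the paper: invoke Lemma~\ref{lem:fund2} on the event of probability $1-2e^{-\tau}$, run the top-$s$ pigeonhole/contradiction argument to show at most $s$ coordinates of $[\xh_{t+1}]_{\Sb_*}$ exceed $R/\sqrt{s}$ (using that $\x^s_*$ vanishes off $\S_*$, which you phrase cleanly as $[\w]_{\Sb_*}=-[\xh_{t+1}]_{\Sb_*}$), and then note that soft-thresholding at $\lambda_t\ge R/\sqrt{s}$ zeroes out all other coordinates in $\Sb_*$. The paper's proof is the same argument with the identity $[\x^s_*]_{\A_s}=0$ applied inside the contradiction step rather than up front.
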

\begin{proof}
{The proof is similar to that of Corollary~\ref{cor:xt2}. Recall that $\x^s$ denotes the vector $\x$ with all but the $s$-largest entries (in magnitude) set to zero, and $\S_*$ denotes the support of the $s$-largest entries in $\x_*$. 
From Lemma~\ref{lem:fund2}, we can  conclude that $[\x_t - U^{\top}\left(U\x_t - \y\right)]_{\Sb_*}$ has at most $s$ entries with magnitude larger than the quantity in the R.H.S of~(\ref{eqn:lamdat}). This can be verified by contradiction. If there exists $\A\subseteq\Sb_*$  such that $|\A|>s$ and for all $i\in\A$
\[
[\x_t - U^{\top}\left(U\x_t - \y\right)]_{i}\geq  \|U^{\top}\e\|_\infty\hspace*{-2pt} +\hspace*{-2pt} \frac{cD(\x_*, \x^s_*)}{\sqrt{s}}\hspace*{-2pt}+\hspace*{-2pt} \frac{c}{\sqrt{s}}\sqrt{\frac{\tau + s\log[d/s]}{n}} \|\x_t - \x^s_*\|_2
\]
Let $\A_s\subseteq\A$ such that $|\A_s|=s$.  Then 
\[
\|[\x_t - \x_*^s - U^{\top}\left(U\x_t - \y\right)]_{\A_s}\|_2\geq  \sqrt{s}\|U^{\top}\e\|_\infty+ cD(\x_*, \x_*^s)+ c\sqrt{\frac{\tau + s\log[d/s]}{n}} \|\x_t - \x^s_*\|_2
\]
where we use the fact $[\x_*^s]_{\A_s}=0$. However, the above inequality contradicts to Lemma~\ref{lem:fund2}.  Since $\x_{t+1}$ is given by 
\begin{align*}
\x_{t+1} = sign(\xh_t)\left[\left|\x_t - U^{\top}(U\x_t- \y)\right| - \lambda_t\right]_+
\end{align*}
 Therefore, given the assumed value of $\lambda_t$,  $[\x_{t+1}]_{\Sb_*}$ has at most $s$ entries larger than zero. It concludes that $|\S_{t+1}\setminus \S_*|\leq s$. }
\end{proof}

Based on the above corollary, we can prove the following proposition that serves the key to prove the main theorem. 
\begin{prop}\label{thm:13}
Assume $|\S_t \setminus \S_*| \leq s$, $\|\x_t - \x^s_*\|_2 \leq \Delta_t$, and define
\begin{eqnarray}
\Lambda\triangleq \sqrt{s}\|U^{\top}\e\|_\infty + cD(\x_*, \x^s_*) \label{eqn:Lambda}
\end{eqnarray}
Let $\displaystyle \lambda_t = \frac{\Lambda + \eta\Delta_t}{\sqrt{s}}$. Then, with a probability $1 - 2e^{-\tau}$, we have
\[
|\S_{t+1} \setminus \S_*| \leq s, \quad \|\x_{t+1} - \x^s_*\|_2 \leq\Delta_{t+1}\triangleq (1+\sqrt{2})\eta\Delta_t +(1+\sqrt{2})\Lambda
\]
provided
\[
c\sqrt{\frac{\tau + s\log[d/s]}{n}} \leq \eta
\]
\end{prop}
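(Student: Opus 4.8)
The plan is to derive both conclusions on a single good event, namely the probability-$(1-2e^{-\tau})$ event on which Lemma~\ref{lem:fund2} holds; since Corollary~\ref{cor:11} is itself a consequence of that same event, no union bound is spent and the stated probability $1-2e^{-\tau}$ is exactly what comes out.

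First I would dispatch the support claim $|\S_{t+1}\setminus\S_*|\le s$ by checking that the prescribed $\lambda_t=(\Lambda+\eta\Delta_t)/\sqrt{s}$ satisfies the hypothesis of Corollary~\ref{cor:11}. Writing $\Lambda/\sqrt{s}=\|U^{\top}\e\|_\infty+cD(\x_*,\x^s_*)/\sqrt{s}$, the only thing left to verify is that $\eta\Delta_t/\sqrt{s}$ dominates $(c/\sqrt{s})\sqrt{(\tau+s\log[d/s])/n}\,\|\x_t-\x^s_*\|_2$, which is immediate from the assumption $\eta\ge c\sqrt{(\tau+s\log[d/s])/n}$ together with $\|\x_t-\x^s_*\|_2\le\Delta_t$. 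This step also yields $|\S_{t+1}\cup\S_*|\le 2s$, which I will need afterward.

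For the error estimate I would invoke Lemma~\ref{lem:fund} with the $s$-sparse vector $\x=\x^s_*$, obtaining $\|\x_{t+1}-\x^s_*\|_2^2\le\lambda_t\sqrt{s}\,\|\x_{t+1}-\x^s_*\|_2+|(\x_{t+1}-\x^s_*)^{\top}\v|$ with $\v=U^{\top}(U\x_t-\y)-(\x_t-\x^s_*)$. I expect the main obstacle to be bounding this inner product, because Lemma~\ref{lem:fund2} controls only $\|\v^s\|_2$, the norm of the $s$ largest entries of $\v$, whereas the inner product pairs all of $\v$ with a $2s$-sparse vector. The resolution is to use that $\x_{t+1}-\x^s_*$ is supported on $\S_{t+1}\cup\S_*$, a set of size at most $2s$: splitting that set into two pieces each of size at most $s$ and noting that the restriction of $\v$ to any index set of size at most $s$ has $\ell_2$ norm at most $\|\v^s\|_2$, one gets $\|[\v]_{\S_{t+1}\cup\S_*}\|_2\le\sqrt{2}\,\|\v^s\|_2$, and hence $|(\x_{t+1}-\x^s_*)^{\top}\v|\le\sqrt{2}\,\|\v^s\|_2\,\|\x_{t+1}-\x^s_*\|_2$ by Cauchy--Schwarz. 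Now Lemma~\ref{lem:fund2} gives $\|\v^s\|_2\le\sqrt{s}\|U^{\top}\e\|_\infty+cD(\x_*,\x^s_*)+c\sqrt{(\tau+s\log[d/s])/n}\,\|\x_t-\x^s_*\|_2\le\Lambda+\eta\Delta_t$, again using the $\eta$ condition and $\|\x_t-\x^s_*\|_2\le\Delta_t$.

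Finally I would combine everything: $\|\x_{t+1}-\x^s_*\|_2^2\le\lambda_t\sqrt{s}\,\|\x_{t+1}-\x^s_*\|_2+\sqrt{2}(\Lambda+\eta\Delta_t)\,\|\x_{t+1}-\x^s_*\|_2$; dividing through by $\|\x_{t+1}-\x^s_*\|_2$ (the case $\x_{t+1}=\x^s_*$ being trivial) and substituting $\lambda_t\sqrt{s}=\Lambda+\eta\Delta_t$ yields $\|\x_{t+1}-\x^s_*\|_2\le(1+\sqrt{2})(\Lambda+\eta\Delta_t)=(1+\sqrt{2})\eta\Delta_t+(1+\sqrt{2})\Lambda=\Delta_{t+1}$, which is the claimed bound.
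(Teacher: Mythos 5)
Your proposal is correct and follows essentially the same route as the paper: verify the $\lambda_t$ condition of Corollary~\ref{cor:11} for the support bound, apply Lemma~\ref{lem:fund} with $\x=\x^s_*$, and control the inner product via the $2s$-sparsity of $\x_{t+1}-\x^s_*$ together with the $\|\v^s\|_2\le\Lambda+\eta\Delta_t$ bound from Lemma~\ref{lem:fund2}, picking up the same factor $\sqrt{2}$. Your packaging of the split as $\|[\v]_{\S_{t+1}\cup\S_*}\|_2\le\sqrt{2}\|\v^s\|_2$ is just a reordering of the paper's two-term Cauchy--Schwarz argument, and the probability accounting matches.
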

\begin{proof}
It is easy to verify that the condition for $\lambda_t$ in Corollary~\ref{cor:11} is satisfied. Combining  with the fact that $\x_t$ is $2s$-sparse vector, we have $|\S_{t+1} \setminus \S_*| \leq s$ due to Corollary~\ref{cor:11}. Applying Lemma~\ref{lem:fund} with $\x=\x_*^s$, we have 
\begin{align}\label{eqn:22}
&\|\x_{t+1} - \x^s_*\|_2^2 \leq \lambda_t\sqrt{s}\left\|\x_{t+1} - \x^s_*\right\|_2 + |(\x_{t+1} - \x^s_*)^{\top}(U^{\top}\left(U\x_t - \y\right) - (\x_t -\x_*^s))|
\end{align}
According to Lemma~\ref{lem:fund2}, with a probability $1 - 2e^{-\tau}$, we have
\begin{align}\label{eqn:lU}
\left\|\left(U^{\top}\left(U\x_t - \y\right) - (\x_t - \x^s_*)\right)^s\right\|_2 \leq \Lambda + \eta \Delta_t.
\end{align}
Thus
\begin{align*}
& | (\x_{t+1} - \x^s_*)^{\top}\left(U\left(U^{\top}\x_t - \y\right) - (\x_t - \x^s_*)\right)| \\
& \leq \left| [\x_{t+1} - \x^s_*]_{\S_*}^{\top}\left[U\left(U^{\top}\x_t - \y\right) - (\x_t - \x^s_*)\right]_{\S_*}\right| \\
& + \left| [\x_{t+1} - \x^s_*]_{\S_{t+1}\setminus\S_*}^{\top}\left[U\left(U^{\top}\x_t - \y\right) - (\x_t - \x^s_*)\right]_{\S_{t+1}\setminus\S_*}\right| \\
 &\leq \left(\left\|[\x_{t+1} - \x^s_*]_{\S_*}\right\|_2 + \left\|[\x_{t+1} - \x^s_*]_{\S_{t+1}\setminus\S_*}\right\|_2\right)(\Lambda +\eta\Delta_t) \\
& \leq  \sqrt{2}(\Lambda+\eta\Delta_t)\|\x_{t+1} - \x^s_*\|_2
\end{align*}
where we use the fact that $|\S_*|\leq s$ and $|\S_{t+1}\setminus\S_*|\leq s$ and inequality in~(\ref{eqn:lU}), and the last inequality uses the fact $a + b \leq \sqrt{2(a^2 + b^2)}$.  Combining the above inequality with~(\ref{eqn:22}), with a probability $1 - 2e^{-\tau}$, we have
\begin{align*}
&\|\x_{t+1} - \x^s_*\|_2^2 \leq \left(\lambda_t\sqrt{s} + \sqrt{2}\eta\Delta_t+\sqrt{2}\Lambda\right)\|\x_{t+1} - \x^s_*\|_2 \\
&\leq[(1+\sqrt{2})\eta\Delta_t+(1+\sqrt{2})\Lambda] \|\x_{t+1} - \x^s_*\|_2,
\end{align*}
Therefore 
\[
\|\x_{t+1} - \x^s_*\|_2\leq (1+\sqrt{2})\eta\Delta_t +(1+\sqrt{2}) \Lambda
\]
\end{proof}
\begin{proof}[Proof of Theorem~\ref{thm:nearly-sparse}]
Following Proposition~\ref{thm:13} and by induction, we can prove for any $t$, we have with a probability $1-2te^{-\tau}$
\begin{align*}
&|\S_{t+1}\setminus\S_*|\leq s\\
&\|\x_{t+1} - \x_*^s\|_2\leq \Delta_{t+1}
\end{align*}
Since $\Delta_{t+1} = \gamma\Delta_t + (1+\sqrt{2})\Lambda$, we have
\begin{align*}
\Delta_{t+1}&\leq \gamma^t\Delta_1 + \frac{1-\gamma^t}{1-\gamma}(1+\sqrt{2})\Lambda\leq\gamma^t\Delta_1 + \frac{1}{1-\gamma}(1+\sqrt{2})\Lambda
\end{align*}
Let $t=T_0$ such that $\gamma^{T_0}\Delta_1\leq \Lambda/(1-\gamma)$, we then have
\[
\|\x_{T_0+1}-\x^s_*\|_2\leq \frac{\sqrt{2}(1+\sqrt{2})\Lambda}{1-\gamma}
\]
with a probability $1-2T_0e^{-\tau}$, which completes the proof of Theorem~\ref{thm:nearly-sparse}. 
\end{proof}
\subsection{Proof of Theorem~\ref{thm:11}}
We first state two theorems that are central to our analysis. Theorem~\ref{thm:5} reveals that the recovery error of Algorithm~\ref{alg:3} will decrease by a constant factor at the beginning, and Theorem~\ref{thm:6} shows that the recovery error will keep small in the later stage. We abuse the notation $\Delta_t$.  
\begin{thm} \label{thm:5}
Let $\Delta_1 \geq \|\x_*^s\|_2$ be a constant, $\gamma=2(1+\sqrt{2})\eta$, and $\{\x_{t}, t=1,\ldots\}$ be solutions generated by Algorithm~\ref{alg:3}. 
Assume $|\S_t \setminus \S_*| \leq s$, $\|\x_t - \x_*^{s}\| \leq \Delta_t$, and $\Lambda \leq \eta\Delta_t $. Then, with a probability at least $1 - 2e^{-\tau}$, we have
\[
|\S_{t+1} \setminus \S_*| \leq s \textrm{ and } \|\x_{t+1} - \x_*^{s}\|_2 \leq \Delta_{t+1}{ \triangleq} \gamma\Delta_{t}
\]
provided the condition in Theorem~\ref{thm:11} is true.
\end{thm}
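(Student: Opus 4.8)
The plan is to mirror the argument used for Proposition~\ref{thm:13}, exploiting the hypothesis $\Lambda\le\eta\Delta_t$ to absorb the ``noise'' term $\Lambda$ into $\eta\Delta_t$, so that effectively the regularization parameter $\lambda_t=2\eta\Delta_t/\sqrt s$ plays the role of the parameter $(\Lambda+\eta\Delta_t)/\sqrt s$ from Algorithm~\ref{alg:2} up to a factor of $2$. First I would check that the chosen $\lambda_t$ satisfies the lower bound required by Corollary~\ref{cor:11}: using $\|\x_t-\x^s_*\|_2\le\Delta_t$, the right-hand side of~(\ref{eqn:lamdat}) is at most $\Lambda/\sqrt s+\eta\Delta_t/\sqrt s\le 2\eta\Delta_t/\sqrt s=\lambda_t$, where the last inequality uses $\Lambda\le\eta\Delta_t$ and $c\sqrt{(\tau+s\log[d/s])/n}\le\eta$ from the hypothesis of Theorem~\ref{thm:11}. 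Hence Corollary~\ref{cor:11} applies and, with probability $1-2e^{-\tau}$, $|\S_{t+1}\setminus\S_*|\le s$; combined with $|\S_t\setminus\S_*|\le s$ this also gives that $\x_{t+1}$ and $\x_t$ are $2s$-sparse supported near $\S_*$, which is what we need for the Cauchy--Schwarz steps below.

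Next I would invoke Lemma~\ref{lem:fund} with $\x=\x^s_*$ to get
\[
\|\x_{t+1}-\x^s_*\|_2^2\le \lambda_t\sqrt s\,\|\x_{t+1}-\x^s_*\|_2 + |(\x_{t+1}-\x^s_*)^{\top}(U^{\top}(U\x_t-\y)-(\x_t-\x^s_*))|,
\]
and then bound the inner-product term exactly as in the proof of Proposition~\ref{thm:13}: split the coordinates of $\x_{t+1}-\x^s_*$ into the parts indexed by $\S_*$ and by $\S_{t+1}\setminus\S_*$ (both of size $\le s$), apply Cauchy--Schwarz on each piece against the restriction of $U^{\top}(U\x_t-\y)-(\x_t-\x^s_*)$, use Lemma~\ref{lem:fund2} (valid on the same high-probability event) to bound those restricted norms by $\Lambda+\eta\Delta_t$, and combine the two pieces with $a+b\le\sqrt{2(a^2+b^2)}$. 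This yields
\[
\|\x_{t+1}-\x^s_*\|_2^2\le\bigl(\lambda_t\sqrt s+\sqrt2(\Lambda+\eta\Delta_t)\bigr)\|\x_{t+1}-\x^s_*\|_2,
\]
so $\|\x_{t+1}-\x^s_*\|_2\le 2\eta\Delta_t+\sqrt2(\Lambda+\eta\Delta_t)$. Using $\Lambda\le\eta\Delta_t$ once more gives $\|\x_{t+1}-\x^s_*\|_2\le(2+2\sqrt2)\eta\Delta_t=\gamma\Delta_t=\Delta_{t+1}$, as claimed.

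The only genuinely delicate point is bookkeeping the high-probability event: both Corollary~\ref{cor:11} and the inner-product bound rely on the single event of Lemma~\ref{lem:fund2} holding for the current iterate $\x_t$, which has probability $1-2e^{-\tau}$, so the per-step statement is clean; the accumulation of these events across $t$ (and the extra constant in front of $T$ in Theorem~\ref{thm:11}) is handled later when Theorems~\ref{thm:5} and~\ref{thm:6} are stitched together by a union bound. I would also double-check that the condition $\eta\le\frac{1}{2(1+\sqrt3)}$ is not actually needed here --- it is used in Theorem~\ref{thm:6} and for the $s$-sparse truncation --- so for Theorem~\ref{thm:5} the weaker $\eta<\frac{1}{2(1+\sqrt2)}$ (i.e.\ $\gamma<1$) suffices, but stating it under ``the condition in Theorem~\ref{thm:11}'' is harmless. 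The main obstacle, such as it is, is simply making sure the factor-of-$2$ slack between $\lambda_t$ here and in Algorithm~\ref{alg:2} is consistently tracked so that both the support bound and the contraction come out with the same constant $\gamma=2(1+\sqrt2)\eta$.
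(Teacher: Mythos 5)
Your proposal is correct and follows essentially the same route as the paper: the paper's own proof is a one-line remark that $\lambda_t = 2\eta\Delta_t/\sqrt{s} \geq (\Lambda+\eta\Delta_t)/\sqrt{s}$ (using $\Lambda\leq\eta\Delta_t$) so that the argument of Proposition~\ref{thm:13} carries over verbatim, which is exactly the reduction you perform. Your explicit arithmetic $2\eta\Delta_t+\sqrt{2}(\Lambda+\eta\Delta_t)\leq(2+2\sqrt{2})\eta\Delta_t=\gamma\Delta_t$ and your verification of the hypothesis of Corollary~\ref{cor:11} correctly fill in the details the paper leaves implicit.
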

\begin{thm} \label{thm:6}
Let  $\{\x_{t}, t=1,\ldots\}$ be solutions generated by Algorithm~\ref{alg:3}. Assume $|\S_t | \leq 2s$, $\|\x_t - \x_*^{s}\| \leq  \Lambda/\eta $, and $\Lambda > \eta \Delta_t$. If $|\S_{t+1}| \leq 2s$, then with a probability at least  $1 - 2e^{-\tau}$, we have
\[
\quad \|\x_{t+1} - \x_*^{s}\|_2 \leq  2 (1 + \sqrt{3}) \Lambda\leq \Lambda/\eta
\]
provided the condition in Theorem~\ref{thm:11} is true.
\end{thm}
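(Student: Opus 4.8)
The plan is to prove Theorem~\ref{thm:6} as a single-iteration estimate, following the proof of Proposition~\ref{thm:13} but carrying the weaker support hypothesis $|\S_t|\le 2s$ (in place of $|\S_t\setminus\S_*|\le s$). Throughout I adopt the convention $\Delta_t=\gamma^{t-1}\Delta_1$, so that the regularization parameter of Algorithm~\ref{alg:3} satisfies $\lambda_t\sqrt s = 2\eta\Delta_t$; the hypothesis $\Lambda>\eta\Delta_t$ then yields the clean bound $\lambda_t\sqrt s<2\Lambda$, which is what makes the $\lambda_t$-term disappear into $\Lambda$.

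First I would apply Lemma~\ref{lem:fund} with $\x=\x^s_*$ (an $s$-sparse vector) to the iterate $\x_{t+1}$ generated by Algorithm~\ref{alg:3}, obtaining $\|\x_{t+1}-\x^s_*\|_2^2\le \lambda_t\sqrt s\,\|\x_{t+1}-\x^s_*\|_2 + |(\x_{t+1}-\x^s_*)^\top\w|$, where $\w = U^\top(U\x_t-\y)-(\x_t-\x^s_*)$. Next I would control $\|\w^s\|_2$ via Lemma~\ref{lem:fund2} (invoked here under $|\S_t|\le 2s$, which only inflates the universal constant): it gives $\|\w^s\|_2\le \Lambda + c\sqrt{(\tau+s\log[d/s])/n}\,\|\x_t-\x^s_*\|_2\le \Lambda+\eta\|\x_t-\x^s_*\|_2\le 2\Lambda$, using the standing condition $c\sqrt{(\tau+s\log[d/s])/n}\le\eta$ and then the hypothesis $\|\x_t-\x^s_*\|_2\le\Lambda/\eta$. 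Finally I would bound the inner product by splitting the support of $\x_{t+1}-\x^s_*$, which lies in $\S_*\cup(\S_{t+1}\setminus\S_*)$: here $|\S_*|\le s$ but $|\S_{t+1}\setminus\S_*|\le 2s$ (using the hypothesis $|\S_{t+1}|\le 2s$), so coordinate-wise Cauchy--Schwarz together with $\|[\w]_{\S_*}\|_2\le\|\w^s\|_2$ and $\|[\w]_{\S_{t+1}\setminus\S_*}\|_2\le\sqrt2\|\w^s\|_2$ yields $|(\x_{t+1}-\x^s_*)^\top\w|\le\sqrt{1+2}\,\|\w^s\|_2\,\|\x_{t+1}-\x^s_*\|_2=\sqrt3\,\|\w^s\|_2\,\|\x_{t+1}-\x^s_*\|_2$. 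This $\sqrt3$ replaces the $\sqrt2$ of Proposition~\ref{thm:13}, the extra factor being exactly the cost of the $2s$-support hypothesis.

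Dividing the displayed inequality by $\|\x_{t+1}-\x^s_*\|_2$ and substituting the three bounds gives $\|\x_{t+1}-\x^s_*\|_2\le \lambda_t\sqrt s + \sqrt3\|\w^s\|_2 < 2\Lambda + 2\sqrt3\Lambda = 2(1+\sqrt3)\Lambda$, and the inequality $2(1+\sqrt3)\Lambda\le\Lambda/\eta$ is precisely the hypothesis $\eta\le\frac{1}{2(1+\sqrt3)}$. The only randomness used is the single application of Lemma~\ref{lem:fund2}, which accounts for the probability $1-2e^{-\tau}$.

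The main obstacle — really the only thing that differs from the proof of Proposition~\ref{thm:13} — is the support bookkeeping around $|\S_t|\le 2s$. One must verify that Lemma~\ref{lem:fund2} survives relaxing $|\S_t\setminus\S_*|\le s$ to $|\S_t|\le 2s$ (the concentration argument is unchanged, but the ambient sparsity of $\x_t-\x_*$ grows from $2s$ to $3s$, harmlessly absorbed into $c$; cf. Remark 10), and that the conclusion $|\S_{t+1}|\le 2s$ — supplied as a hypothesis and enforced by the break step of Algorithm~\ref{alg:3} — is exactly what legitimizes the cross-term splitting that produces the $\sqrt3$ factor.
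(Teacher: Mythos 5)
Your proposal is correct and follows essentially the same route as the paper's proof: Lemma~\ref{lem:fund} with $\x=\x_*^s$, Lemma~\ref{lem:fund2} under the inflated $3s$-sparsity of $\x_t-\x_*^s$ to get $\|\w^s\|_2\le 2\Lambda$, the $\sqrt{3}$ cross-term factor from the $3s$-sparse support of $\x_{t+1}-\x_*^s$ (the paper splits it into three $s$-sparse pieces, you split it into $\S_*$ and $\S_{t+1}\setminus\S_*$ with weights $1$ and $\sqrt 2$ — same constant), and $\lambda_t\sqrt s=2\eta\Delta_t<2\Lambda$ to conclude.
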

\begin{proof}[Proof of Theorem~\ref{thm:11}]
Let
\[
k=\min \left\{t:\Lambda > \eta\Delta_t \right\}.
\]
Assume $k\geq 1$, otherwise Theorem~\ref{thm:11} holds with $T=1$. 
In the following, we consider two cases $T<k$ and $T\geq k$, { where $T$ is the input to the algorithm}. 
\paragraph{$T < k$.}  Since the condition $\Lambda \leq \eta\Delta_t$ holds for $t=1,\ldots,T$, we can apply Theorem~\ref{thm:5} to bound the recovery error in each iteration. Thus, with a probability at least $1 - 2Te^{-\tau}$, we have
 \[
 \|\xh - \x_*^{s}\|_2 = \|\x_{T+1}-\x_*^{s}\|_2 \leq  \Delta_{T+1} = \gamma^T\Delta_1.
 \]

\paragraph{$T \geq k$.} From the above analysis, with a probability at least $1 - 2(k-1)e^{-\tau}$, we have $\|\x_{k}-\x_*^{s}\|_2  \leq \Delta_k$ and $|\S_k \setminus \S_*| \leq s$, which also means our algorithm arrives at the $k$-th iteration. In the $k$-th iteration, there will be two cases: $|\S_{k+1}| > 2s$ and $|\S_{k+1}| \leq 2s$. For the first case, our algorithm terminates, and return $\x_k$ as the final solution implying
\[
\|\xh - \x_*^{s}\|_2 = \|\x_{k}-\x_*^{s}\|_2   \leq \Delta_k \leq \Lambda/\eta.
\]
For the second case, Algorithm~\ref{alg:3} keeps running, and we can bound the recovery error  by Theorem~\ref{thm:6}. In particular, if at $T'\geq k$ $|\S_{T'+1}| > 2s$, our algorithm terminates and returns $\x_{T'}$ as the final solution, which implies $|\S_{t}|<2s, t\leq T'$. By applying induction of Theorem~\ref{thm:6} from $t=k$, we can have
\[
\|\xh - \x_*^{s}\|_2 = \|\x_{T'}-\x_*^{s}\|_2   \leq  2 (1 + \sqrt{3}) \Lambda\leq \Lambda/\eta.
\]
\end{proof}
\begin{proof}[Proof of Theorem~\ref{thm:5}]
The proof is very similar to that of Theorem~\ref{thm:13} by noting that $\lambda_t =2\eta\Delta_t/\sqrt{s}> (\Lambda + \eta\Delta_t)/\sqrt{s}$ satisfies the condition in Corollary~\ref{cor:11}. 
\end{proof}
\begin{proof}[Proof of Theorem~\ref{thm:6}]
First we note that $\x_t - \x_*^s$ is at most $3s$-sparse (at most $3s$ elements in $\x_t - \x_*$ are non-zeros). With a slight change of the universal constant, we still have Lemma~\ref{lem:fund2} (c.f. the proof of Lemma~\ref{lem:fund2} in the appendix). Then, with a probability at least $1 - 2e^{-\tau}$, we have
\[
\begin{split}
 & \left\|\left[U\left(U^{\top}\x_t - \y\right) - (\x_t - \x_*^{s})\right]^{s}\right\|_2 \leq   \Lambda+ c\sqrt{\frac{\tau+ s\log(d/s)}{m}} \|\x_t - \x_*^{s}\|_2\leq  \Lambda +\eta \|\x_t - \x_*^{s}\|_2 \leq 2  \Lambda.
\end{split}
\]
Notice that $\x_{t+1} - \x_*^{s}$ is $3s$-sparse in this case, and we can verify that
\[
\begin{split}
 & \left|(\x_{t+1} - \x_*^{s})^{\top}\left(U\left(U^{\top}\x_t - \y\right) - (\x_t - \x_*^{s})\right) \right| \leq  2 \sqrt{3} \Lambda \|\x_{t+1} - \x_*^{s}\|_2.
\end{split}
\]
{To see this, we can split $\x_{t+1} - \x_*^s = \a + \b + \c$ into three components each with at most $s$ non-zero entries and non-overlapping support. Then 
\begin{align*}
 &\left|(\x_{t+1} - \x_*^{s})^{\top}\left(U\left(U^{\top}\x_t - \y\right) - (\x_t - \x_*^{s})\right) \right| \\
 &\leq  (\|\a\|_2  + \|\b\|_2 + \|\c\|_2)\left\|\left[U\left(U^{\top}\x_t - \y\right) - (\x_t - \x_*^{s})\right]^{s}\right\|_2\\
 &\leq  (\|\a\|_2  + \|\b\|_2 + \|\c\|_2)2\Lambda\leq 2\sqrt{3}\Lambda \|\x_{t+1} - \x_*^s\|_2
\end{align*}
where we use the fact $a + b + c\leq \sqrt{3(a^2+b^2+c^2)}$. } Applying Lemma~\ref{lem:fund} with $\x=\x^s_*$, we have, with a probability at least $1 - 2e^{-\tau}$,
{\begin{align*}
\|\x_{t+1} - \x^s_*\|^2_2&\leq \lambda_t\sqrt{s}\|\x_{t+1} - \x^s_*\|_2 + |(\x_{t+1} - \x^s_*)^{\top}(U^{\top}(U\x_t-\y) -( \x_t - \x^s_*))|\\
&\leq \left(\lambda_t \sqrt{s} + 2 \sqrt{3} \Lambda  \right)\|\x_{t+1} - \x_*^{s}\|_2 \leq  2(1 + \sqrt{3})  \Lambda  \|\x_{t+1} - \x_*^{s}\|_2,
\end{align*}
where we use the fact $\eta = \frac{2\eta\Delta_t}{\sqrt{s}}\leq \frac{2\Lambda}{\sqrt{t}}$ due to the assumption $\Lambda>\eta\Delta_t$. }
Thus
\[
\|\x_{t+1} - \x_*^{s}\|_2 \leq 2(1 + \sqrt{3}) \Lambda .
\]
\end{proof}


\section{Numerical Simulations}
In this section, we conduct numerical simulations to verify the proposed algorithms and the developed analysis, and also compare with previous algorithms. 

\paragraph{Verifying Theorem~\ref{thm:nearly-sparse}}
We first conduct some empirical studies to verify the results in Theorem~\ref{thm:nearly-sparse}. We generate a measurement  matrix $U\in \R^{n\times d}$ such that each element follows an i.i.d distribution $\mathcal N(0, 1/n)$. To generate a $s$-sparse target signal, we sample the non-zeros elements from standard normal distribution followed by $\ell_2$ norm normalization. To generate a nearly sparse target signal, we set the $i$-th element of $\x_*$ to $i^{-1}$ followed by $\ell_2$ norm normalization. The noise vector $\e$ is drawn from uniform distribution $[-\sigma, \sigma]$. We run Algorithm~\ref{alg:2} with hundreds of iterations and plot error/sparsity versus the number of iterations  in Figure~\ref{fig:1}, Figure~\ref{fig:2} and Figure~\ref{fig:3} for $n=2000, d=10000, s=20, \sigma=0.001$ under three different settings, respectively.  The value of  $\Delta_1$ is set to $\|\x^s_*\|_2$, the value of $\eta$ is set to $0.4, 0.4, 0.3$ for three different settings, respectively, and the value of $\Lambda$ is set to $0, \sqrt{s}\|U^{\top}\e\|_\infty$ and $\sqrt{s}\|U^{\top}\e\|_\infty + \eta\|\x_* - \x^s_*\|_2$, respectively. These results clearly validate Theorem~\ref{thm:nearly-sparse}.

\begin{figure}[t]
\centering 
\subfigure[error]{\includegraphics[scale=0.26]{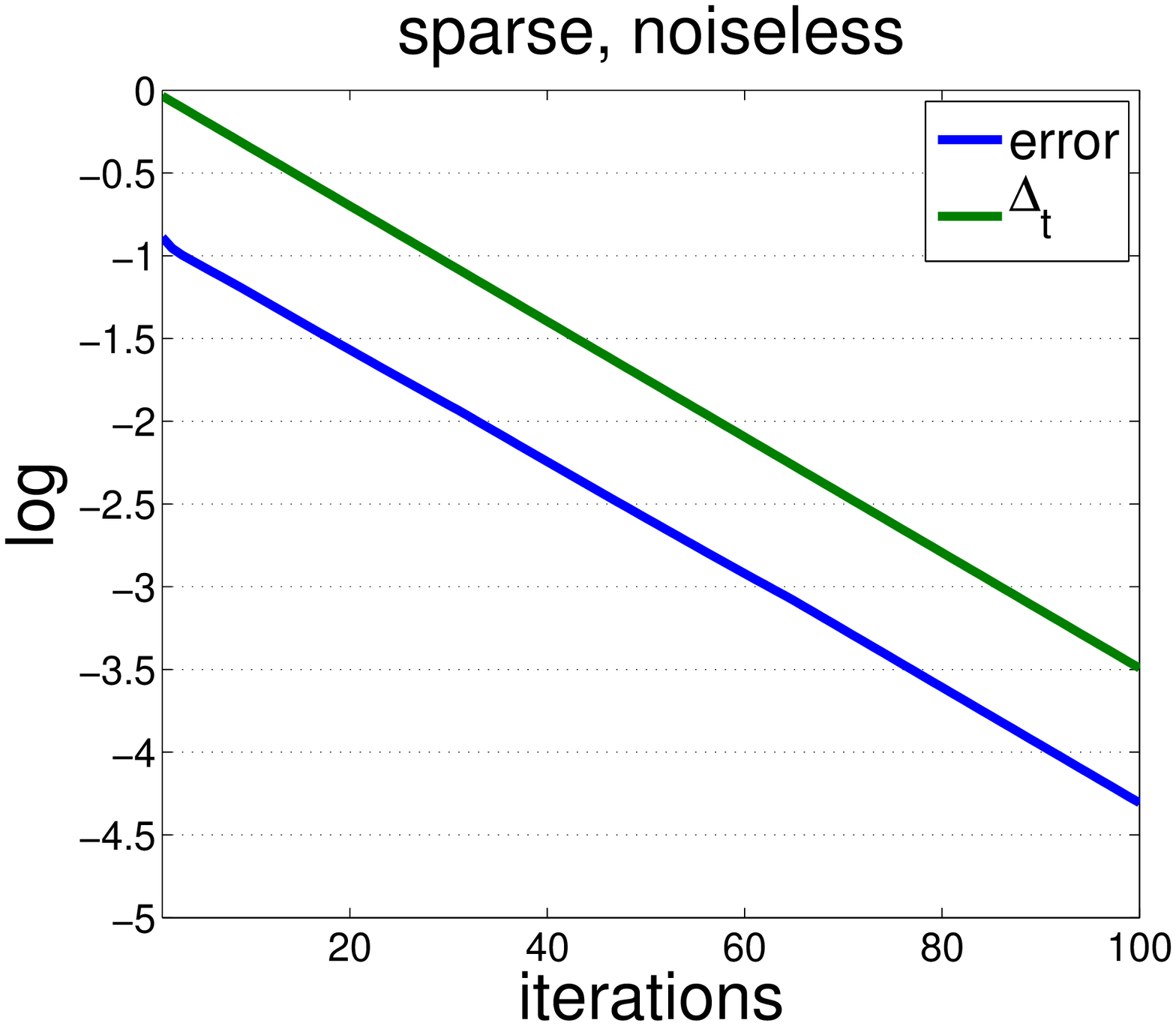}}
\subfigure[sparsity]{\includegraphics[scale=0.26]{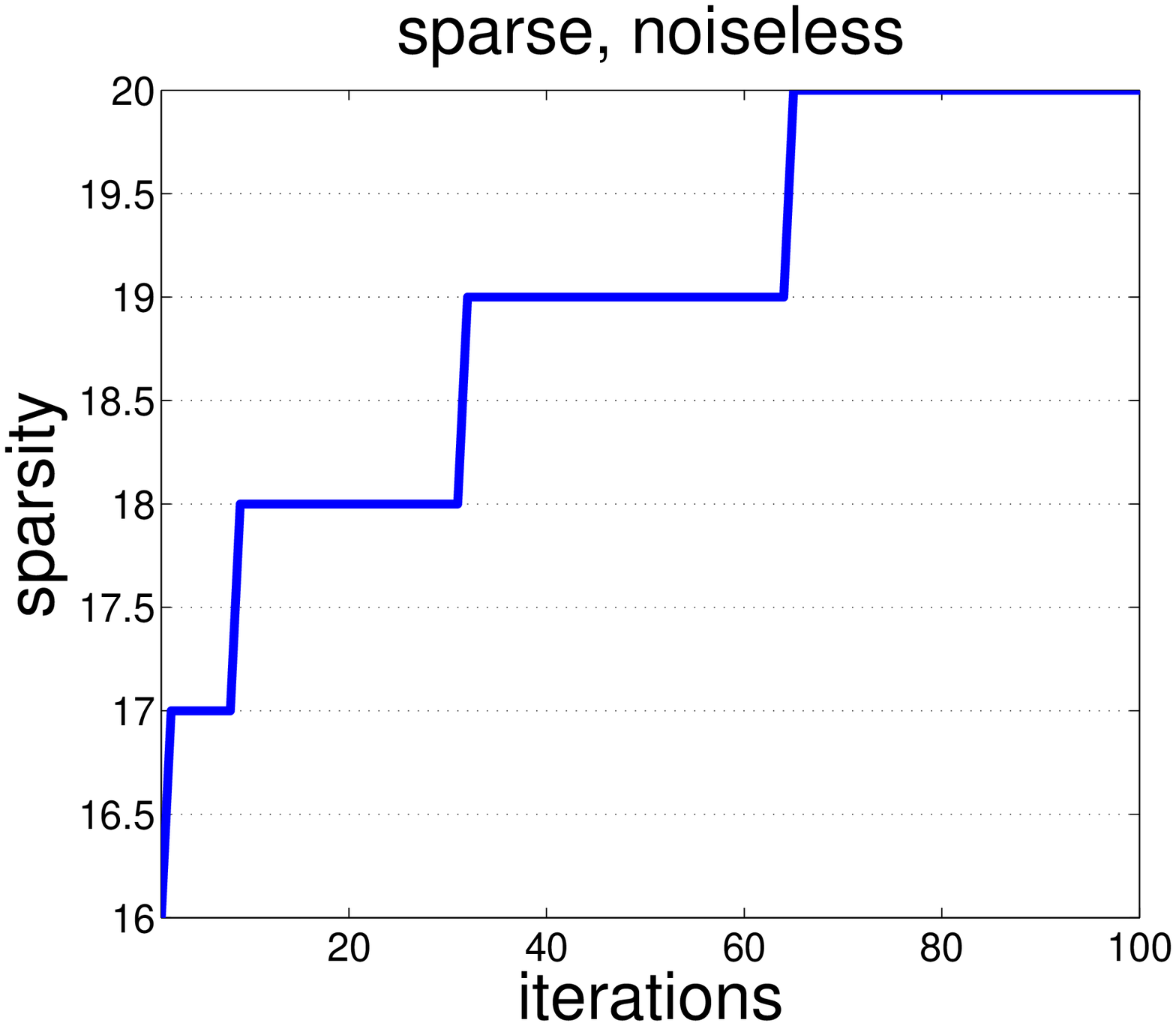}}
\caption{Recovery error and sparsity versus iterations in setting I. }\label{fig:1}
\subfigure[error]{\includegraphics[scale=0.26]{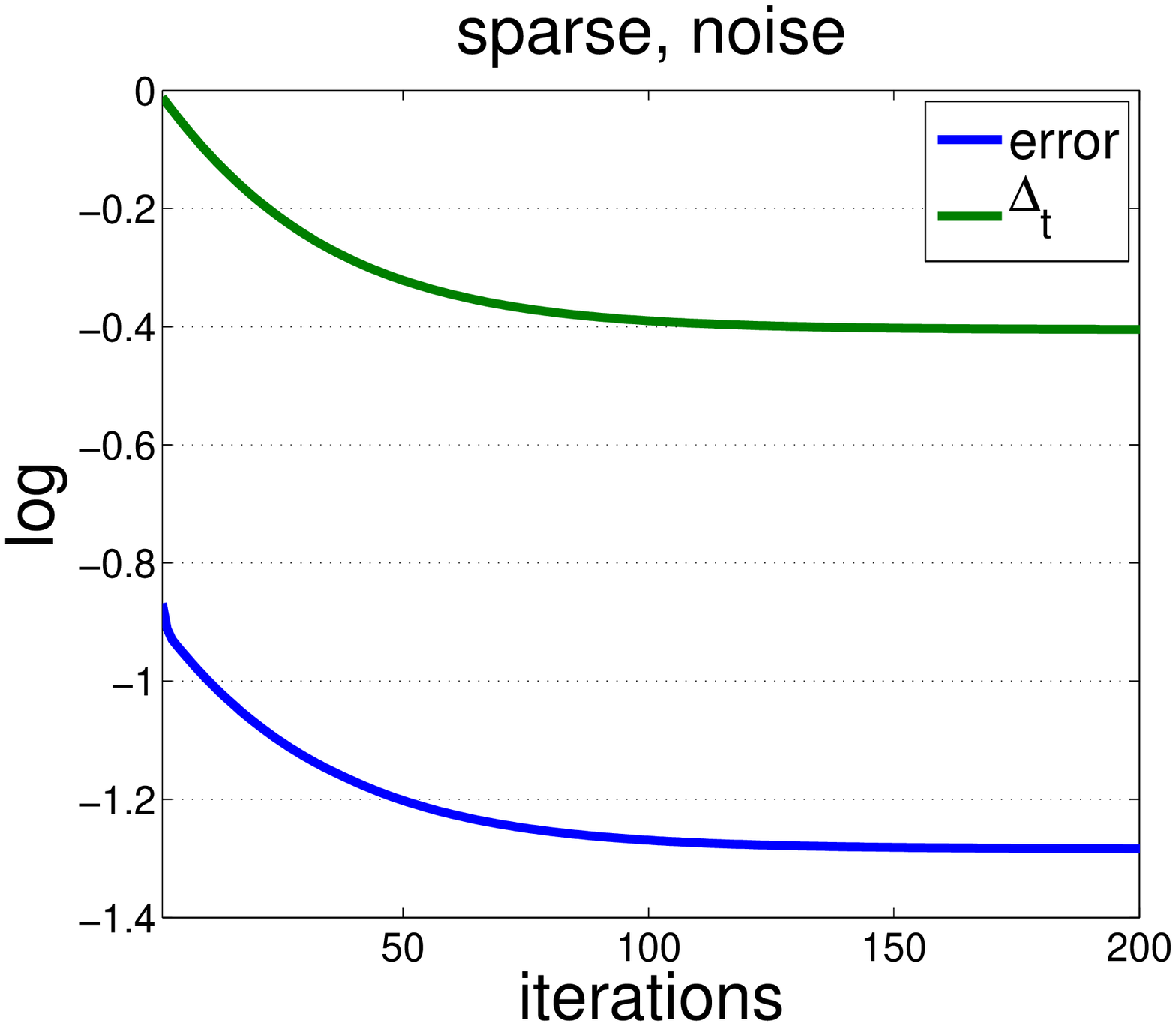}}\hspace*{-0.1in}
\subfigure[sparsity]{\includegraphics[scale=0.26]{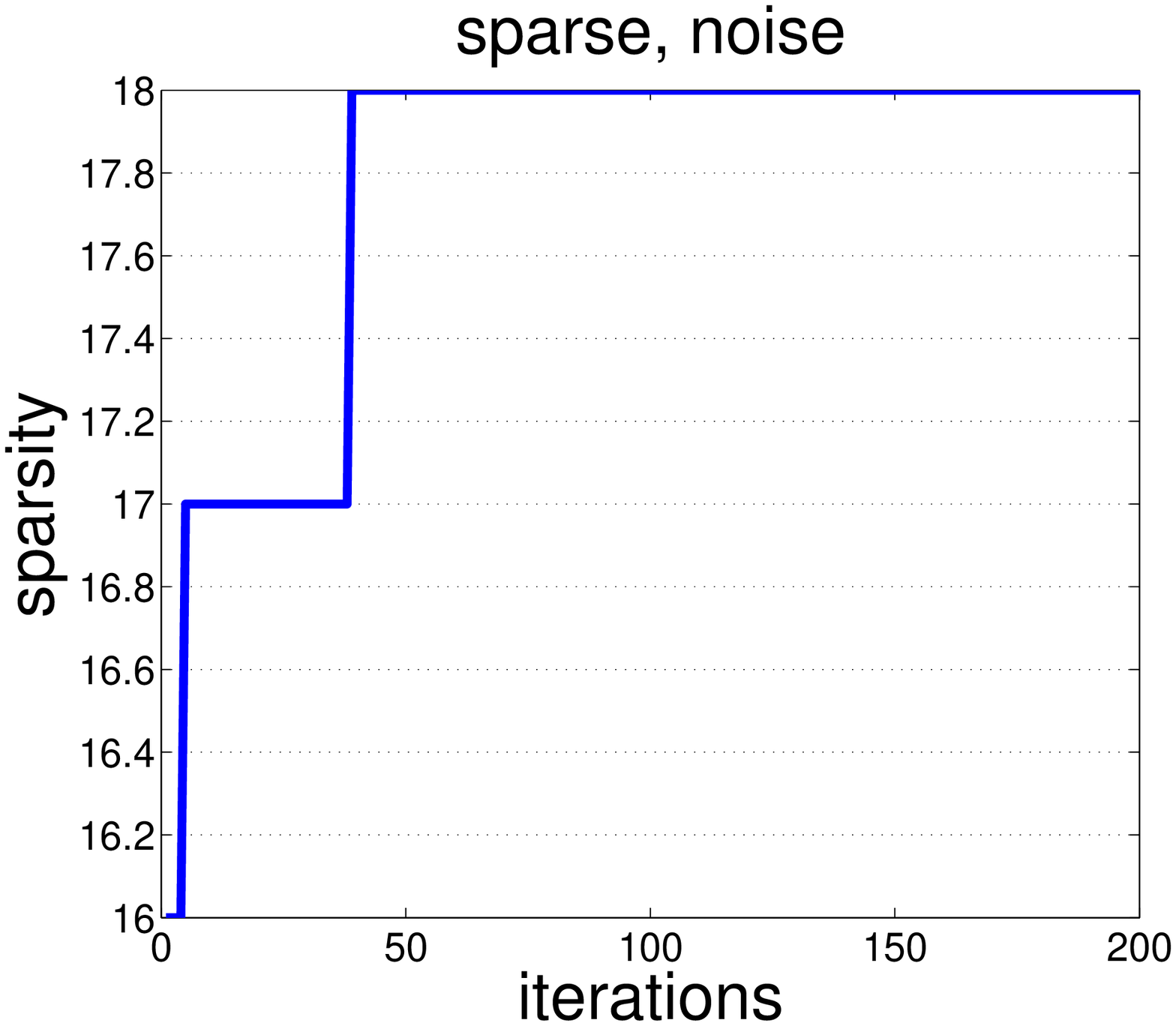}}
\caption{Recovery error and sparsity versus iterations in setting II. }\label{fig:2}
\centering 
\subfigure[error]{\includegraphics[scale=0.26]{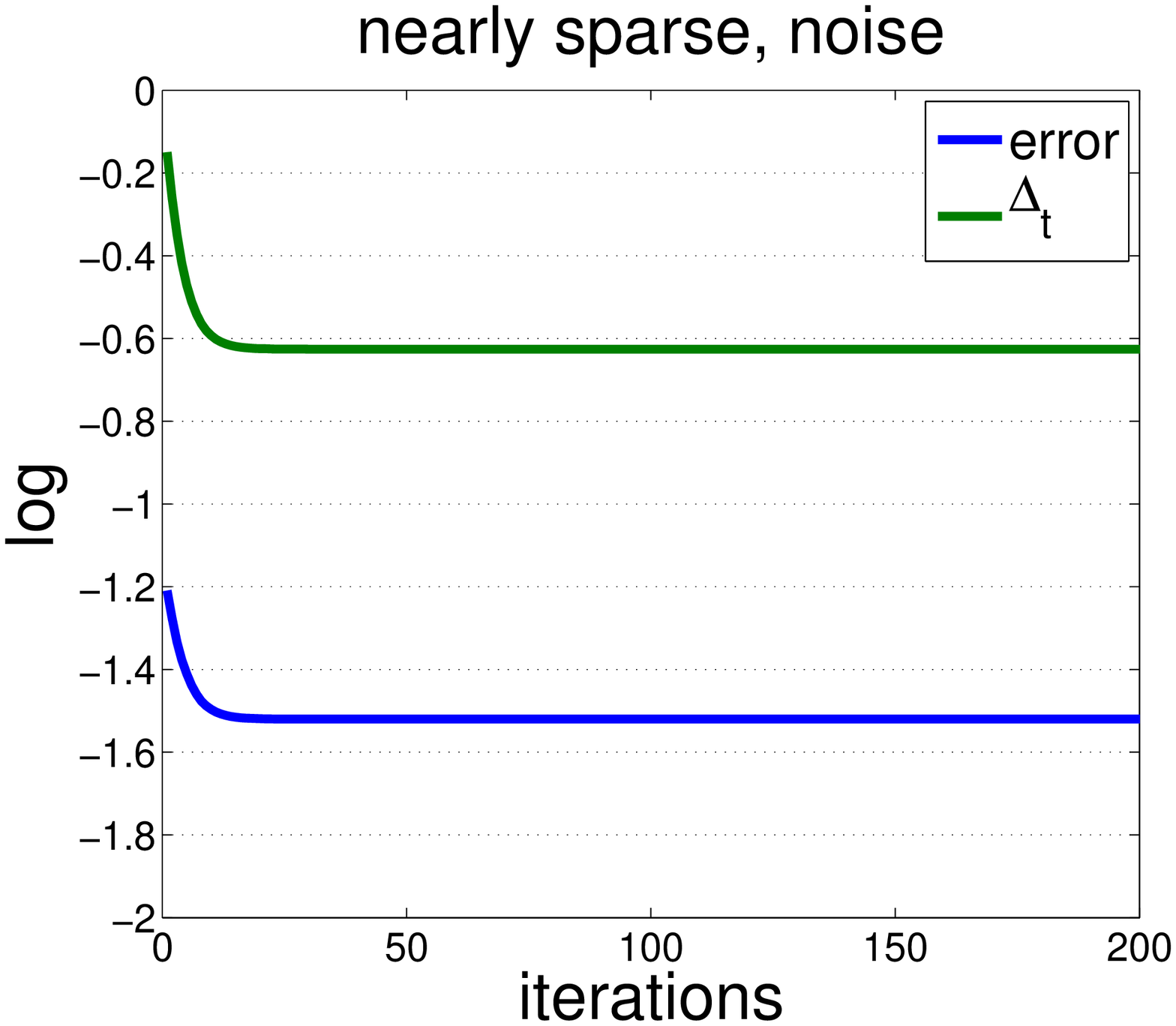}}
\subfigure[sparsity]{\includegraphics[scale=0.26]{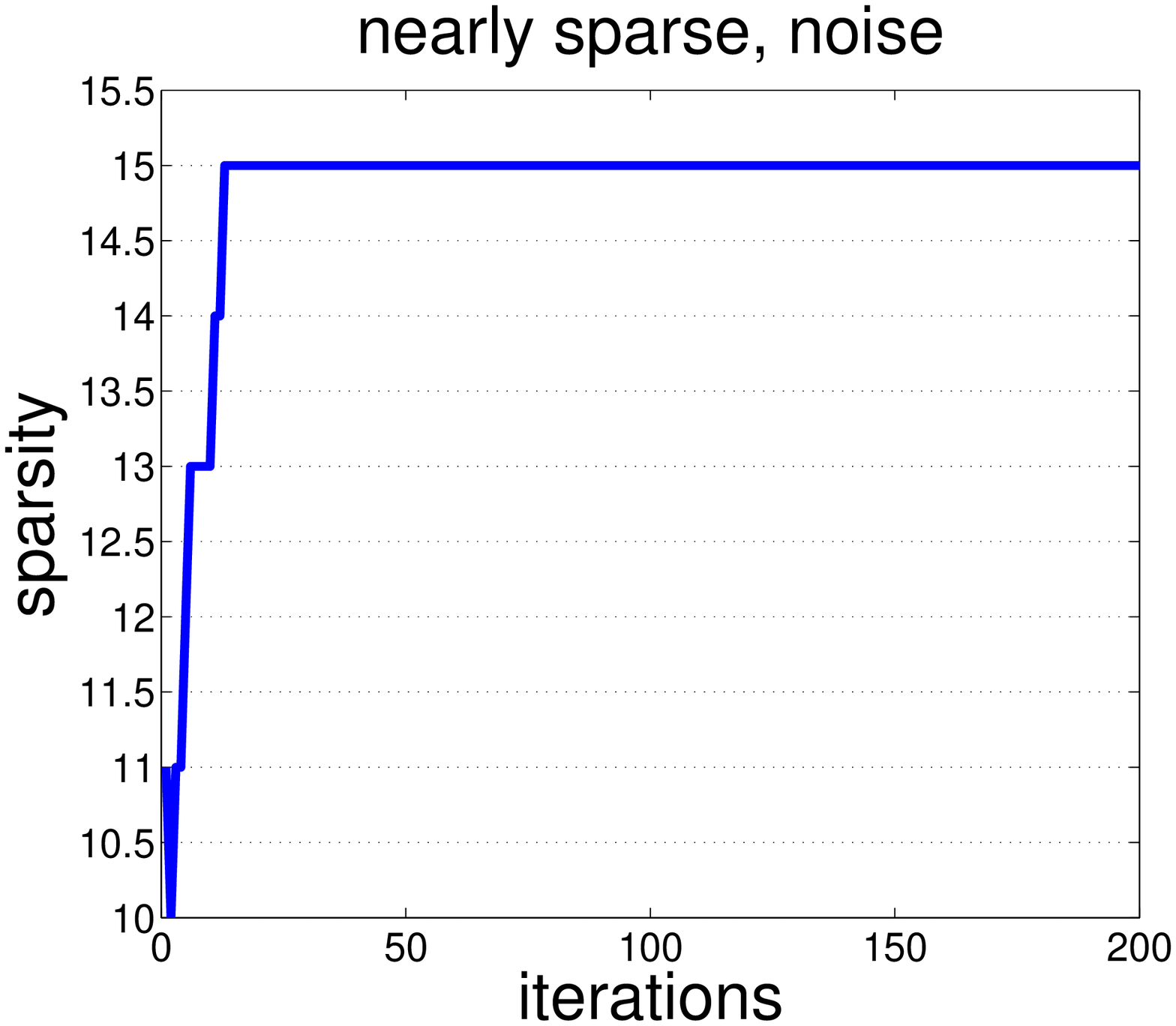}}
\caption{Recovery error and sparsity versus iterations in setting III. }\label{fig:3}
\end{figure}

\paragraph{Varying $\eta$}
We conduct more experiments to demonstrate that the robustness of the proposed HPM algorithm (Algorithm~\ref{alg:2}) with respect to the value of $\eta$. The data is generated similarly as before with $n=1000, d=10000, s=20, \sigma=0.001$. The results are shown in Figure~\ref{fig:4}, Figure~\ref{fig:5} and Figure~\ref{fig:6} for different values of $\eta$, not exceeding its upper limit $\sqrt{2}-1=0.414$. The smallest value of $\eta$ in each Figure is the smallest one~\footnote{We start a value of $\eta=0.41$ and decrease by $0.1$ until we observe divergence.} that guarantees  convergence. From these results, we have several interesting observations: (i) from noise to noiseless observations and from sparse signal to nearly sparse signal, the algorithm becomes more robust to smaller values of $\eta$ and less robust to larger values of $\eta$. For example in setting I, the smallest value of $\eta$ that guarantees convergence is $0.32$, but when adding some noise to the observations, the smallest value of $\eta$ reduces to $0.3$. However, the value $\eta=0.41$ which originally works for noiseless observations will cause the algorithm not to converge in setting II. (ii)  As long as convergence is observed, a smaller value of $\eta$ yields  faster convergence in all cases and  more accurate recovery in settings II and III. (ii) Even though the sparsity of intermediate solutions exceeds $2s$, the algorithm still converges.

\begin{figure}[t]
\centering 
\subfigure[error]{\includegraphics[scale=0.26]{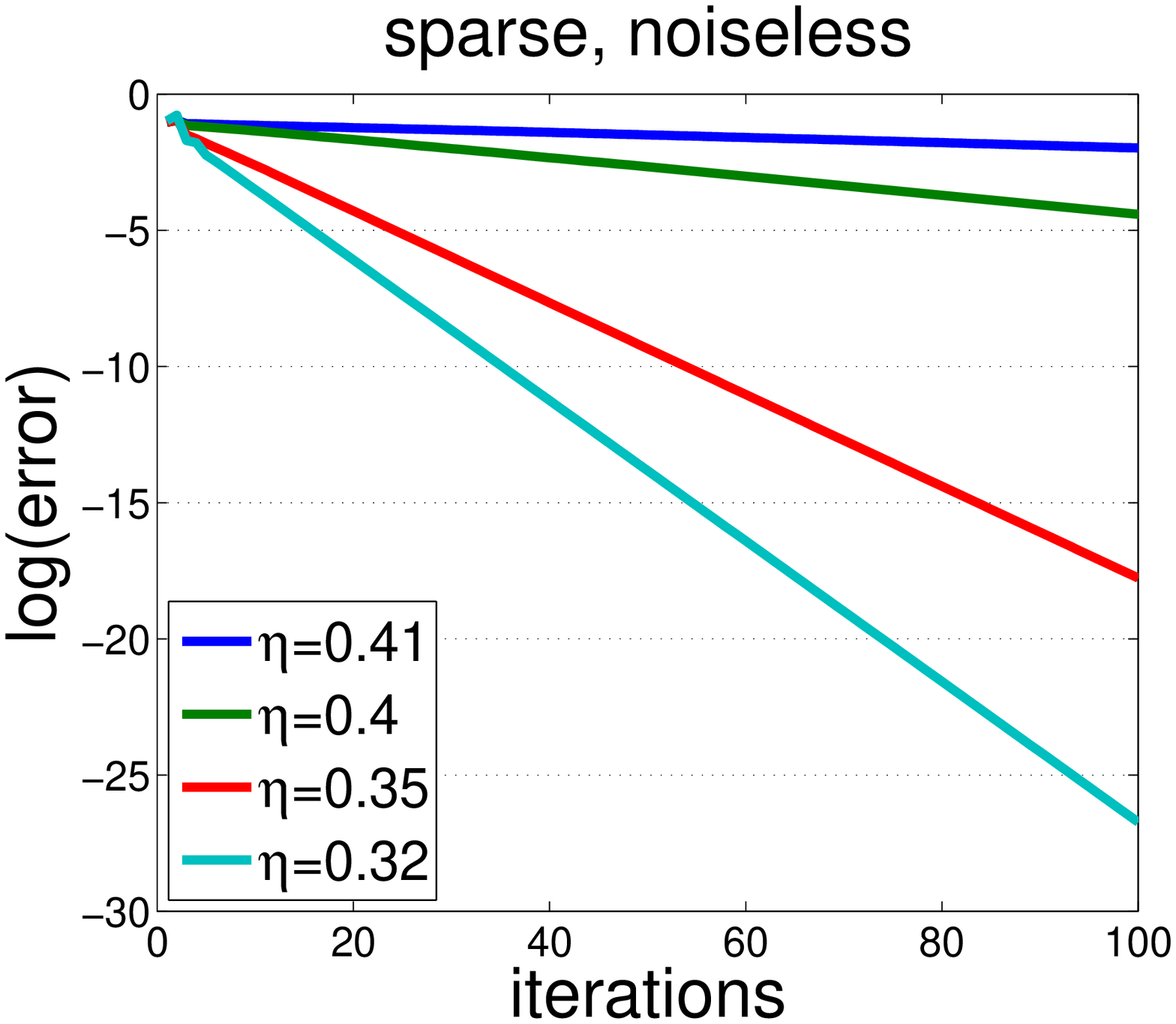}}
\subfigure[sparsity]{\includegraphics[scale=0.26]{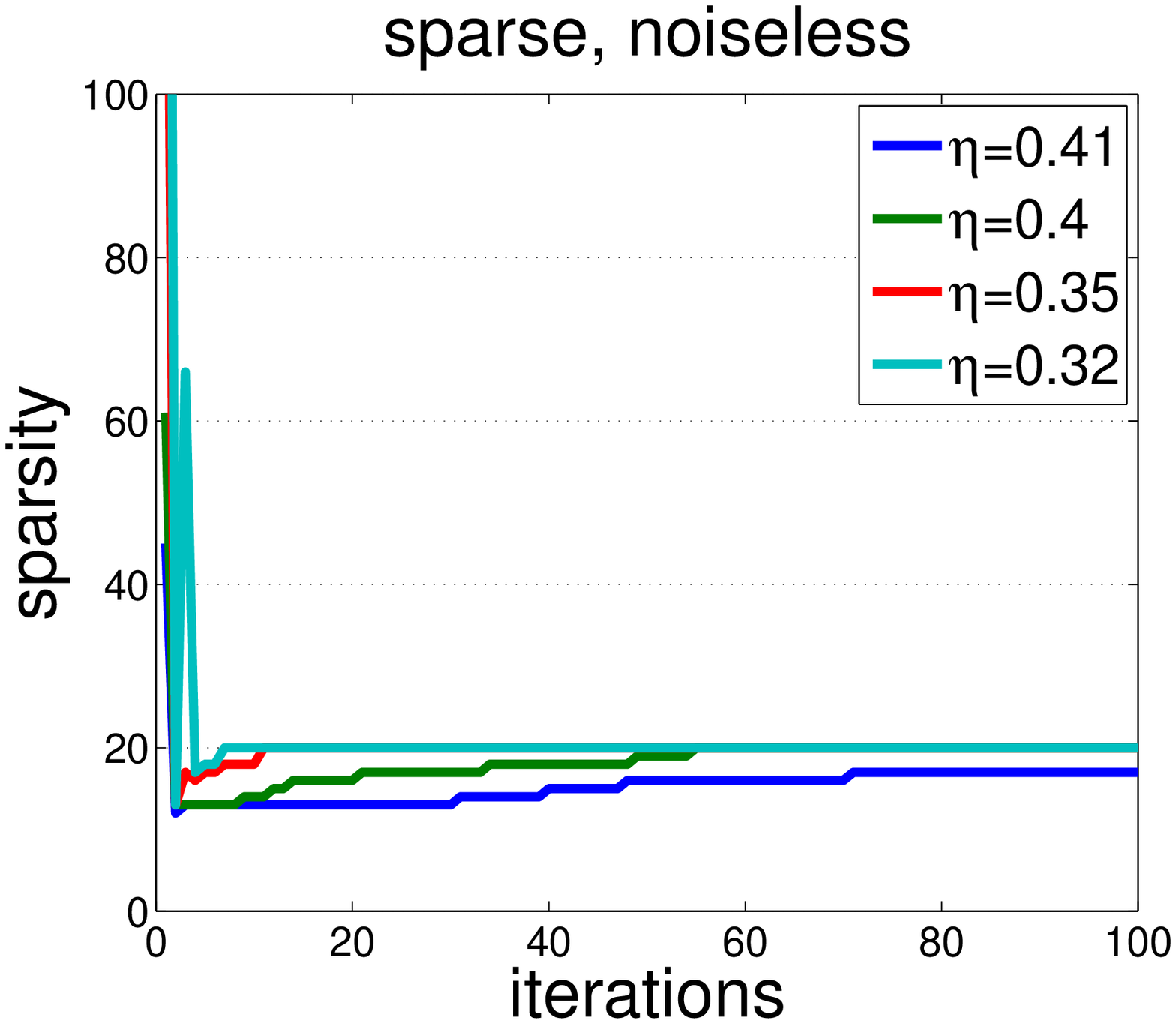}}
\caption{Recovery error and sparsity  versus iterations in setting I for different values of $\eta$.}\label{fig:4}
\centering
\subfigure[error]{\includegraphics[scale=0.26]{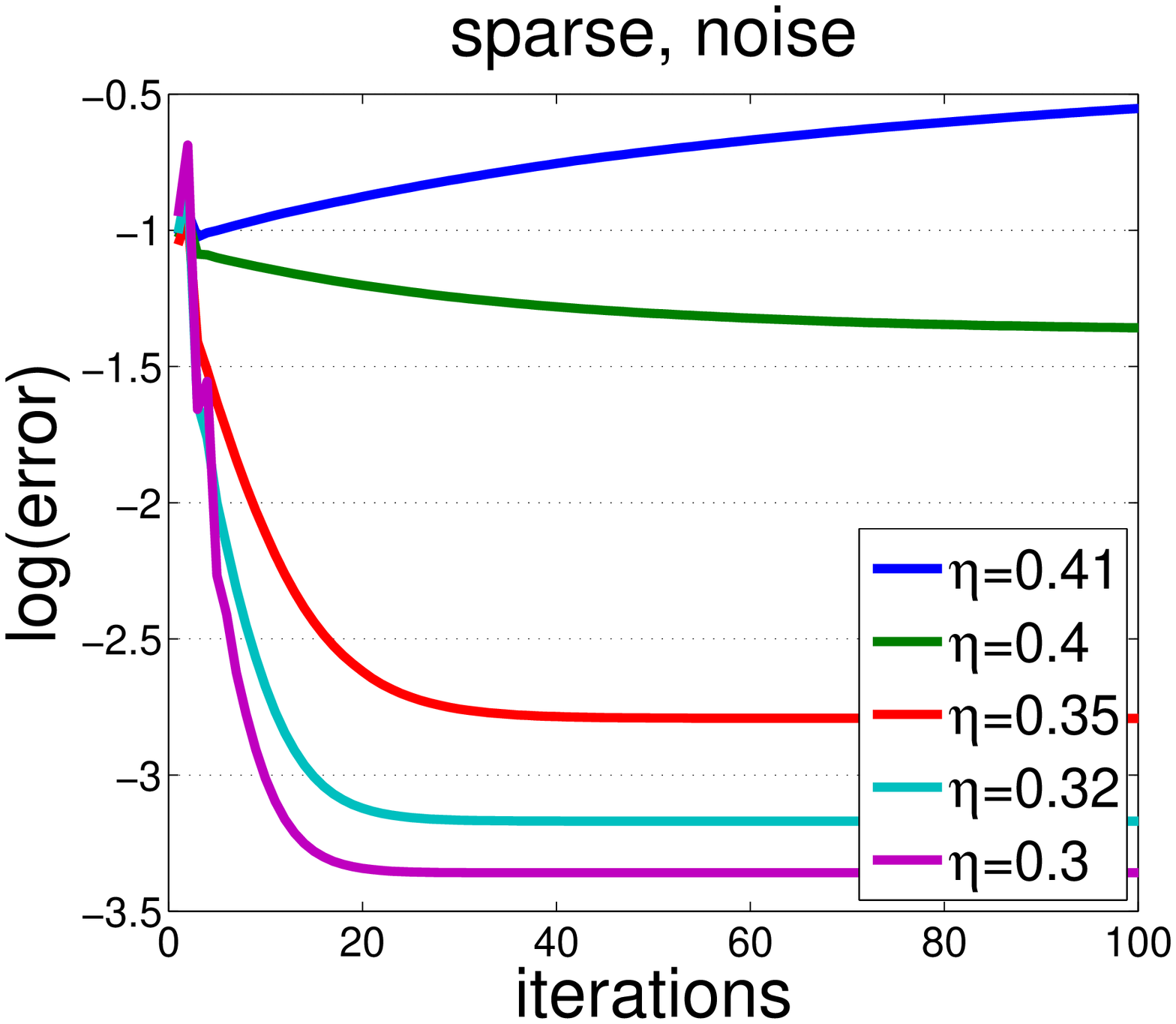}}
\subfigure[sparsity]{\label{fig:3-1}\includegraphics[scale=0.26]{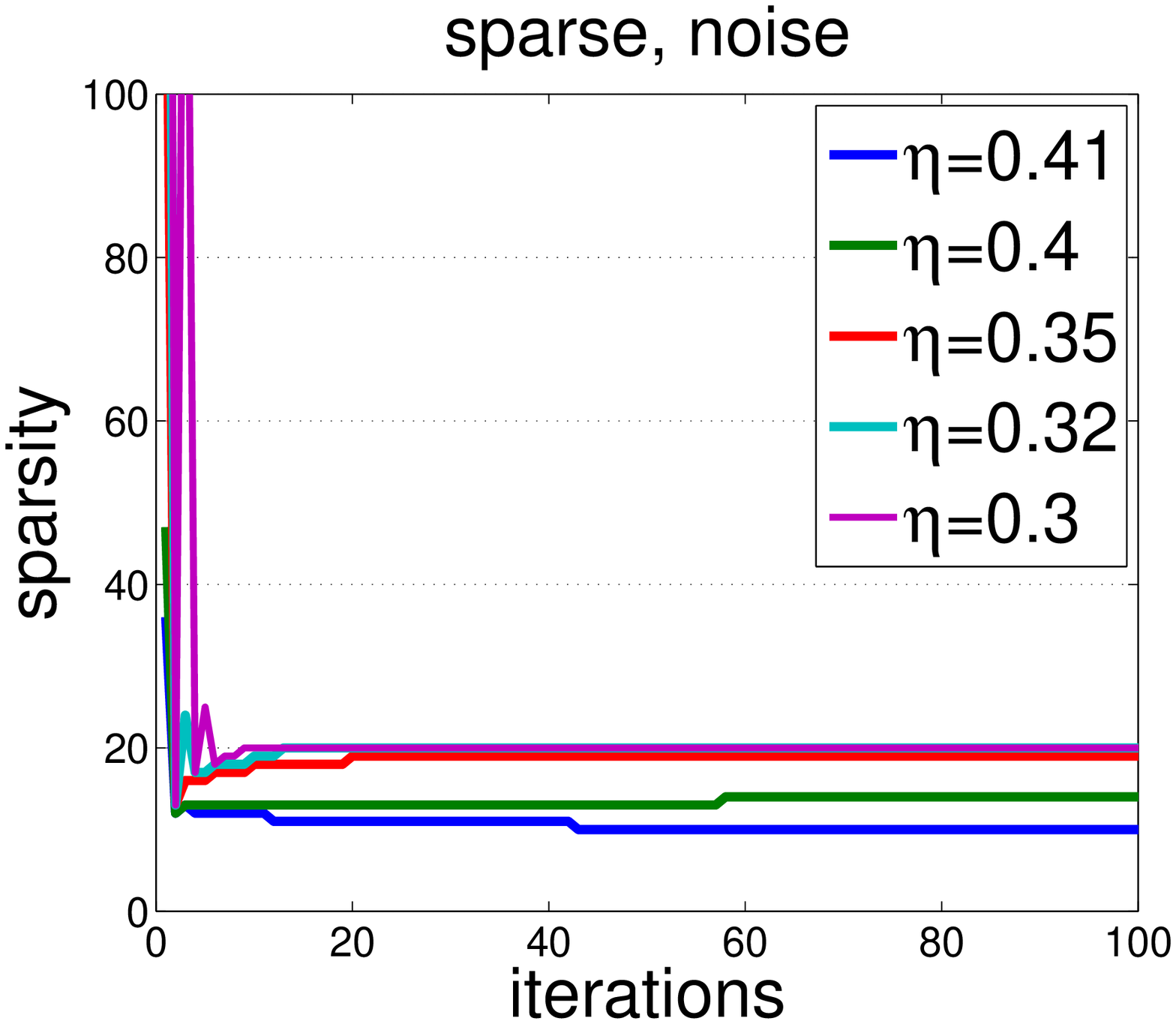}}
\caption{Recovery error and sparsity  versus iterations in setting II for different values of $\eta$.}\label{fig:5}
\end{figure}
\begin{figure}[t]

\centering
\subfigure[error]{\label{fig:2b}\includegraphics[scale=0.26]{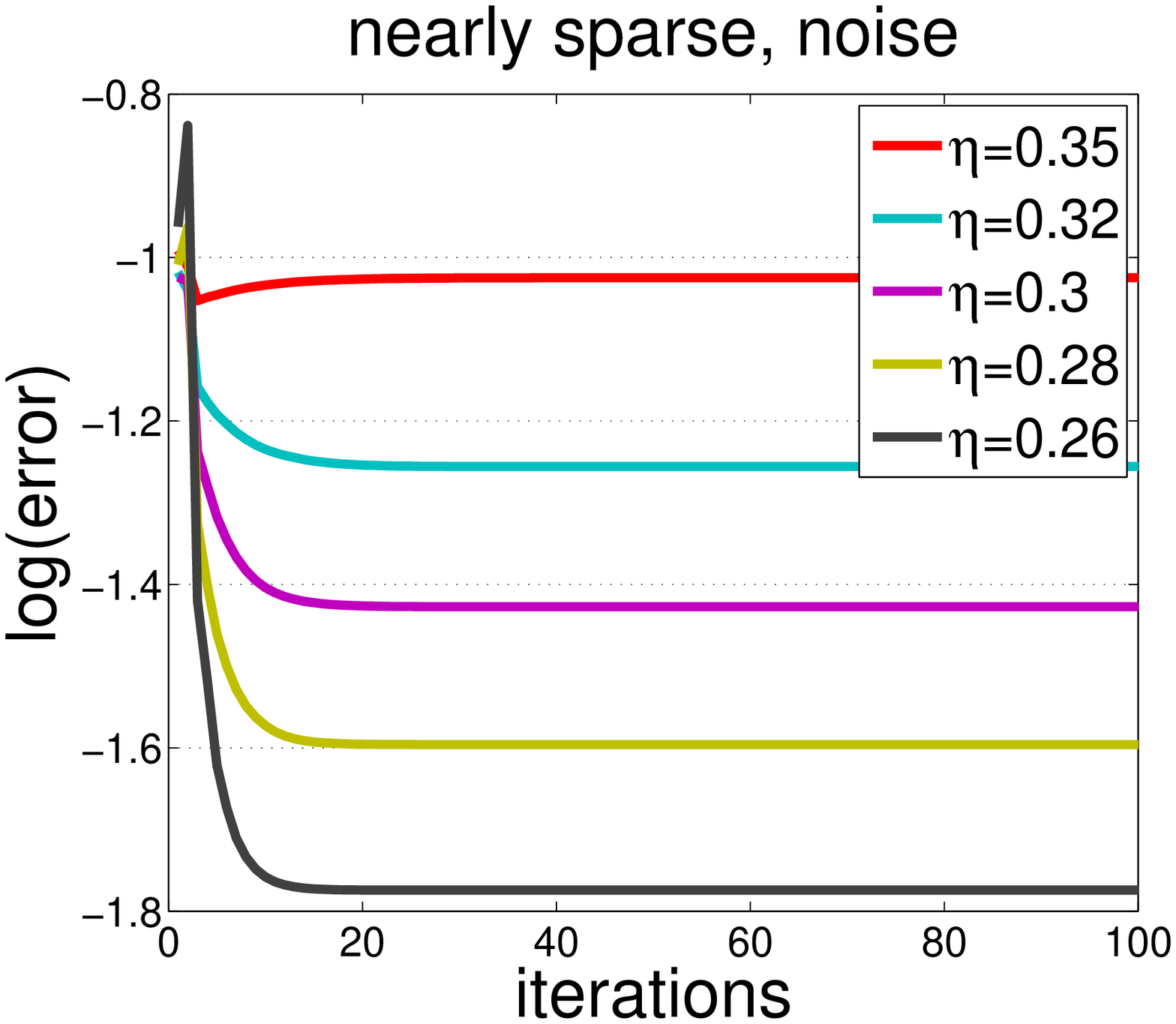}}
\subfigure[sparsity]{\label{fig:3-1}\hspace*{-0in}\includegraphics[scale=0.26]{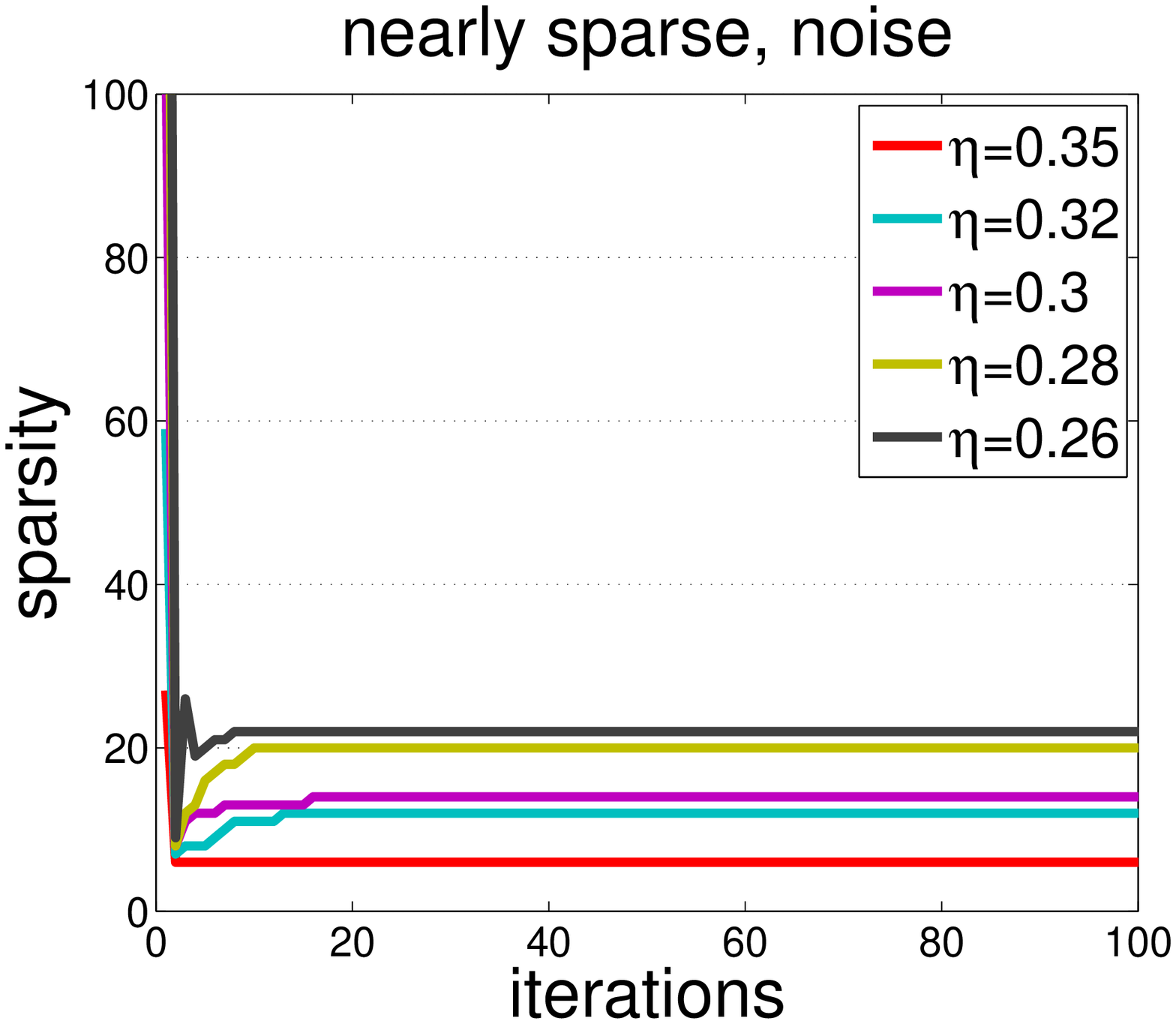}}
\caption{Recovery error and sparsity  versus iterations in setting III for different values of $\eta$.} \label{fig:6}
\end{figure}

\begin{figure}[t]
\centering
\subfigure[setting I: different $n$]{\label{fig:2c}\includegraphics[scale=0.26]{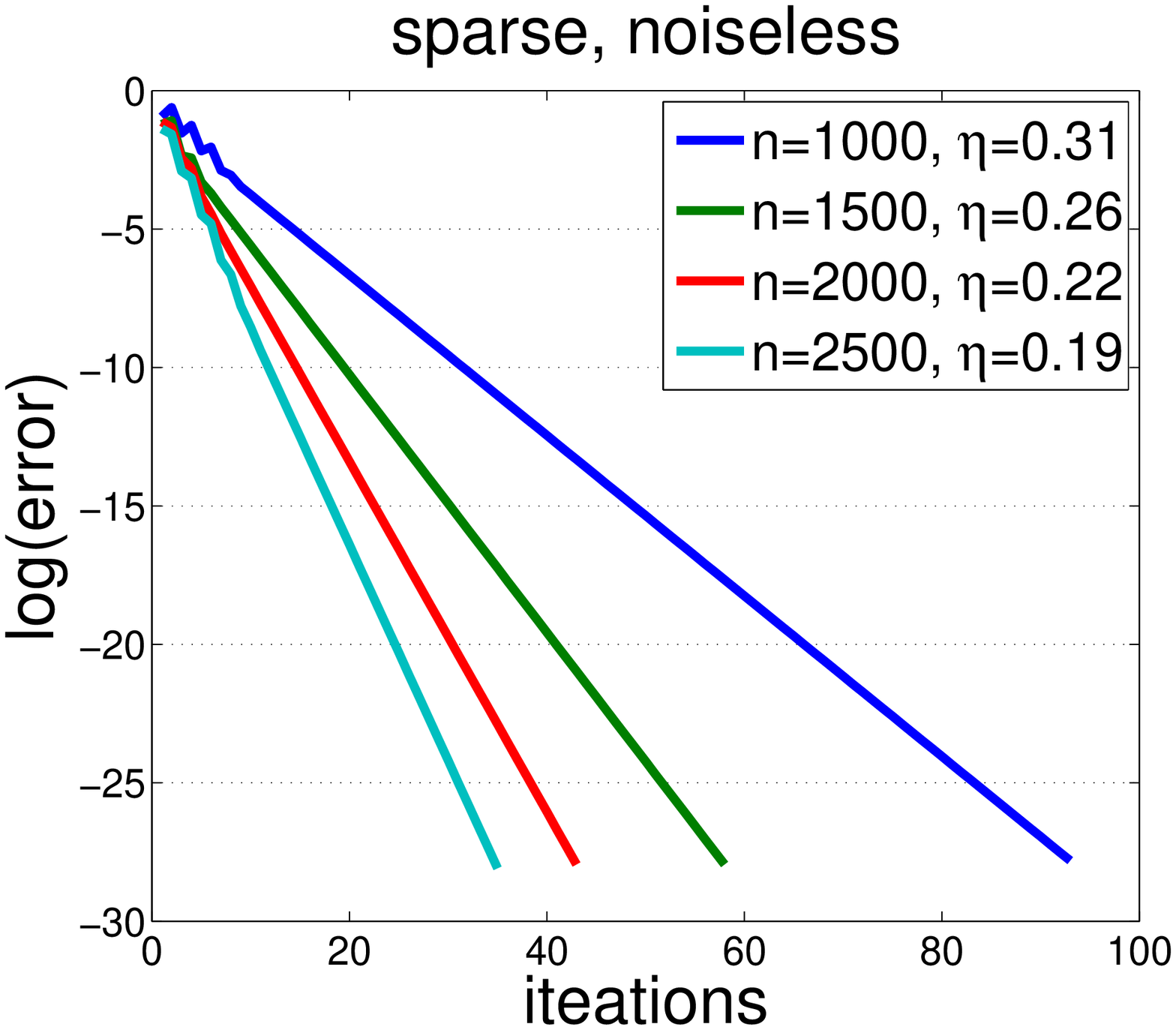}}
\subfigure[setting II: different $n$]{\label{fig:2c}\includegraphics[scale=0.26]{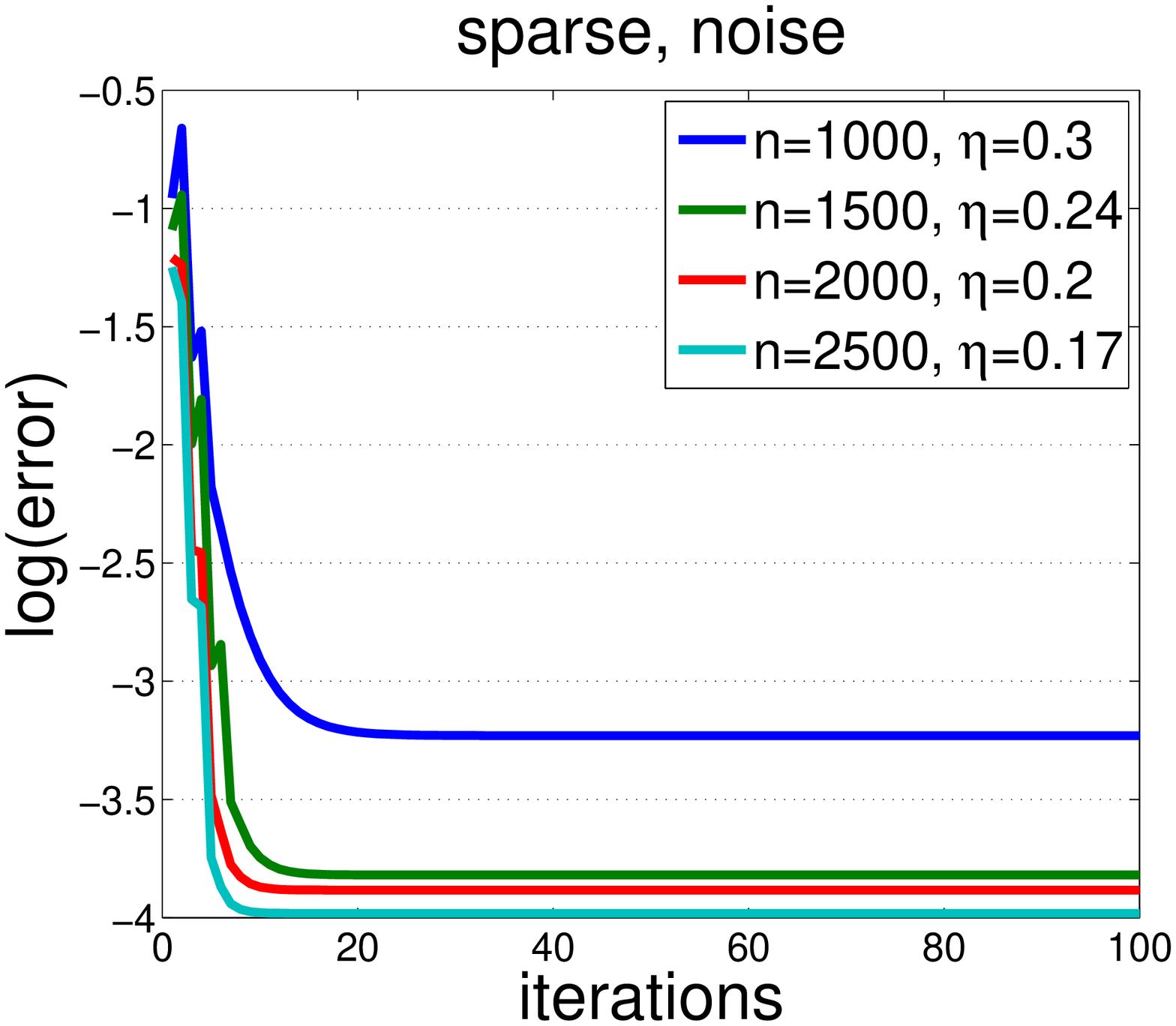}}
\caption{Recovery error vs different $n$. The value of $\eta$ is chosen as the best one for each value of $n$.}\label{fig:7}
\end{figure}

\begin{figure}
\centering
\subfigure[error]{\label{fig:9}\includegraphics[scale=0.26]{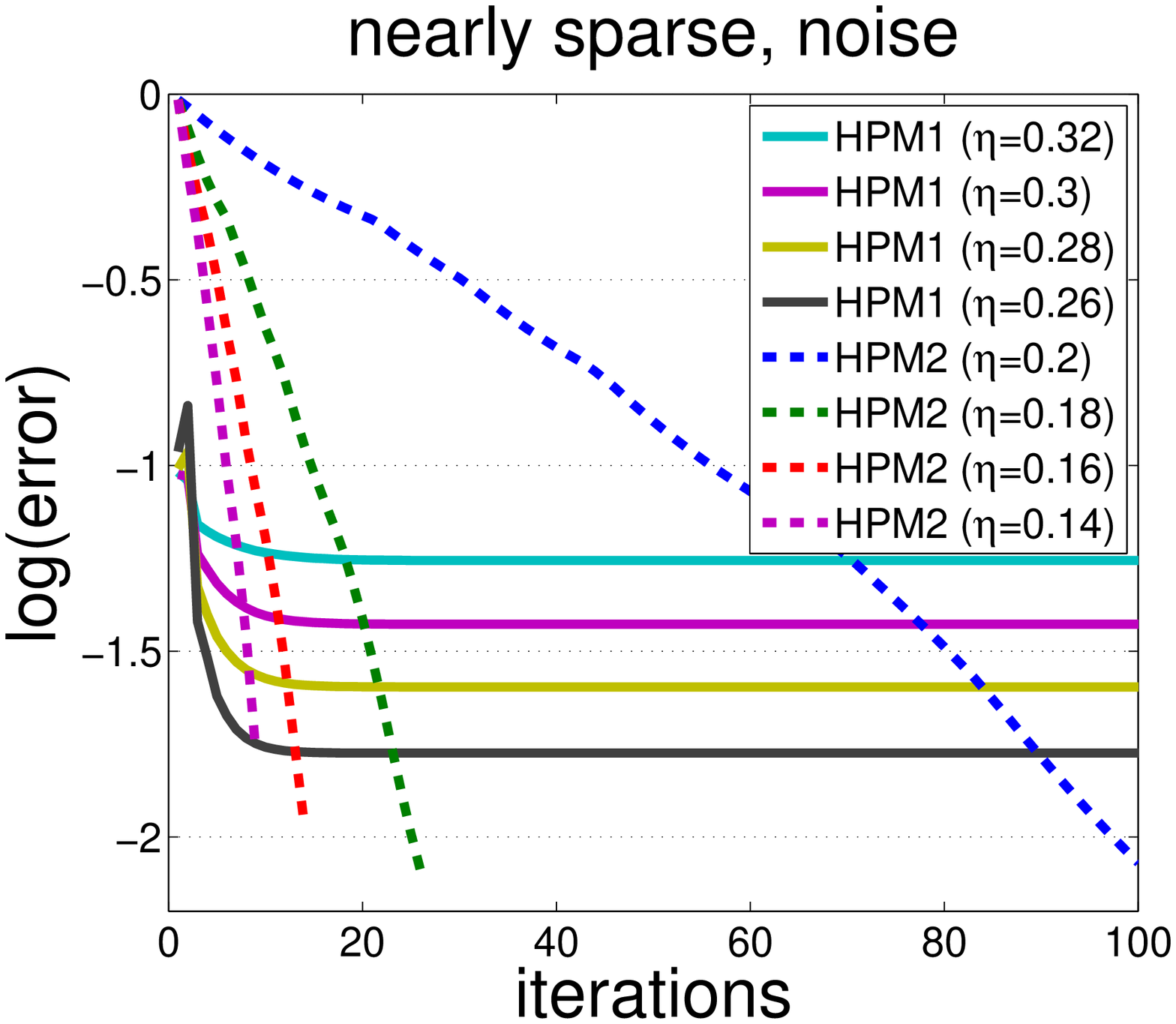}}
\subfigure[error]{\label{fig:8b}\includegraphics[scale=0.26]{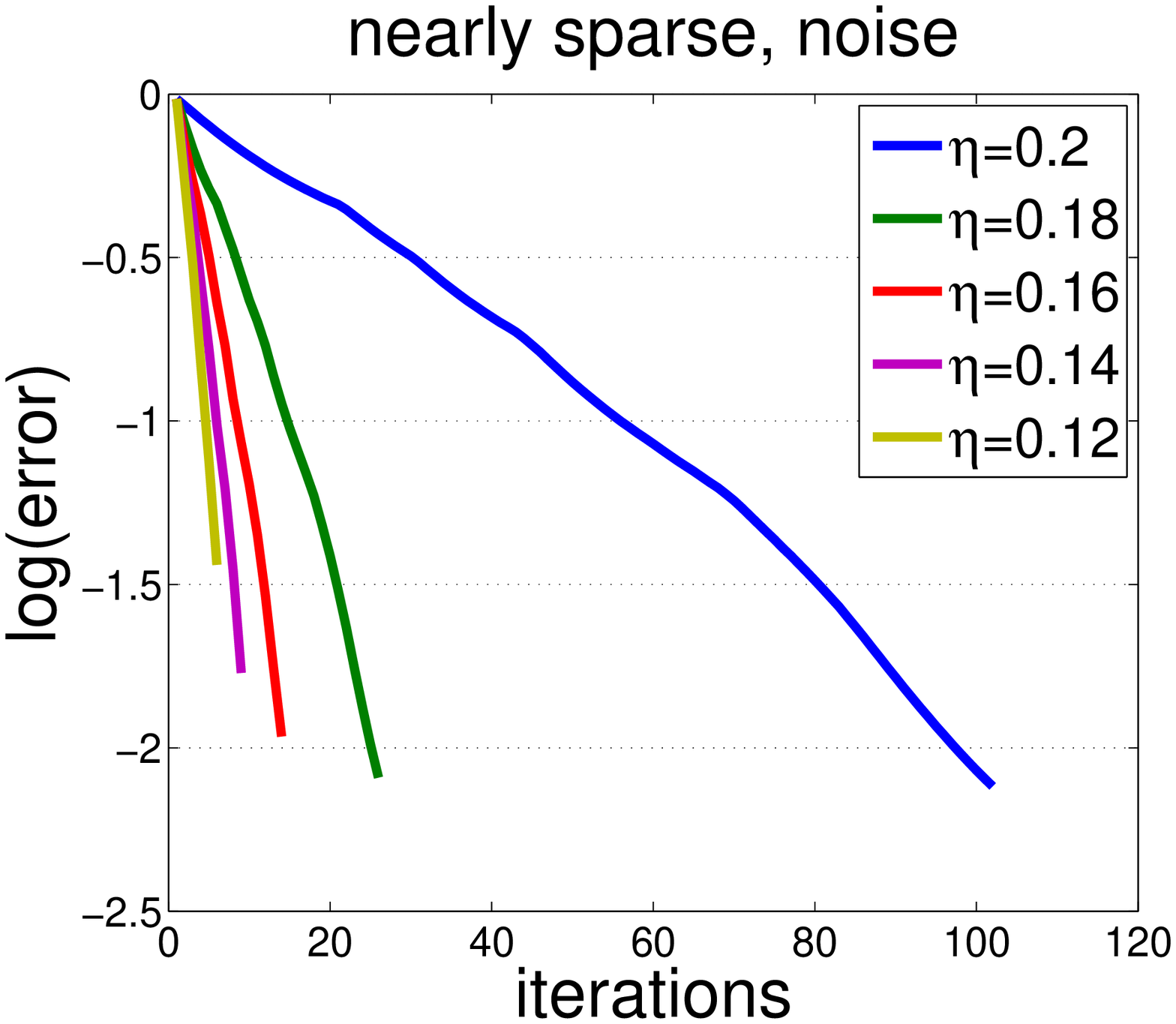}}\hspace*{-0.1in}
\subfigure[sparsity]{\label{fig:8c}\includegraphics[scale=0.26]{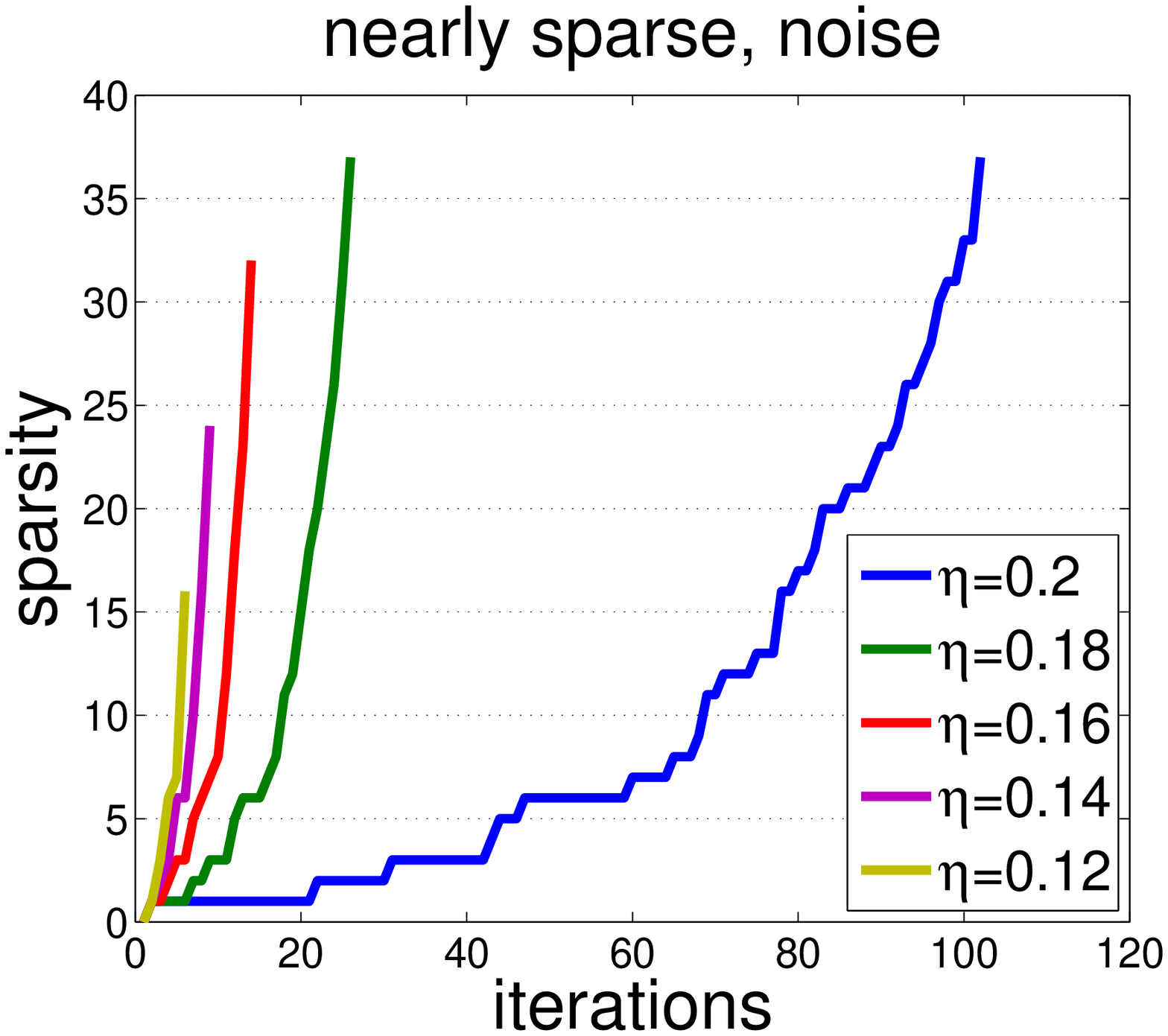}}\hspace*{-0.1in}
\caption{(a) HPM1 vs HPM2 with different values of $\eta$ in setting III. (b,c) recovery error and sparsity  of solutions in HPM2 versus iterations in setting III for different values of $\eta$.}\label{fig:8}
\end{figure}

\paragraph{Varying $n$}
We also verify that more observations can lead to faster convergence and more accurate recovery. To this end, we generate data similarly as before with different values of $n=1000, 1500, 2000, 2500$. For each value of $n$, we choose the smallest $\eta$ that can guarantee the convergence. The results for the first two settings are shown in Figure~\ref{fig:7}, which clearly demonstrate that the with more observations, we can use smaller $\eta$ to get a faster convergence and a more accurate recovery in setting II. Similar result has been observed for setting III. 

\paragraph{HPM1 vs. HPM2}
We also compare HPM2 with HPM1 in setting III to demonstrate the benefit of HPM2. The data is generated similarly as before with $n=1000, d=10000, s=20, \sigma=0.001$.  The result is shown in Figure~\ref{fig:9}. The initial value of $\lambda$ in HPM2 is set to $\|U^{\top}\y\|_\infty$.   It shows that HPM2 with an appropriate value of $\eta$ can achieve similar convergence speed and even more accurate recovery than HPM1. We also plot the recovery error and sparsity of intermediate solutions for HPM2 in Figure~\ref{fig:8b} and~\ref{fig:8c}. The curves  exhibit a tradeoff in setting the value of $\eta$, namely a smaller  value of $\eta$ leads to a faster convergence but a worse recovery, which is consistent with  Theorem~\ref{thm:11}. 


\paragraph{Comparing with Proximal-Gradient Homotopy Method (PGH)}
We compare HPM2 with the PGH method that  solves the BPDN problem for sparse signal recovery~\citep{DBLP:journals/siamjo/Xiao013}. The data is generated exactly the same as in~\citep{DBLP:journals/siamjo/Xiao013}. In particular, we generate a random measurement matrix $U\in\R^{n\times d}$ with $n = 1000$ and $d = 5000$. The entries of the matrix $U$ are generated independently with the uniform distribution over the interval $[-1,+1]$ and are scaled to have a variance $1/n$. The vector $\x_* \in \R^d$ is generated with the same distribution at $100$ randomly chosen coordinates (i.e., $\S_* = 100$). The noise $\e\in\R^n$ is a dense vector with independent random entries with the uniform distribution over the interval $[-\sigma, \sigma]$, where $\sigma$ is the noise magnitude and is set to $0.01$. Finally the vector $\y$ was obtained as $\y = U\x_* + \e$. The target value of $\lambda$  in PGH is chosen to be $\lambda_{\text{target}} = 1$ according to~\citep{DBLP:journals/siamjo/Xiao013}. The parameters  in PGH (e.g., $\gamma_{inc}, \gamma_{dec}, \eta, \delta$) are exactly the same as those used in~\citep{DBLP:journals/siamjo/Xiao013}. The initial value of $\lambda$ for both PGH and HPM2 is set to $\|U^{\top}\y\|_\infty$. We plot the recovery error  and sparsity of generated solutions versus the number of proximal updates in Figure~\ref{fig:10}. We can see that HPM2  achieves  faster convergence and better recovery than PGH for sparse signal recovery. 

\begin{figure}[t]
\centering
\includegraphics[scale=0.26]{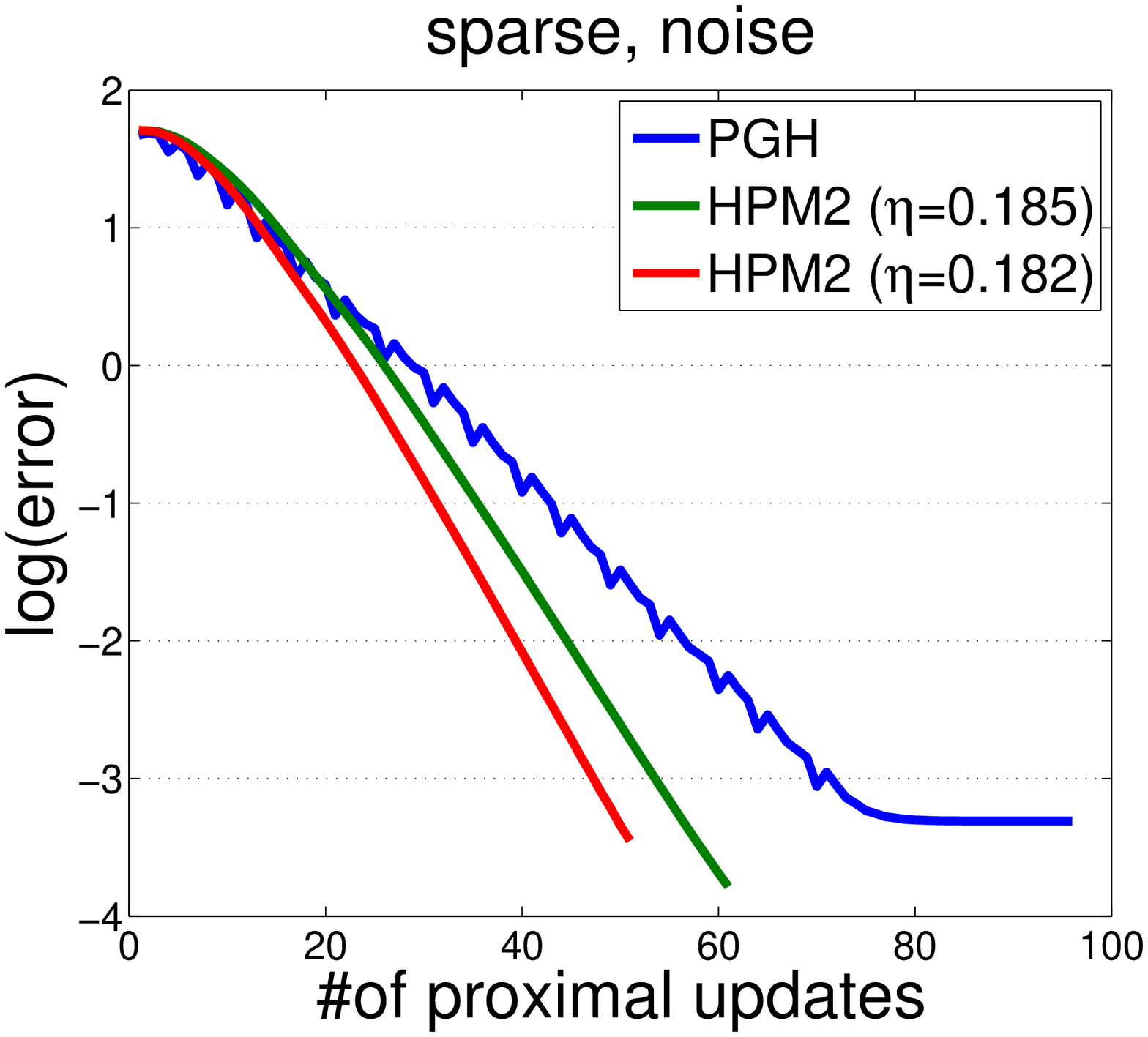}
\includegraphics[scale=0.26]{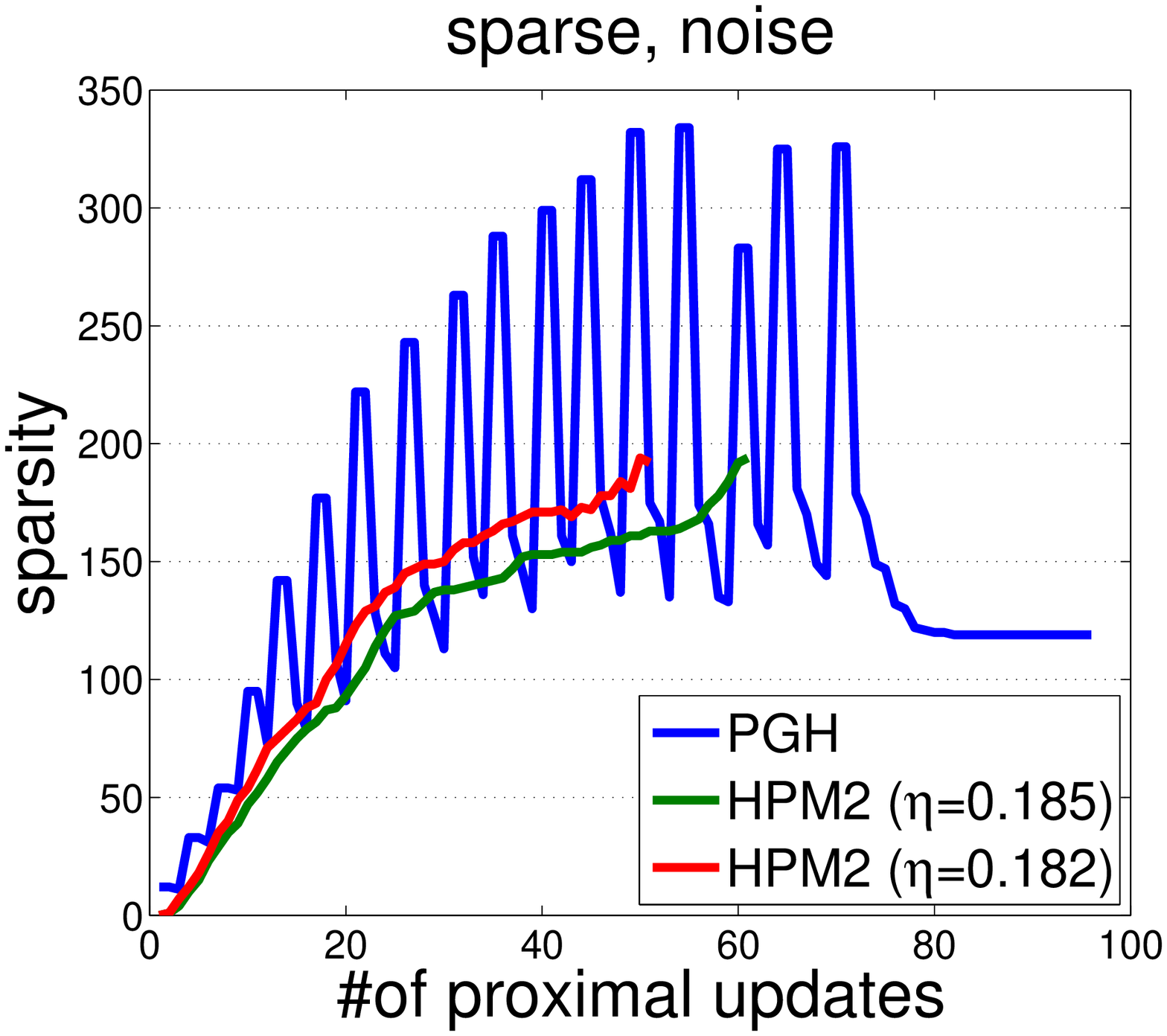}
\caption{HPM2 vs. PGH. Two different  values of $\eta$ are used in HPM2. PGH requires $96$ proximal updates and HPM2 requires $51$ and $61$ proximal updates with $\eta=0.182$ and $\eta=0.185$, respectively. The error of the final solution returned by PGH is $0.0365$, and  the error of the final solution returned by HPM2 with the two different values of $\eta$ is $0.0317$,  and $0.0227$, respectively. The recovery error of the $100$-sparse solution formed by taking the top $100$ elements in the returned solution by HPM2 is $0.0312$ and $0.0223$.}\label{fig:10}
\end{figure}

\paragraph{Comparing with Iterative Soft Thresholding algorithm (ISTA) and Iterative Hard Thresholding (IHT).}
Finally, we compare HPM2 with two other algorithms, namely ISTA and IHT~\citep{Garg:2009:GDS:1553374.1553417}. The measurement matrix $U$ and the noise vector $\e$ are generated the same as above, i.e., $U\in\R^{1000\times 5000}$ and each entry is sampled from a uniform distribution over $[-1, +1]$ and is scaled to have a variance of $1/n$. For the ground-truth signal $\x_*$, we consider two scenarios: (i) a sparse signal with $100$ randomly chosen coordinates sampled from the uniform distribution over $[-1, +1]$; (ii) a nearly sparse signal such that the entries follow an exponential decay, i.e., $[\x_*]_i = e^{-i}$. Since the proposed HPM2 and IHT require a parameter $s$ that estimates the sparsity of the target signal,  in the first scenario we vary $s$ among three values $s=100, s=200$ and $s=400$. In the second scenario, we fix $s=100$. For other parameters that each algorithm relies on (e.g., $\eta$ in HPM2, the step size parameter $1/\gamma$ in IHT and the regularization parameter $\lambda$ in ISTA), we tune them among numerous values and report the performance of the best one. We vary the value of $\eta$ in $[0.1,0.2]$, the value of $\gamma$ in $[1, 10]$ and the value of $\lambda$ in $[0.001, 1]$. The recovery error  measured by the difference between the top $s$ components of the returned solution and the top $s$ components of the ground-truth  signal is plotted in Figure~\ref{fig:11}.  From the results, we observe that (i) IHT and HPM2 converge much faster than ISTA; (ii) when the ground-truth signal is sparse and the parameter $s$ is set right to number of non-zeros in the ground-truth signal, IHT performs better than HPM2; (iii) however, when the parameter $s$ is overestimated and the  ground-truth signal is not exactly sparse, the proposed algorithm HPM2 performs better than IHT, where the later case is consistent with our comparison in Section II.

\begin{figure}[t]
\centering
\subfigure[$s=100$]{\includegraphics[scale=0.26]{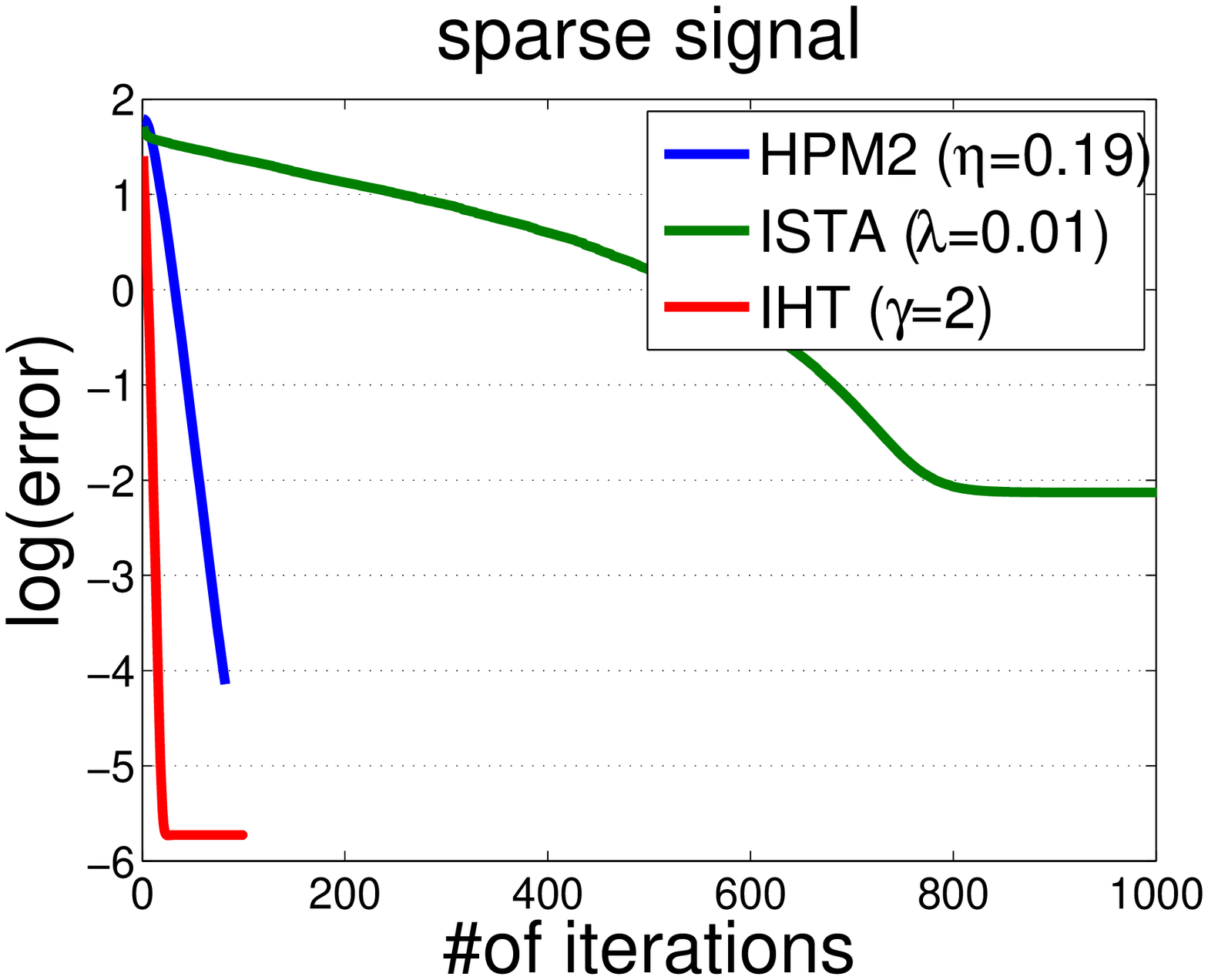}}\hspace*{-0.1in}
\subfigure[$s=200$]{\includegraphics[scale=0.26]{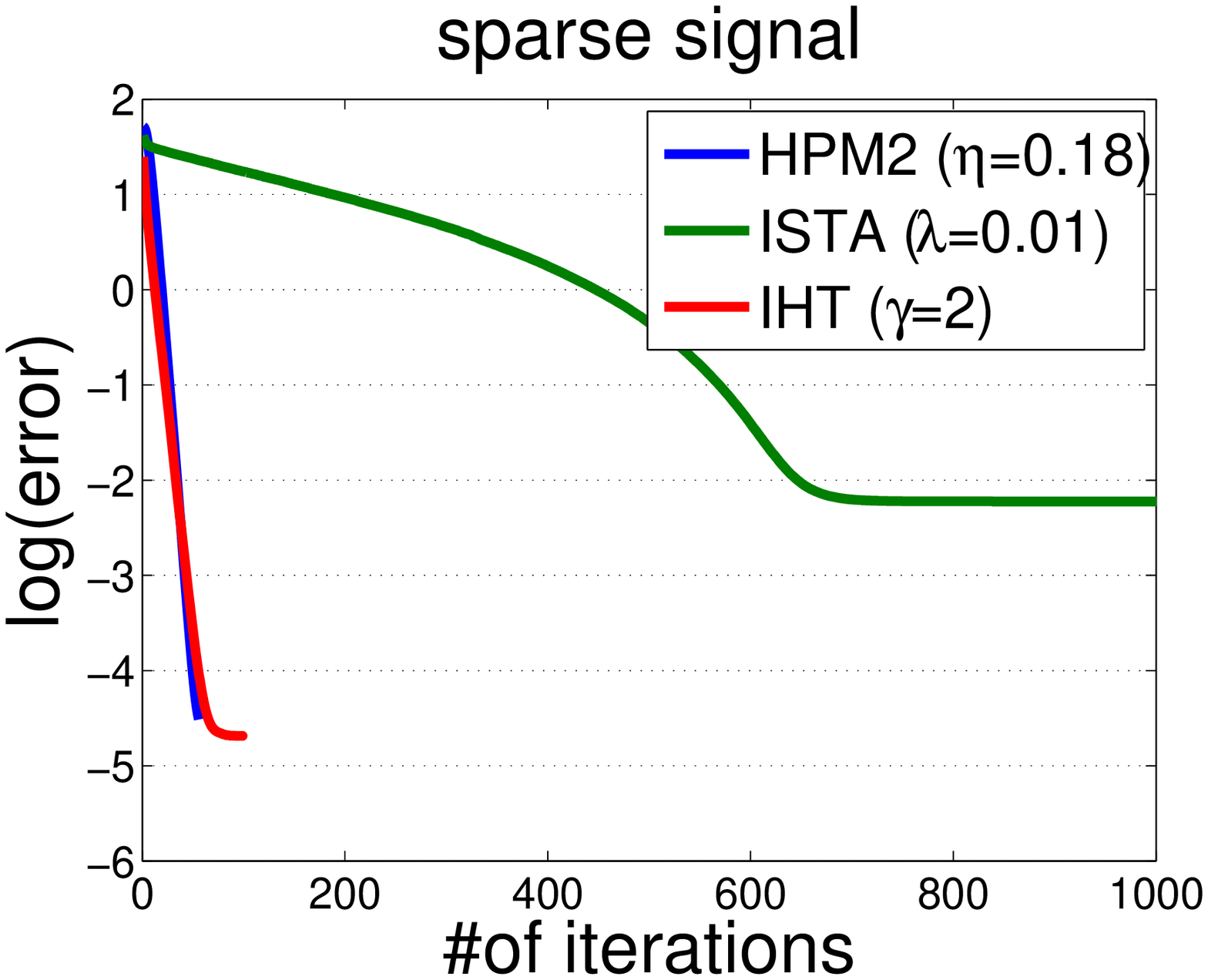}}

\subfigure[$s=400$]{\includegraphics[scale=0.26]{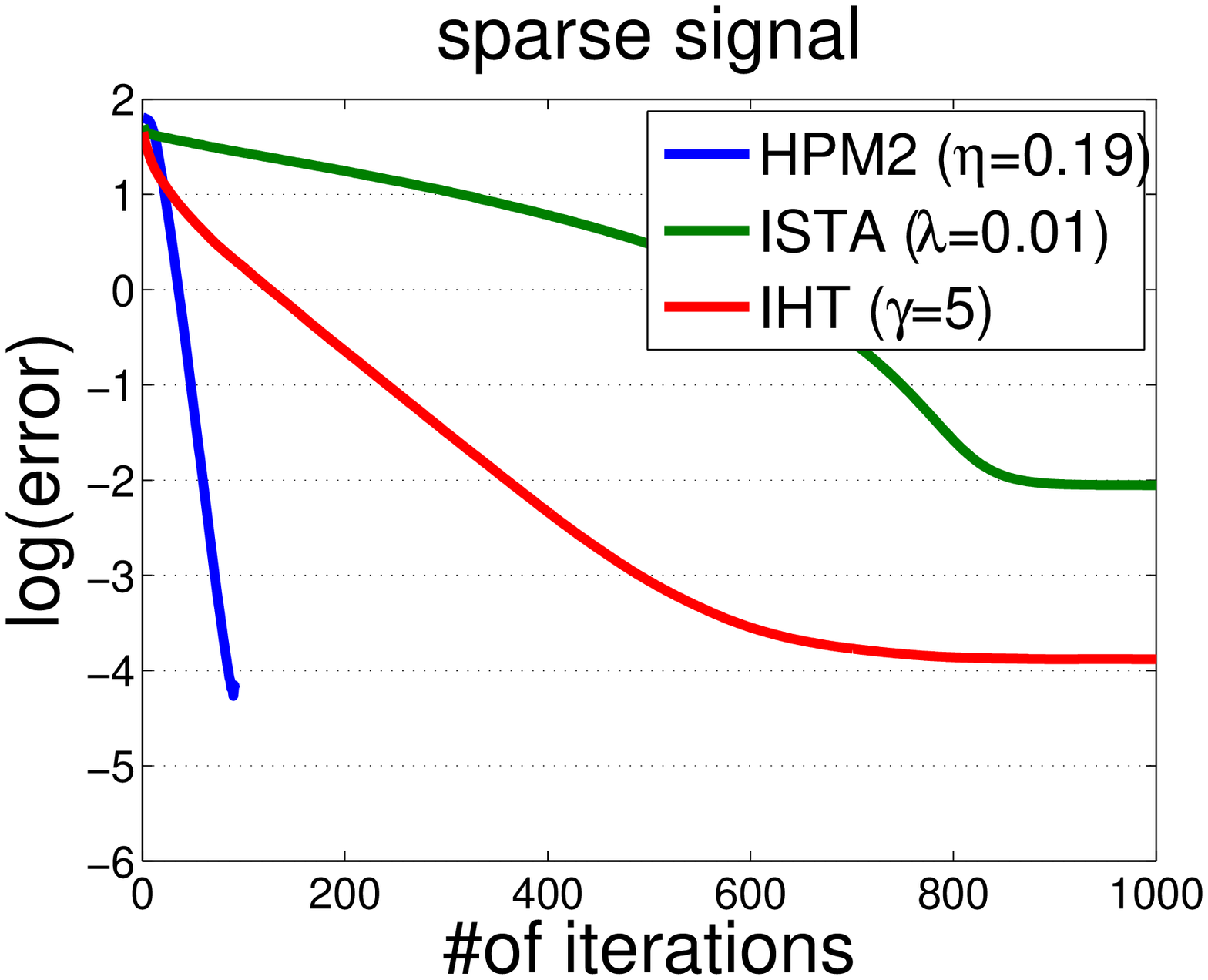}}\hspace*{-0.1in}
\subfigure[$s=100$]{\includegraphics[scale=0.26]{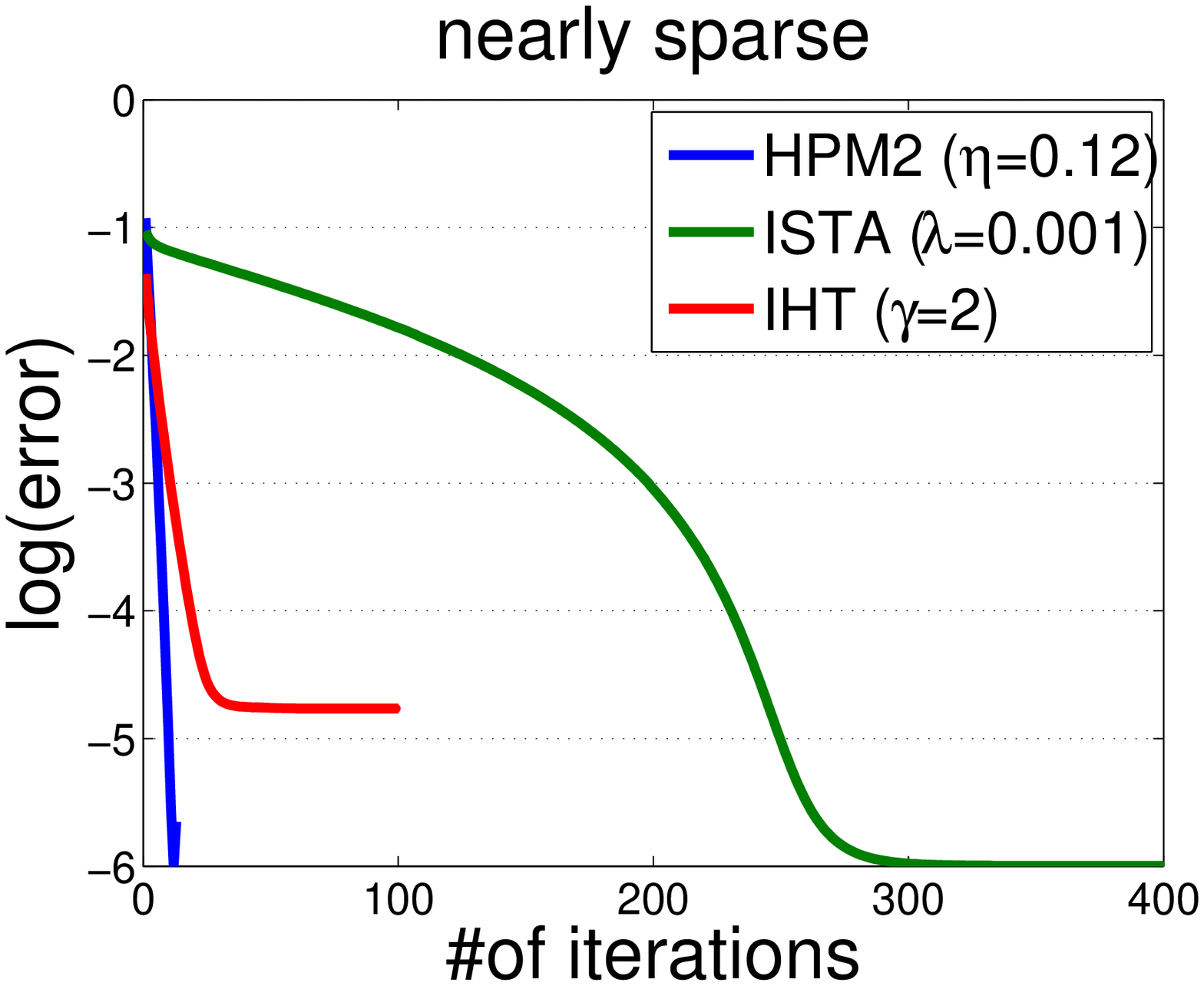}\hspace*{-0.1in}}
\caption{HPM2 vs. IST and IHT. From left to right: (a) $\x_*$ is sparse with only $100$ non-zero entries and the parameter $s$ in HPM2 and IHT is set to $100$;  (b) $\x_*$ is sparse with only $100$ non-zero entries and the parameter $s$ in HPM2 and IHT is set to $200$; (c) $\x_*$ is sparse with only $100$ non-zero entries and the parameter $s$ in HPM2 and IHT is set to $400$;    (d) $\x_*$ is nearly sparse and the parameter $s$ in HPM2 and IHT is set to $100$. 
}\label{fig:11}
\end{figure}

\section{Conclusions}
In this paper, we have presented a simple homotopy proximal mapping algorithm for compressive sensing. We proved a global linear convergence for the proposed homotopy proximal mapping algorithm for solving compressive sensing under three different settings. For sparse signal recovery, our results almost recover the best condition on the RIP constants for compressive sensing. For nearly sparse signal recovery, our result is better than previous results for instance-level recovery. In addition, we develop a practical algorithm that can run without any knowledge of noise level. Numerical simulations verify the proposed algorithms and the established theorems. 

\appendix
\section{Proof of Lemma~\ref{lem:fund}}
Define $L_t(\x)$ as 
\begin{align*}
L_t(\x) =\frac{1}{2}\left\|\x - \left(\x_t- U^{\top}(U\x_t - \y)\right)\right\|_2^2 + \lambda_t\|\x\|_1 
\end{align*}
Since $\x_{t+1}$ is the optimal solution to $\min_{\x}L_t(\x)$, therefore, we have for any $\x$
\begin{align*}
(\x_{t+1} - \x)^{\top}\partial L_t(\x_{t+1})\leq 0
\end{align*}
i.e., there exists a $g_{t+1}\in\partial\|\x_{t+1}\|_1$
\begin{align*}
&(\x_{t+1} - \x)^{\top}(\x_{t+1} - \x_t) + (\x_{t+1} - \x)^{\top}U^{\top}\left(U\x_t - \y\right) + \lambda_t(\x_{t+1} - \x)^{\top}g_{t+1}\leq 0
\end{align*}
Let $\x$ be a $s$-sparse vector with support set $\S$. 
Then we have
\begin{align*}
&(\x_{t+1} - \x)^{\top}(\x_{t+1} - \x_t) + (\x_{t+1} - \x)^{\top}U^{\top}\left(U\x_t - \y\right) + \lambda_t\|[\x_{t+1} ]_{\S_{t+1}\setminus\S}\|_1 \leq \lambda_t \|[\x_{t+1} - \x]_{\S}\|_1
\end{align*}
where we use $([\x_{t+1}]_{\S_{t+1}\setminus\S})^{\top}g_{t+1} = \|\x_{t+1}\|_1$ and $\|g_{t+1}\|_\infty\leq 1$. Note that 
\begin{align*}
& (\x_{t+1} - \x)^{\top}(\x_{t+1} - \x_t) + (\x_{t+1} - \x)^{\top}U^{\top}\left(U\x_t - \y\right) \\
& =  \|\x_{t+1} - \x\|_2^2+ (\x_{t+1} - \x)^{\top}\left(U^{\top}\left(U\x_t - \y\right) - (\x_t - \x)\right)
\end{align*}
We complete the proof by noting that $ \lambda_t\|[\x_{t+1}]_{\S_{t+1}\setminus\S}\|_1 \geq 0$ and 
\[
\|[\x_{t+1} - \x]_\S\|_1\leq \sqrt{s}\|\x_{t+1} - \x\|_2.
\]

\section{Proof of Lemma~\ref{lem:fund2}}
We first decompose $U^{\top}\left(U\x_t - \y\right) - (\x_t - \x^s_*)$ into 3 components. 
\[
\begin{split}
& U^{\top}\left(U\x_t - \y\right) - (\x_t - \x_*^{s})\\
= & U^{\top}\left(U\x_t - U \x_* - \e \right) - (\x_t - \x_*^{s}) \\
= &  \underbrace{U^{\top}U(\x_*^{s} - \x_*)}_{:=\w_a} + \underbrace{(U^{\top}U - I)(\x_t - \x_*^{s})}_{:=\w_b} - \underbrace{U^{\top}\e}_{:=\w_c}.
 \end{split}
\]
Then, we have
\begin{equation}\label{eqn:split}
\begin{split}
 & \left\| \left[U^{\top}\left(U\x_t - \y\right) - (\x_t - \x_*^{s})\right]^s\right\|_2\leq  \|\w_a^s\|_2+\|\w_b^s\|_2+\|\w_c^s\|_2.
\end{split}
\end{equation}
The last term can be bounded by $\|\w^s_c\|_2\leq \sqrt{s}\|U^{\top}\e\|_\infty$. In the following analysis, we intend to bound $\|(U^{\top}U\z)^s\|_2$ for a fixed vector $\z$, $\left\|((UU^{\top} - I)\z)^s\right\|_2$ for any sparse vector $\z$. We will address these two bounds in the following two subsections.

\subsection{Bounding $\|(U^{\top}U\z)^s\|_2$ for a fixed $\z$}
First, we define
\[
\K_{d,s} = \left\{\w \in \R^d: \|\w\|_2 \leq 1, \|\w\|_0 \leq s \right\}
\]
and
\[
\Er_s(\z) = \max\limits_{\w \in \K_{d,s}} \w^{\top}U^{\top}U\z
\]
It is easy to verify that
\[
\|(U^{\top}U\z)^s\|_2 =\Er_s(\z)
\]
Hence, to bound $\|(U^{\top}U\z)^s\|_2$, we need to bound $\Er_s(\z)$.
\begin{thm} \label{thm:a1}
For a fixed $\z$, with a probability $1 - e^{-\tau}$ for any $\tau> 0$, we have
\[
\Er_s(\z) \leq c\left(\sqrt{\frac{\tau + s\log (d/s)}{n}}\|\z\|_2 + \|\z^s\|_2\right)
\]
where $c$ is some universal constant. 
\end{thm}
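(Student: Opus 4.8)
The plan is to reduce the sparse-vector supremum to a union bound over supports combined with a net over directions, and then apply a sub-exponential Bernstein inequality for each fixed direction. First I would rewrite $\Er_s(\z)=\|(U^{\top}U\z)^s\|_2=\sup_{\w\in\K_{d,s}}\w^{\top}U^{\top}U\z$ and split $U^{\top}U=I+(U^{\top}U-I)$, obtaining
\[
\Er_s(\z)\le \sup_{\w\in\K_{d,s}}\w^{\top}\z+\sup_{\w\in\K_{d,s}}\w^{\top}(U^{\top}U-I)\z=\|\z^s\|_2+\sup_{\w\in\K_{d,s}}\w^{\top}(U^{\top}U-I)\z .
\]
So it suffices to show that, with probability at least $1-e^{-\tau}$, the fluctuation term $\sup_{\w\in\K_{d,s}}\w^{\top}(U^{\top}U-I)\z$ is at most $c\sqrt{(\tau+s\log(d/s))/n}\,\|\z\|_2$; the leading term $\|\z^s\|_2$ then appears with coefficient exactly $1$ and no $(1+\delta)$ inflation, which is precisely the reason the statement is for a \emph{fixed} $\z$ and probabilistic rather than deduced from RIP.

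For a fixed $\w\in\K_{d,s}$, write $\u_1,\dots,\u_n\in\R^d$ for the i.i.d.\ rows of $U$, so that $\w^{\top}(U^{\top}U-I)\z=\sum_{i=1}^n\bigl[(\u_i^{\top}\w)(\u_i^{\top}\z)-\frac1n\w^{\top}\z\bigr]$, using $\E[\u_i\u_i^{\top}]=I/n$. Each $\u_i^{\top}\w$ is sub-gaussian with norm $O(1/\sqrt n)$ and each $\u_i^{\top}\z$ is sub-gaussian with norm $O(\|\z\|_2/\sqrt n)$, so each summand is a centered sub-exponential variable with norm $O(\|\z\|_2/n)$. Bernstein's inequality for a sum of $n$ independent centered sub-exponential variables then gives, for every $0<\epsilon\le 1$,
\[
\Pr\bigl[\,|\w^{\top}(U^{\top}U-I)\z|>\epsilon\|\z\|_2\,\bigr]\le 2e^{-c_0 n\epsilon^2}.
\]

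Finally I would make this uniform. There are at most $\binom{d}{s}\le (ed/s)^s$ supports $S$ of size $s$; for each, a $1/2$-net $\mathcal N_S$ of the Euclidean unit ball of $\R^{S}$ has cardinality at most $5^s$, and the standard net argument gives $\|[(U^{\top}U-I)\z]_S\|_2\le 2\max_{\w\in\mathcal N_S}\w^{\top}(U^{\top}U-I)\z$. A union bound over the at most $(5ed/s)^s$ pairs $(S,\w)$ together with the choice $\epsilon=c_1\sqrt{(\tau+s\log(d/s))/n}$ (which we may assume is $\le 1$, else the claimed bound is vacuous) makes the total failure probability at most $e^{-\tau}$; on the complementary event $\sup_{\w\in\K_{d,s}}\w^{\top}(U^{\top}U-I)\z\le 2\epsilon\|\z\|_2$, and combining with the first display yields $\Er_s(\z)\le\|\z^s\|_2+c\sqrt{(\tau+s\log(d/s))/n}\,\|\z\|_2$ after renaming the constant. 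The main obstacle is this last step: controlling the supremum over the exponentially many $s$-sparse directions uniformly, i.e.\ balancing the $s\log(d/s)$ metric entropy of $\K_{d,s}$ against the sub-exponential (hence two-regime) Bernstein tail and checking that the net discretization costs only a constant factor; the per-direction concentration and the reduction in the first paragraph are routine.
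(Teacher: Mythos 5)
Your proposal is correct and takes essentially the same route as the paper: a fixed-direction concentration bound for $\w^{\top}(U^{\top}U-I)\z$ (the paper derives it from the JL lemma via polarization, you from sub-exponential Bernstein), followed by a union bound over a net of the $s$-sparse unit vectors with entropy $s\log(d/s)+O(s)$ (the paper cites a covering-number lemma for $\K_{d,s}$, you enumerate supports and net each per-support ball), and a net-to-supremum step costing only a constant factor. The leading term $\|\z^s\|_2$ arises in both arguments as $\sup_{\w\in\K_{d,s}}\w^{\top}\z$, and the implicit restriction to the sub-gaussian regime $\epsilon\le 1$ that you flag is shared by the paper's own use of the JL lemma.
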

\begin{proof}[Proof of Theorem~\ref{thm:a1}]
Let $\K_{d,s}(\epsilon)$ be the proper $\epsilon$-net for $\K_{d,s}$ with the smallest cardinality (i.e. covering number), and let $N(\K_{d,s}, \epsilon)$ be the covering number for $\K_{d,s}$. The following Lemma bounds the covering number $N(\K_{d,s}, \epsilon)$.
\begin{lemma} (Lemma 3.3 from~\citep{DBLP:journals/corr/abs-1109-4299})
For $\epsilon \in (0,1)$ and $s \leq d$, we have
\[
    \log N(\K_{d,s}, \epsilon) \leq s \log\left( \frac{9 d}{\epsilon s}\right)
\]
\end{lemma} \label{lemma:covering-number}
Using the $\epsilon$-net $\K_{d,s}(\epsilon)$, we define a discretized version of $\Er_s(\z)$ as
\[
\Er_s(\z, \epsilon) = \max\limits_{\w \in \K_{d,s}(\epsilon)} \w^{\top}U^{\top}U\z
\]
The following lemma relates $\Er_s(\z, \epsilon)$ with $\Er_s(\z)$.
\begin{lemma} \label{lemma:discrete-bound}
For $\epsilon \in (0, 1/\sqrt{2})$, we have
\[
\Er_s(\z) \leq \frac{\Er_s(\z, \epsilon)}{1 - \sqrt{2}\epsilon}
\]
\end{lemma}
Based on the conclusion from Lemma~\ref{lemma:discrete-bound}, it is sufficient to bound $\Er_s(\z, \epsilon)$. The following  lemma follows from the JL lemma for a sub-gassian matrix.
\begin{lemma}\label{lem:a7}
For fixed $\w$ and $\z$ such that $\|\w\|_2\leq 1$, with a probability $1-e^{-\tau}$, we have
\[
   \w^{\top}U^{\top}U\z - \w^{\top}\z \leq c\sqrt{\frac{\tau}{n}}\|\z\|_2
\]
where $c$ is some universal constant. 
\end{lemma} 
{\bf Continued Proof of Theorem~\ref{thm:a1}.}
By taking the union bound for $\w\in\K_{d,s}(\epsilon)$, we have, with a probability $1 - e^{-\tau}$,
\[
    \max\limits_{\w \in \K_{d,s}(\epsilon)} \left|\w^{\top}U^{\top}U\z - \w^{\top}\z\right| \leq c\sqrt{\frac{\tau + s\log (9d/[\epsilon s])}{n}}|\z|_2
\]
if $\epsilon \in (0, 1)$, and therefore
\[
\Er_s(\z, \epsilon) \leq c\sqrt{\frac{\tau + s\log (9d/[\epsilon s])}{n}}\|\z\|_2 + \|\z^s
\|_2
\]
 We complete the proof by using Lemma~\ref{lemma:discrete-bound} with $\epsilon = \frac{1}{2\sqrt{2}}$ and assuming $d$ is sufficiently large.
\end{proof}

\subsection{Bound $\left\|((U^{\top}U - I)\z)^s\right\|_2$ for any sparse $\z$}
\begin{thm} \label{thm:a2}
With a probability $1 - e^{-\tau}$, for any $\z$ with $\|\z\|_0 \leq s$, we have
\[
\left\|((U^{\top}U - I)\z)^s\right\|_2 \leq c\sqrt{\frac{\tau + s\log[d/s]}{n}}\|\z\|_2
\]
where $c$ is some universal constant. 
\end{thm}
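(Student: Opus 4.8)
The plan is to bound the left-hand side as a supremum of a bilinear form and control it uniformly by an $\epsilon$-net argument, exactly parallel to the proof of Theorem~\ref{thm:a1}. The starting point is the identity $\|(\v)^s\|_2=\max_{\w\in\K_{d,s}}\w^{\top}\v$, valid for every $\v\in\R^d$ (the maximizing $\w$ is $\v^s/\|\v^s\|_2$, i.e. it selects the $s$ largest coordinates of $\v$ in magnitude and matches their signs). By homogeneity in $\z$ it therefore suffices to prove the uniform bound
\[
B:=\sup_{\w\in\K_{d,s},\ \z\in\K_{d,s}}\w^{\top}(U^{\top}U-I)\z\ \le\ c\sqrt{\frac{\tau+s\log[d/s]}{n}},
\]
and then multiply through by $\|\z\|_2$. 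Note that, since $\K_{d,s}$ is symmetric, $B$ also equals $\sup|\w^{\top}(U^{\top}U-I)\z|$ and is within a factor $2$ of the restricted isometry constant $\delta_{2s}$ of $U$, so this is the classical assertion that a sub-gaussian matrix satisfies RIP at sparsity $2s$.

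First I would discretize both arguments support by support. For each of the $\binom{d}{s}$ index sets $S$ of size $s$, take a $(1/4)$-net of the Euclidean unit ball of the coordinate subspace $\R^{S}$, of cardinality at most $9^{s}$; the union over all $S$ is a net $\N_0\subseteq\K_{d,s}$ with $\log|\N_0|\le c_1 s\log(d/s)$ (using $\binom{d}{s}\le(ed/s)^s$), and the nearest-point map into $\N_0$ can be taken support-preserving. For a single fixed pair $(\w_0,\z_0)$, Lemma~\ref{lem:a7} (applied with $\w=\w_0,\z=\z_0$, and once more with $-\z_0$ to obtain a two-sided estimate) gives $|\w_0^{\top}(U^{\top}U-I)\z_0|\le c\sqrt{\tau'/n}$ with probability $1-2e^{-\tau'}$, valid in the regime $\tau'=O(n)$. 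Taking a union bound over all pairs $(\w_0,\z_0)\in\N_0\times\N_0$ and setting $\tau'=\tau+c_2 s\log(d/s)$ makes the total failure probability at most $e^{-\tau}$, so that with probability $1-e^{-\tau}$,
\[
\max_{\w_0,\z_0\in\N_0}\bigl|\w_0^{\top}(U^{\top}U-I)\z_0\bigr|\ \le\ c\sqrt{\frac{\tau+s\log[d/s]}{n}}.
\]

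Next I would pass from the net back to all of $\K_{d,s}\times\K_{d,s}$ by the standard one-step approximation. Given $\w,\z\in\K_{d,s}$, pick $\w_0,\z_0\in\N_0$ with the same supports as $\w,\z$ and $\|\w-\w_0\|_2,\|\z-\z_0\|_2\le 1/4$; with $M:=U^{\top}U-I$,
\[
\w^{\top}M\z=\w_0^{\top}M\z_0+(\w-\w_0)^{\top}M\z+\w_0^{\top}M(\z-\z_0).
\]
Because $\w-\w_0$ and $\z-\z_0$ are $s$-sparse with norm at most $1/4$ while $\z,\w_0\in\K_{d,s}$, each of the last two terms is bounded in absolute value by $B/4$; hence $B\le\max_{\N_0\times\N_0}|\w_0^{\top}M\z_0|+B/2$, i.e. $B\le 2\max_{\N_0\times\N_0}|\w_0^{\top}M\z_0|$, and combining with the previous display gives the stated bound.

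The proof introduces no genuinely new idea beyond Theorem~\ref{thm:a1}; the work is in the bookkeeping, and there are two points to watch. First, one must choose the net cardinality so that the inflation $\tau\mapsto\tau'=\tau+\Theta(s\log(d/s))$ exactly swallows the union bound while still keeping $\tau'=O(n)$ — which is automatic in the only regime where the conclusion is meaningful, since the bound is nontrivial only when $n\ge c\,s\log(d/s)$. Second, $\w_0^{\top}(U^{\top}U-I)\z_0=\langle U\w_0,U\z_0\rangle-\langle\w_0,\z_0\rangle$ is a sub-\emph{exponential} quantity, so Lemma~\ref{lem:a7} yields the rate $\sqrt{\tau'/n}$ only up to an additive sub-exponential correction of order $\tau'/n$; this term is lower order in the above regime and is absorbed into the constant $c$.
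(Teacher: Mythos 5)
Your proof is correct and follows essentially the same route as the paper: reduce to the bilinear supremum $\sup_{\w,\z\in\K_{d,s}}\w^{\top}(U^{\top}U-I)\z$, discretize with an $\epsilon$-net of log-cardinality $O(s\log(d/s))$, apply Lemma~\ref{lem:a7} pointwise with a union bound, and transfer back from the net (the paper nets $\w$ and $\z$ in two sequential steps via Lemma~\ref{lemma:discrete-bound} and the quantity $\mu_s$, whereas you net both arguments simultaneously, but this is only a bookkeeping difference). Your explicit attention to the two-sidedness of Lemma~\ref{lem:a7} and to the sub-exponential regime $\tau'=O(n)$ is a point the paper glosses over, and you resolve it correctly.
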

\begin{proof}[Proof of Theorem~\ref{thm:a2}]
We define $\Sigma_s(\z)$ as
\[
    \Sigma_s(\z) = \max\limits_{\w \in \K_{d,s}} \w^{\top}(U^{\top}U - I)\z
\]
It is easy to see $\left\|((U^{\top}U - I)\z)^s\right\|_2 = \Sigma_s(\z)$. 
Following the analysis of Theorem~\ref{thm:a1}, it is easy to verify that, with a probability $1 - e^{-\tau}$, for a fixed $\z$, we have
\[
\Sigma_s(\z) \leq c\sqrt{\frac{\tau+ s\log [d/s]}{n}}\|\z\|_2
\]
for some universal constant $c$. 
To extend this result to any $s$-sparse $\z$, we define
\[
\mu_s = \max\limits_{\z \in \K_{d,s}} \Sigma_s(\z)
\]
Evidently, for any $\z$ with $\|\z\|_0 \leq s$, we have
\[
\Sigma_s(\z) \leq \mu_s \|\z\|_2
\]
Using the same idea as Theorem~\ref{thm:a1}, we define a discrete version of $\mu_s$ as
\[
\mu_s(\epsilon) = \max\limits_{\z \in \K_{d,s}(\epsilon)} \Sigma_s(\z)
\]
and following the same argument as Lemma~\ref{lemma:discrete-bound}, we have
\[
\mu_s \leq \frac{\mu_s(\epsilon)}{1 - \sqrt{2}\epsilon}
\]
Since for any fixed $\z \in \K_{d,s}$, with a probability $1 - e^{-\tau}$, we have
\[
\Sigma_s(\z) \leq c\sqrt{\frac{\tau+ s\log[d/s]}{n}}
\]
By taking the union bound and using the relationship between $\mu_s$ and $\mu_s(\epsilon)$, with a probability $1 - e^{-\tau}$, we have
\[
\mu_s \leq c\sqrt{\frac{\tau + s\log[d/s]}{n}}
\]
We complete the proof by using $\Sigma_s(\z) \leq \mu_s\|\z\|_2$.
\end{proof}
\begin{proof}[Proof of Lemma~\ref{lem:fund2}]
Combining the above results, we can complete the proof of Lemma~\ref{lem:fund2}. In particular, we apply Theorem~\ref{thm:a1} to bound $ \|\w_a^s\|_2 = \|(U^{\top}U(\x_*^s - \x_*))^s\|_2$ in~(\ref{eqn:split}), and apply Theorem~\ref{thm:a2} to bound $\|\w_b^s\|_2 = \|(U^{\top}U -I)(\x_t - \x_*^s)\|_2$ for any $2s$-sparse $\x_t - \x_*^s$. 
\end{proof}

\section{Other Lemmas and Proofs}

\subsection{Proof of Lemma~\ref{lemma:discrete-bound}}
The analysis is the same as that for Lemma 9.2 of~\citep{koltchinskii2011oracle}, we include it for completeness. For any $\x, \x' \in \K_{d,s}$, we can always find two vectors $\y$, $\y'$ such that
\[
\x-\x'=\y-\y', \ \|\y\|_0 \leq s, \ \|\y'\|_0 \leq s, \ \y^\top \y'=0.
\]
Thus
\[
\begin{split}
& \langle \x-\x', UU^{\top}\z \rangle  = \langle \y, UU^{\top}\z \rangle + \langle -\y', UU^{\top}\z \rangle \\
= & \|\y\|_2 \left \langle \frac{\y}{\|\y\|_2}, UU^{\top}\z \right \rangle + \|\y'\|_2 \left \langle \frac{-\y'}{\|\y'\|_2}, UU^{\top}\z \right \rangle \\
\leq & (\|\y\|_2+ \|\y'\|_2) \Er_s(\z)  \leq \Er_s(\z)  \sqrt{2} \sqrt{\|\y\|_2^2+ \|\y'\|_2^2}\\
=& \Er_s(\z)  \sqrt{2} \|\y-\y'\|_2= \Er_s(\z)  \sqrt{2} \|\x-\x'\|_2.
\end{split}
\]

Then, we have
\[
\begin{split}
& \Er_s(\z) = \max\limits_{\w \in \K_{d,s}} \w^{\top}UU^{\top}\z\\
\leq & \Er_s(\z, \epsilon) + \sup_{\x \in \K_{d,s}, \x' \in \K_{d,s}(\epsilon), \|\x-\x'\|_2 \leq \epsilon} \langle \x-\x', UU^{\top}\z \rangle \\
\leq & \Er_s(\z, \epsilon) +  \sqrt{2} \epsilon \Er_s(\z)
\end{split}
\]
which implies
\[
\Er_s(\z) \leq \frac{\Er_s(\z, \epsilon)}{1-\sqrt{2} \epsilon}.
\]

\subsection{Proof of Lemma~\ref{lem:a7}}
Let us first assume $\|\z\|_2=1$, otherwise
\[
   \w^{\top}U^{\top}U\z - \w^{\top}\z \leq (\w^{\top}U^{\top}U\z' - \w^{\top}\z')\|\z\|_2
\]
where $\z' = \z/\|\z\|_2$. Following JL lemma for a sub-gaussian matrix~\citep{Neilson}, we know that with a probability $1- \exp(-c\epsilon^2 n)$, where $c$ is some constant, 
\[
(1-\epsilon)\|\z\|_2^2\leq \|U\z\|^2_2 \leq \left(1 + \epsilon\right)\|\z\|_2^2
\]
Therefore, 
\begin{align*}
 \w^{\top}U^{\top}U\z - \w^{\top}\z &= \frac{\|U(\w+\z)\|_2^2 - \|U(\w-\z)\|_2^2}{4} - \w^{\top}\z\\
 &\leq \frac{\epsilon}{2}(\|\w\|_2^2 + \|\z\|_2^2)\leq \epsilon
\end{align*}
Therefore with a probability $1 - e^{-\tau}$, we have
\[
 \w^{\top}U^{\top}U\z - \w^{\top}\z\leq c\sqrt{\frac{\tau}{n}}\|\z\|_2
\]

\section{Top-$s$ recovery error}
\begin{prop}\label{lem:2sparse}
Let $\y\in\R^{2s}$ be an arbitrary $s$-sparse vector. Then we have
\[
\|\x^s - \y\|_2\leq \sqrt{3}\|\x - \y\|_2, \quad\forall \x\in\R^{2s}
\]
\end{prop}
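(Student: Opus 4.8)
The plan is to reduce the claim to a single truncation identity together with one counting argument. If $\|\x\|_0\le s$ then $\x^s=\x$ and the inequality is trivial, so I would assume $\|\x\|_0>s$, let $\S$ be the support of $\x^s$ (the $s$ largest-magnitude coordinates of $\x$), so $|\S|=s$, and set $\B=\{1,\ldots,2s\}\setminus\S$, which also has $|\B|=s$ and carries all of $\x-\x^s$. Let $\mathcal T$ denote the support of $\y$, so $|\mathcal T|\le s$.

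First I would write $\x^s-\y=(\x-\y)+(\x^s-\x)$ and expand the square, using that $\x^s-\x=-[\x]_\B$ is supported on $\B$: this gives $\|\x^s-\y\|_2^2=\|\x-\y\|_2^2-2\langle[\x]_\B,[\x-\y]_\B\rangle+\|[\x]_\B\|_2^2$. Completing the square in the $\B$-block rewrites the right-hand side as $\|\x-\y\|_2^2+\|[\y]_\B\|_2^2-\|[\x]_\B-[\y]_\B\|_2^2\le\|\x-\y\|_2^2+\|[\y]_\B\|_2^2$. Since $\y$ vanishes off $\mathcal T$, $\|[\y]_\B\|_2^2=\|[\y]_{\B\cap\mathcal T}\|_2^2$, so it remains to show $\|[\y]_{\B\cap\mathcal T}\|_2^2\le 2\|\x-\y\|_2^2$; combined with the previous line this yields $\|\x^s-\y\|_2^2\le 3\|\x-\y\|_2^2$.

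The key step — and the only place the hypotheses $\x,\y\in\R^{2s}$ and $\|\y\|_0\le s$ are genuinely used — is a counting/rearrangement bound. Since $|\B\cap\mathcal T|=|\mathcal T|-|\S\cap\mathcal T|\le s-|\S\cap\mathcal T|=|\S\setminus\mathcal T|$, there is an injection $\sigma\colon\B\cap\mathcal T\hookrightarrow\S\setminus\mathcal T$, and because every coordinate in $\S$ has magnitude at least that of every coordinate in $\B$, matching $i$ with $\sigma(i)$ gives $\|[\x]_{\S\setminus\mathcal T}\|_2^2\ge\|[\x]_{\sigma(\B\cap\mathcal T)}\|_2^2\ge\|[\x]_{\B\cap\mathcal T}\|_2^2$. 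Using that $\y=0$ on $\S\setminus\mathcal T$, I would then bound $\|\x-\y\|_2^2\ge\|[\x-\y]_{\B\cap\mathcal T}\|_2^2+\|[\x]_{\S\setminus\mathcal T}\|_2^2\ge\|[\x-\y]_{\B\cap\mathcal T}\|_2^2+\|[\x]_{\B\cap\mathcal T}\|_2^2$, and finish with the elementary bound $\|[\y]_{\B\cap\mathcal T}\|_2^2\le 2\|[\y-\x]_{\B\cap\mathcal T}\|_2^2+2\|[\x]_{\B\cap\mathcal T}\|_2^2\le 2\|\x-\y\|_2^2$.

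I expect the main obstacle to be purely the bookkeeping of the index sets $\S,\B,\mathcal T$ — specifically the inequality $|\B\cap\mathcal T|\le|\S\setminus\mathcal T|$ and the resulting magnitude-preserving injection, which is exactly what improves the trivial triangle-inequality constant $2$ and underlies the factor $\sqrt 3$. The remaining ingredients (completing the square, splitting $\|\x-\y\|_2^2$ over disjoint coordinate blocks, and the inequality $\|u+v\|_2^2\le 2\|u\|_2^2+2\|v\|_2^2$) are routine.
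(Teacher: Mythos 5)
Your proposal is correct and follows essentially the same route as the paper's proof: the identity $\|\x^s-\y\|_2^2-\|\x-\y\|_2^2\le\|[\y]_{\B\cap\mathcal T}\|_2^2$ is just a repackaging (via completing the square) of the paper's four-block expansion of the two squared norms, and your key counting step $|\B\cap\mathcal T|\le|\S\setminus\mathcal T|$ with the magnitude-preserving injection is exactly the paper's inequality $\sum_{i\in\B\cap\Y}x_i^2\le\sum_{i\in\A\setminus\Y}x_i^2$, finished by the same $a^2\le 2b^2+2c^2$ bound. No gaps.
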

\begin{proof}
Let $\X$ and $\Y$ be the support set of $\x$ and $\y$, respectively. If $|\X| \leq s$, we have
\[
\|\x^{s} - \y\|_2 = \|\x-\y\|_2.
\]
Thus, in the following, we only need to consider the case $|\X| > s$. Let $\A$ be the indices of the $s$ largest elements in $\x$, and $\B=\X\setminus \A$. Then, we have
\[
\begin{split}
\|\x-\y\|_2^2= & \sum_{i \in \A \setminus \Y} x_i^2 +\sum_{i \in \A\cap\Y} (x_i-y_i)^2  +\sum_{i \in \B \cap\Y} (x_i-y_i)^2  +\sum_{i \in \B \setminus \Y} x_i^2, \\
\|\x^{s} - \y\|_2^2 = & \sum_{i \in \A \setminus \Y} x_i^2 +\sum_{i \in \A\cap\Y} (x_i-y_i)^2  + \sum_{i \in \B \cap\Y} y_i^2.
\end{split}
\]
Since
\[
|\A \setminus \Y| +|\A \cap \Y| = |\A| = s \geq |\Y|=|\A \cap \Y| + |\B \cap\Y|
\]
we have $|\A \setminus \Y| \geq |\B \cap\Y|$. As a result, we must have
\begin{equation} \label{eqn:lem:1}
\sum_{i \in \B \cap \Y} x_i^2  \leq \sum_{i \in \A \setminus \Y} x_i^2.
\end{equation}

Since
\[
\begin{split}
 \sum_{i \in \B \cap\Y} y_i^2 \leq 2 \sum_{i \in \B \cap\Y} (x_i-y_i)^2 +  2 \sum_{i \in \B \cap\Y} x_i^2 \overset{\text{(\ref{eqn:lem:1})}}{\leq} & 2 \sum_{i \in \B \cap\Y} (x_i-y_i)^2 +  2 \sum_{i \in \A \setminus \Y} x_i^2,
\end{split}
\]
we have
\[
\begin{split}
 \|\x^{s} - \y\|_2^2 \leq   3 \sum_{i \in \A \setminus \Y} x_i^2 +\sum_{i \in \A\cap\Y} (x_i-y_i)^2  +  2 \sum_{i \in \B \cap\Y} (x_i-y_i)^2 \leq  3\|\x-\y\|_2^2.
\end{split}
\]
\end{proof}

\section{Upper bound of $\|U^{\top}\e\|_\infty$}
\begin{prop}\label{lem:l2linf}
Let $U\in\R^{n\times d}$ be a  random matrix with subGaussian entries of mean $0$ and variance $1/n$. Then with a probability $1-2e^{-\tau}$, we have
\begin{eqnarray}
 \|U^{\top}\e\|_{\infty} \leq \theta\|\e\|_2 \sqrt{\frac{\tau + \log d}{n}} \label{eqn:bound-err-c}
\end{eqnarray}
where $\theta>0$ is a constant. 
\end{prop}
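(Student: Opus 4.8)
The plan is to bound $\|U^{\top}\e\|_\infty = \max_{1\le j\le d}\bigl|[U^{\top}\e]_j\bigr|$ by a coordinate-wise sub-Gaussian tail bound followed by a union bound over the $d$ coordinates, treating $\e$ as a fixed (deterministic) vector so that all the randomness sits in $U$.

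First I would fix a coordinate $j\in\{1,\ldots,d\}$ and write $[U^{\top}\e]_j = \sum_{i=1}^n U_{ij}\,[\e]_i$. Since the entries $U_{ij}$ are independent, mean-zero, and sub-Gaussian with variance $1/n$ (hence sub-Gaussian norm $O(1/\sqrt{n})$), the quantity $[U^{\top}\e]_j$ is a linear combination of independent centered sub-Gaussian variables with the fixed coefficients $[\e]_i$. By the general Hoeffding inequality for sub-Gaussian random variables, such a sum is again sub-Gaussian, with variance proxy at most $c_0\|\e\|_2^2/n$ for a universal constant $c_0$ depending only on the sub-Gaussian norm of the entries; the key point is that the relevant norm of the coefficient vector is the $\ell_2$ norm $\|\e\|_2$. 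This yields
\[
\Pr\!\left(\bigl|[U^{\top}\e]_j\bigr| \ge t\right) \le 2\exp\!\left(-\frac{n t^2}{c_0\|\e\|_2^2}\right), \qquad t>0.
\]

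Next I would take a union bound over $j = 1,\ldots,d$, giving
\[
\Pr\!\left(\|U^{\top}\e\|_\infty \ge t\right) \le 2d\exp\!\left(-\frac{n t^2}{c_0\|\e\|_2^2}\right).
\]
Choosing $t = \|\e\|_2\sqrt{c_0(\tau + \log d)/n}$ makes the right-hand side exactly $2e^{-\tau}$, so that with probability at least $1-2e^{-\tau}$ we have $\|U^{\top}\e\|_\infty \le \theta\,\|\e\|_2\sqrt{(\tau+\log d)/n}$ with $\theta = \sqrt{c_0}$, which is the claim.

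The argument is essentially routine and I do not expect a genuine obstacle. The one place that needs care is the first step: verifying that a weighted sum $\sum_i a_i X_i$ of independent centered sub-Gaussian variables with $\mathrm{Var}(X_i)=1/n$ is sub-Gaussian with variance proxy $\asymp \|a\|_2^2/n$ (and not, say, a bound in terms of $\|a\|_\infty$ or $\|a\|_1$), which is precisely the content of the sub-Gaussian Hoeffding-type inequality; everything after that is a standard union bound and an algebraic choice of the threshold $t$.
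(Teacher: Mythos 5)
Your proposal is correct and follows essentially the same route as the paper: bound each coordinate $\u_i^{\top}\e=\sum_j U_{ij}[\e]_j$ as a sub-Gaussian variable with parameter of order $\|\e\|_2/\sqrt{n}$ (the paper phrases this via the Orlicz $\psi_2$ norm, you via a sub-Gaussian Hoeffding variance proxy), then take a union bound over the $d$ coordinates, which is exactly where the $\log d$ enters. If anything, your write-up is slightly more careful than the paper's, since you make the replacement of $\tau$ by $\tau+\log d$ in the union bound explicit rather than leaving it implicit.
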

\begin{proof} 
Let $\u_i$ denote the $i$-th column vector of $U$. Since $[\u_i]_j, j=1,\ldots, n$ are independent $(1/\sqrt{n})$ sub-gaussian variables, therefore $\u_i^T\e$ is $(\|\e\|_2/\sqrt{n})$ sub-gaussian variable.  According to the property of a sub-gaussian vector, there exists $\theta>0$,  we have
\[
\|\u^{\top}_i\e\|_{\psi_2} \leq \theta \frac{\|\e\|_2}{\sqrt{n}}, i=1, \ldots, d
\]
where  $\|\cdot\|_{\psi_2}$ is called an Orlicz norm~\citep{rao91}. Using the following property of Orliz norm~\citep{koltchinskii2011oracle}, with a probability $1 - 2e^{-\tau}$, we have
\[
    |\u^{\top}_i\e| \leq \|\u^{\top}_i\e\|_{\psi_2}\sqrt{\tau} = \theta\|\e\|_2 \sqrt{\frac{\tau}{n}}
\]
Taking the union bound, we can complete the proof. 
\end{proof}

\bibliography{cs-opt}




\vskip 0.2in

\end{document}